\theoremstyle{definition} 
    \newtheorem{definition}{Definition}
\theoremstyle{plain} 
    \newtheorem{theorem}[definition]{Theorem}
    \newtheorem{proposition}[definition]{Proposition}
    \newtheorem{lemma}[definition]{Lemma}
    \newtheorem{corollary}[definition]{Corollary}
\theoremstyle{remark} 
    \newtheorem{remark}[definition]{Remark}
     \newcommand{\tc}{\widetilde{c}}
     \newcommand{\tom}{\widetilde{\omega}}
     \newcommand{\tv}{\widetilde{v}}
     \newcommand{\te}{\widetilde{e}}
     \newcommand{\txi}[1]{\widetilde{\xi}^{#1}}
    \newcommand{\whom}{\widehat{\omega}}
\newcommand{\be}{\mathbf{e}}
\newcommand{\bom}{\boldsymbol{\omega}}
\newcommand{\bxi}{\boldsymbol{\xi}}
\newcommand{\bc}{\mathbf{c}}
\newcommand{\calO}{{\mathcal O}}
        \newcommand{\zzlabel}[1]{\ifmeasuring@\else\ltx@label{#1}\fi} 
    \newcounter{terms}[section] 
    \newcommand{\unl}[2]{\underline{#1}_{\refstepcounter{terms} \zzlabel{#2} \theterms}} 
    \newcommand{\reft}[2]{(\ref{#1}.\ref{#2})} 
\title{BV Pushforward of Palatini--Cartan gravity}
\author{Giovanni Canepa, Alberto S. Cattaneo
\thanks{ASC acknowledges partial support of the SNF Grant No.\ 200020\_192080 and of the Simons Collaboration on Global Categorical Symmetries. This research was (partly) supported by the NCCR SwissMAP, funded by the Swiss National Science Foundation, and is based upon work from COST Action 21109 CaLISTA, supported by COST (European Cooperation in Science and Technology) (\href{www.cost.eu}{www.cost.eu}), MSCA-2021-SE-01-101086123 CaLIGOLA, and MSCA-DN CaLiForNIA - 101119552. GC acknowledges partial support of the SNF Grants No.\ P5R5PT\_222221 and No.\ P500PT\_203085. This material is partially based upon work supported by the Swedish Research Council under grant no. 2021-06594 while the authors were in residence at Institut Mittag-Leffler in Djursholm, Sweden during the first semester of 2025.}}
\begin{document}

\maketitle

\begin{abstract}
    The goal of this note is to show that the standard BV formulation of gravity in the Palatini--Cartan formalism is equivalent, using a BV pushforward, to an
    AKSZ-like version compatible with BFV data on the boundary.
\end{abstract}

\section{Introduction}
Gravity in the Palatini--Cartan (PC) formalism\footnote{that is, working with Cartan's coframes instead of metric tensors and letting the connection be an independent field, \`a la Palatini, to be determined by the equations of motion \cite{Cartan, Palatini1919}} has the advantage of using differential forms, which makes the boundary (and lower-dimensional corners) analysis more transparent. On the flip side, in dimension $4$ and higher, its BV formulation \cite{BV1} is not directly compatible with the boundary BFV formulation \cite{BV3}. As such, PC gravity cannot be directly presented as an extended field theory, which would be the first, semiclassical, step towards a quantization satisfying Segal's axioms \cite{Segal1988}.

 A possible way out, on a ``globally hyperbolic'' space--time $M=\Sigma\times I$, $\Sigma$ being a Cauchy surface, was studied in \cite{CCS2020b} (see also \cite{PhysRevD.58.124029,Alexandrov_2000,BarrosESa2001,Date:2008rb,Rezende:2009sv,Bodendorfer_2013, CS2019, CCS2020,  MERC:2019, MRC:2019}), where the AKSZ construction \cite{AKSZ} was applied to the BFV description of the boundary's reduced phase space. The resulting theory automatically has a compatible BV-BFV structure \cite{CMR2012} and is BV embedded in the BV description of PC gravity. The goal of this paper is to show that this embedding is actually a BV equivalence, via BV pushforward.

More explicitly, we consider PC theory in $N$ space--time dimensions, $N\ge4$. We pick once and for all a vector bundle $\mathcal V$ over space--time $M$ isomorphic to its tangent bundle and carrying a Minkowski (or Euclidean) inner product.
The action reads
    \[
        S[\be,\bom]=\int_M \frac{1}{(N-2)!}	\be^{N-2} F_{\bom}, 
    \]
where $\be\colon TM\xrightarrow\sim{\mathcal V}$ is a coframe and $\bom$ is an orthogonal connection for $\mathcal V$. The BV action for this theory was studied in \cite{P00,CS2017} and is recalled in Section~\ref{s:sBV}. 

If $M$ has a boundary $\Sigma$, we require $\be$ to induce a nondegenerate metric on it (we might also more strictly require the metric induced by $\be$ to be positive-definite, which would make $\Sigma$ space-like). Under this assumption, it was shown in \cite{CS2019} that the reduced phase space can be obtained as the reduction of a coisotropic submanifold (i.e., defined by first-class constraints) of a symplecting space of boundary fields consisting $\mathcal V$-coframes and connections on $\Sigma$ satisfying an appropriate condition (called the structural constraint). As such, the reduced phase space can be cohomologically resolved in terms of the BFV formalism. This is explicitly done in \cite{CCS2020}.

The problem is that the bulk BV theory does not communicate well with the boundary BV theory, due to some irregularity in a quotient appearing in the course of the construction \cite{CS2017}. 

In this paper, also extending the classical construction of \cite{CC2025}, we show how to remedy this when working on a space--time $M$ of the form $\Sigma\times I$, where $I$ is an interval or a circle, and $\be$ is required to induce a nondegenerate metric on each slice $\Sigma\times\{t\}$, $t\in I$. (Under the stricter condition that each of these metrics is positive-definite, this makes $M$ into a globally hyperbolic manifold for each allowed coframe field $\be$.) We can now split the BV fields of the theory into their $\Sigma$ and $I$ components. It turns out that part of the $\Sigma$  component of the connection is redundant, as already observed at the classical level in \cite{CC2025}. Here we show that upon gauge fixing it can be integrated away. This is a partial BV integration, known as a BV pushforward, which in the present case leads to an equivalence between the original theory and the one resulting after pushforward. The latter is then shown to be that described in \cite{CCS2020b} as a result of the AKSZ construction for the BFV structure associated to $\Sigma$. In conclusion, we prove that PC gravity, on $\Sigma\times I$, is equivalent to a theory with a well-defined BV-BFV structure, in the sense specified in Section \ref{s:observables}. In particular, this means that the expectation values of corresponding observables in the two theories are the same, hence carrying the same physical information.

\subsection*{Summary of the paper}
In Section~\ref{sGBVpf}, we recall the BV pushforward and its properties. 
In Section~\ref{s:twoBV}, we recall the BV formalism for PC theory and for the AKSZ construction starting from BFV on a boundary. In Sections~\ref{s:symplectomorphism-BV-to-Hedgehog} and~\ref{s:hedge}, we compute the BV pushforward showing the equivalence between the two theories, ignoring boundary effects. This is legitimate if $I$ is a circle or if $I$ is an open interval and we let the fields decay to zero towards its boundary. If $I$ is a compact interval, which is the interesting case to relate bulk to boundary, we need a more refined analysis, taking care of boundary conditions. This is explained in Section~\ref{s:boundary}.

\section{Generalities on the BV pushforward}\label{sGBVpf}
A short introduction to the BV-pushforward can be found in \cite{CR2005, CM2009,M2008}, in \cite{Mnev2017}[Section 4.7] and in \cite{CMR2}[Section 2.2.2].

Let $(\mathcal{F},\varpi)$ be a graded symplectic manifold with symplectic form of degree $-1$. Let us suppose that there exist two graded symplectic manifolds with symplectic forms of degree $-1$, $(\mathcal{F}_1,\varpi_1)$ and $(\mathcal{F}_2,\varpi_2)$ such that \begin{align*}
    \mathcal{F}&= \mathcal{F}_1 \times \mathcal{F}_2 & \varpi &= \varpi_1 + \varpi_2.
\end{align*}

\begin{definition}\label{d:BV-pushforward}
    Given a Lagrangian submanifold $\mathcal{L}\subset \mathcal{F}_2$, the BV-pushforward along $\mathcal{L}$ is a partial\footnote{Its definition domain consists of half-densities integrable on $\mathcal L$.} map between half-densities,
\begin{align*}
    \mathcal{P}_{\mathcal{L}} : \mathrm{Dens}^{\frac{1}{2}}\mathcal{F} \rightarrow \mathrm{Dens}^{\frac{1}{2}}\mathcal{F}_1
\end{align*}
  defined as an $\mathbb{R}$-linear map sending $\phi = \phi_1 \otimes \phi_2 \in \mathrm{Dens}^{\frac{1}{2}}\mathcal{F}_1\otimes \mathrm{Dens}^{\frac{1}{2}}\mathcal{F}_2 \subset \mathrm{Dens}^{\frac{1}{2}}\mathcal{F}$ to 
\begin{align*}
    \mathcal{P}_{\mathcal{L}} \phi : =  \phi_1 \cdot \int_{\mathcal{L}} \phi_2
\end{align*}
where the integral is the  BV-integral over $\mathcal{L}$.
\end{definition}

The importance of this map is related to the fact that it relates solutions of the Quantum Master Equation (QME) on $\mathcal{F}$ to solutions of the QME on $\mathcal{F}_1$. Let us now briefly recap the results that lead to this statement.
The first is an adaptation of the standard BV theorem.
\begin{theorem}[\cite{M2008}] \label{t:BV-pushforward_properties}
Let $\Delta, \Delta_1$ and $\Delta_2$ be the canonical BV laplacians on respectively $\mathcal{F}, \mathcal{F}_1$ and $\mathcal{F}_2$.
    \begin{enumerate}
        \item Given any Lagrangian submanifold $\mathcal{L}\subset \mathcal{F}_2$, $\mathcal{P}_{\mathcal{L}}$ is a chain map, i.e.
        \begin{align*}
            \Delta_1 \mathcal{P}_{\mathcal{L}} = \mathcal{P}_{\mathcal{L}} \Delta.
        \end{align*}
        \item If $\mathcal{L}\sim \mathcal{L}'$ are two homologous Lagrangians in $\mathcal{F}_2$ and $\phi \in \mathrm{Dens}^{\frac{1}{2}}\mathcal{F}$ is such that $\Delta\phi=0$, then the difference 
        \begin{align*}
            \mathcal{P}_{\mathcal{L}}\phi - \mathcal{P}_{\mathcal{L}'}\phi
        \end{align*}
        is $\Delta_1$-exact.
    \end{enumerate}
\end{theorem}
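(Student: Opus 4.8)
The plan is to reduce both statements to two standard facts about the BV Laplacian, and then do a short linear-algebra manipulation. The first fact is that, since $\varpi=\varpi_1+\varpi_2$, one may pick Darboux coordinates on $\mathcal{F}$ that are the disjoint union of Darboux coordinates on $\mathcal{F}_1$ and on $\mathcal{F}_2$; feeding this into the local (Khudaverdian) formula for $\Delta$ on half-densities gives, on the dense subspace $\mathrm{Dens}^{\frac12}\mathcal{F}_1\otimes\mathrm{Dens}^{\frac12}\mathcal{F}_2$,
\[
    \Delta \;=\; \Delta_1\otimes\mathrm{id}\;+\;\mathrm{id}\otimes\Delta_2,
\]
with the Koszul sign $(\mathrm{id}\otimes\Delta_2)(\phi_1\otimes\phi_2)=(-1)^{|\phi_1|}\phi_1\otimes\Delta_2\phi_2$ since $\Delta_2$ is odd. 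The second fact is the BV--Stokes theorem of Schwarz: for any Lagrangian $\mathcal{L}\subset\mathcal{F}_2$ and any $\psi\in\mathrm{Dens}^{\frac12}\mathcal{F}_2$ integrable on $\mathcal{L}$ one has $\int_{\mathcal{L}}\Delta_2\psi=0$, because the restriction of $\Delta_2\psi$ to $\mathcal{L}$ is an exact density. Throughout I would suppress the signs coming from the parity of the scalars $\int_{\mathcal{L}}\phi_2$, checking separately that they are harmless.

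\emph{Part (1).} By $\mathbb{R}$-linearity it is enough to test on $\phi=\phi_1\otimes\phi_2$. Using the decomposition of $\Delta$,
\[
    \mathcal{P}_{\mathcal{L}}\Delta\phi \;=\; (\Delta_1\phi_1)\cdot\!\int_{\mathcal{L}}\phi_2 \;\pm\; \phi_1\cdot\!\int_{\mathcal{L}}\Delta_2\phi_2 .
\]
The second summand vanishes by BV--Stokes; since $\int_{\mathcal{L}}\phi_2$ is a number and is annihilated by $\Delta_1$, the first summand equals $\Delta_1\bigl(\phi_1\cdot\int_{\mathcal{L}}\phi_2\bigr)=\Delta_1\mathcal{P}_{\mathcal{L}}\phi$. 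Hence $\mathcal{P}_{\mathcal{L}}\Delta=\Delta_1\mathcal{P}_{\mathcal{L}}$.

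\emph{Part (2).} I would write $\phi=\sum_i\phi_1^{(i)}\otimes\phi_2^{(i)}$ with $\Delta\phi=0$, and, since $\mathcal{L}\sim\mathcal{L}'$, pick a smooth family of Lagrangians $\{\mathcal{L}_t\}_{t\in[0,1]}$ with $\mathcal{L}_0=\mathcal{L}$, $\mathcal{L}_1=\mathcal{L}'$, whose infinitesimal deformation is generated by a time-dependent Hamiltonian $h_t$ on $\mathcal{F}_2$. The key input is the variation formula for BV integrals along such a family: for $\psi\in\mathrm{Dens}^{\frac12}\mathcal{F}_2$,
\[
    \frac{d}{dt}\int_{\mathcal{L}_t}\psi \;=\; \int_{\mathcal{L}_t}\Delta_2(\sigma_t\psi)\;+\;\int_{\mathcal{L}_t}\sigma_t(\Delta_2\psi),
\]
where $\sigma_t$ is a first-order ($\mathcal{F}_2$-only, odd) operator built from $h_t$; the first summand restricts to an exact density on $\mathcal{L}_t$. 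Integrating in $t$ and applying BV--Stokes term by term gives $\int_{\mathcal{L}'}\psi-\int_{\mathcal{L}}\psi=\int_0^1 dt\int_{\mathcal{L}_t}\sigma_t(\Delta_2\psi)$. Applying this to each $\phi_2^{(i)}$ and assembling, and then using that $\mathrm{id}\otimes\sigma_t$ acts only on the second factor and (anti)commutes with $\Delta_1\otimes\mathrm{id}$, the relation $\Delta\phi=0$, i.e. $(\mathrm{id}\otimes\Delta_2)\phi=-(\Delta_1\otimes\mathrm{id})\phi$, converts the result into
\[
    \mathcal{P}_{\mathcal{L}}\phi-\mathcal{P}_{\mathcal{L}'}\phi \;=\; \pm\,\Delta_1\!\left(\sum_i\phi_1^{(i)}\cdot\!\int_0^1\!dt\int_{\mathcal{L}_t}\sigma_t\phi_2^{(i)}\right),
\]
which is $\Delta_1$-exact. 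The object in parentheses is the BV pushforward of $\phi$ along the ``cylinder'' swept out by the $\mathcal{L}_t$, so the argument in fact exhibits $\mathcal{P}_{\mathcal{L}}$ and $\mathcal{P}_{\mathcal{L}'}$ as chain homotopic, with part (1) recovered as the special case $\mathcal{L}=\mathcal{L}'$.

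The hard part will be establishing and correctly signing the variation formula for $\frac{d}{dt}\int_{\mathcal{L}_t}\psi$: one must show that the $t$-derivative of the density obtained by restricting $\psi$ to $\mathcal{L}_t$ splits as an exact density on $\mathcal{L}_t$ plus a term linear in the restriction of $\Delta_2\psi$ — this is where the Hamiltonian nature of the deformation and the Khudaverdian definition of $\Delta_2$ on half-densities really enter, and where the a priori spurious term $(\Delta_2 h_t)\psi$ must be absorbed into the exact part. A secondary, purely analytic point is that all the BV integrals must converge well enough, uniformly in $t$, to justify interchanging $\frac{d}{dt}$ with $\int_{\mathcal{L}_t}$ and $\int_0^1 dt$ with $\int_{\mathcal{L}_t}$; in the applications of this paper this holds because the relevant half-densities are Gaussian (or compactly supported).
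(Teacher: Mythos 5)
The paper does not prove this statement---it is quoted from \cite{M2008} (see also \cite{CMR2}[Section 2.2.2])---so there is no in-paper proof to compare against; your argument is essentially the standard one from those references and is correct in its essentials: the splitting $\Delta=\Delta_1\otimes\mathrm{id}+\mathrm{id}\otimes\Delta_2$ plus the BV--Stokes theorem gives (1), and the Hamiltonian-isotopy variation formula combined with $(\mathrm{id}\otimes\Delta_2)\phi=-(\Delta_1\otimes\mathrm{id})\phi$ gives (2). The only minor imprecision is calling $\sigma_t$ a first-order operator: in the cleanest formulation the variation of $\int_{\mathcal{L}_t}\psi$ is $\int_{\mathcal{L}_t}\Delta_2(h_t\psi)\mp\int_{\mathcal{L}_t}h_t\,\Delta_2\psi$, so $\sigma_t$ is just multiplication by the generating Hamiltonian $h_t$, and the $(\Delta_2 h_t)\psi$ term you worry about cancels identically in that identity rather than needing to be absorbed.
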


Let us now assume that the symplectic manifolds $(\mathcal{F},\varpi), (\mathcal{F}_1,\varpi_1)$ and $(\mathcal{F}_2,\varpi_2)$ are equipped with compatible Berezinians $\mu, \mu_1$ and $\mu_2$ and such that $\mu= \mu_1 \mu_2$. Then we have the following corollary.

\begin{corollary}\label{c:BV-pushforward_QME}
    Let $S \in C^{\infty}(\mathcal{F})[[\hbar]]$. Then there exists $S_1\in C^{\infty}(\mathcal{F}_1)[[\hbar]]$ such that
    \begin{align*}
        \mu_1^{\frac{1}{2}} e^{\frac{i}{\hbar} S_1} = \mathcal{P}_{\mathcal{L}} \left( \mu^{\frac{1}{2}} e^{\frac{i}{\hbar} S}\right).
    \end{align*}
     If $S$ satisfies the Quantum Master Equation on $\mathcal{F}$
    \begin{align*}
        \Delta_{\mu}\left(e^{\frac{i}{\hbar} S}\right)=0,
    \end{align*}
    then so does $S_1$ on $\mathcal{F}_1$:
    \begin{align*}
        \Delta_{\mu_1}\left(e^{\frac{i}{\hbar} S_1}\right)=0.
    \end{align*}
    Furthermore, if we have $S_1$ and $S_1'$ for two homologous Lagrangians $\mathcal{L}\sim \mathcal{L}'\subset \mathcal{F}_2$,
    then 
    \begin{align*}
        e^{\frac{i}{\hbar} S_1} -e^{\frac{i}{\hbar} S_1'}
    \end{align*}
    is $\Delta_1$-exact.
\end{corollary}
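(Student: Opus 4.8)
The plan is to deduce all three assertions directly from Theorem~\ref{t:BV-pushforward_properties}, the main work being to translate between functions and half-densities. The dictionary is Khudaverdian's identity relating the canonical (choice-free) BV Laplacian $\Delta$ on half-densities to the Berezinian-dependent operator $\Delta_\mu$ on functions: for $f \in C^\infty(\mathcal{F})[[\hbar]]$ one has $\Delta(\mu^{\frac12}f) = \mu^{\frac12}\,\Delta_\mu f$, and likewise on $\mathcal{F}_1$ and $\mathcal{F}_2$, compatibly with the splitting $\mu^{\frac12} = \mu_1^{\frac12}\mu_2^{\frac12}$. In particular the QME $\Delta_\mu(e^{\frac{i}{\hbar}S}) = 0$ is equivalent to $\Delta(\mu^{\frac12}e^{\frac{i}{\hbar}S}) = 0$, and similarly for $S_1$ on $\mathcal{F}_1$; I would record this equivalence first.

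Next I would establish the existence of $S_1$. The half-density $\mu^{\frac12}e^{\frac{i}{\hbar}S}$ splits, via $\mu^{\frac12} = \mu_1^{\frac12}\mu_2^{\frac12}$ and the expansion of $e^{\frac{i}{\hbar}S}$ in the component fluctuating along $\mathcal{F}_2$, into something whose $\mathcal{F}_2$-part is integrable on $\mathcal{L}$ in the perturbative sense, so $\mathcal{P}_{\mathcal{L}}(\mu^{\frac12}e^{\frac{i}{\hbar}S})$ is a well-defined half-density on $\mathcal{F}_1$. Since $\mu_1^{\frac12}$ is nowhere vanishing, there is a unique $g \in C^\infty(\mathcal{F}_1)[[\hbar]]$ with $\mathcal{P}_{\mathcal{L}}(\mu^{\frac12}e^{\frac{i}{\hbar}S}) = \mu_1^{\frac12}g$, and I would set $S_1 := \frac{\hbar}{i}\log g$. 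The point that has to be checked --- and this is the one genuinely nontrivial step --- is that $g$ has invertible leading $\hbar$-order and that $\log g$ is again a formal power series in $\hbar$ with smooth coefficients; this is exactly the exponentiation of connected Feynman diagrams (the linked-cluster theorem) for the formal Gaussian BV integral along $\mathcal{L}$, so that $S_1$ is the effective action and one must make the perturbative definition of the BV integral explicit. I expect this to be the main obstacle; granting it, $\mu_1^{\frac12}e^{\frac{i}{\hbar}S_1} = \mathcal{P}_{\mathcal{L}}(\mu^{\frac12}e^{\frac{i}{\hbar}S})$ holds by construction.

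The remaining two assertions would then be formal. For the preservation of the QME, from $\Delta_\mu(e^{\frac{i}{\hbar}S}) = 0$, i.e. $\Delta(\mu^{\frac12}e^{\frac{i}{\hbar}S}) = 0$, the chain-map property of Theorem~\ref{t:BV-pushforward_properties}(1) gives
\[
  \Delta_1\bigl(\mu_1^{\frac12}e^{\frac{i}{\hbar}S_1}\bigr) = \Delta_1\,\mathcal{P}_{\mathcal{L}}\bigl(\mu^{\frac12}e^{\frac{i}{\hbar}S}\bigr) = \mathcal{P}_{\mathcal{L}}\,\Delta\bigl(\mu^{\frac12}e^{\frac{i}{\hbar}S}\bigr) = 0 ,
\]
which, by Khudaverdian's identity on $\mathcal{F}_1$, is precisely $\Delta_{\mu_1}(e^{\frac{i}{\hbar}S_1}) = 0$. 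For the last assertion I would use that, under the QME, $\phi := \mu^{\frac12}e^{\frac{i}{\hbar}S}$ satisfies $\Delta\phi = 0$, so Theorem~\ref{t:BV-pushforward_properties}(2) applied to homologous $\mathcal{L}\sim\mathcal{L}'$ produces a half-density $\chi$ on $\mathcal{F}_1$ with $\mathcal{P}_{\mathcal{L}}\phi - \mathcal{P}_{\mathcal{L}'}\phi = \Delta_1\chi$; writing $\chi = \mu_1^{\frac12}h$ and dividing through by the nowhere-vanishing $\mu_1^{\frac12}$ yields $e^{\frac{i}{\hbar}S_1} - e^{\frac{i}{\hbar}S_1'} = \Delta_{\mu_1}h$, so the difference is $\Delta_{\mu_1}$-exact (equivalently, after multiplication by $\mu_1^{\frac12}$, $\Delta_1$-exact), as required.
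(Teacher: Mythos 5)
Your proposal is correct and follows exactly the route the paper intends: the paper gives no explicit proof of this corollary, but it is stated immediately after Theorem~\ref{t:BV-pushforward_properties} precisely so that the three assertions follow from the chain-map and homologous-Lagrangian properties once one translates between functions and half-densities via the compatible Berezinians, i.e.\ $\Delta(\mu^{\frac12}f)=\mu^{\frac12}\Delta_\mu f$. Your identification of the only substantive point --- that the pushed-forward half-density is of the form $\mu_1^{\frac12}g$ with $g$ an invertible formal power series, so that $S_1=\frac{\hbar}{i}\log g$ exists (the linked-cluster exponentiation in the perturbative definition of the BV integral) --- is accurate and is left implicit in the paper as well.
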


\subsection{BV bundles}\label{s:BV_bundles}
The setting explained at the beginning of this section can (and must) be extended, with some care, to the case of a fiber bundle $\pi\colon\mathcal{F}\to \mathcal{F}_1$ that is locally the product of odd symplectic manifolds. By choosing a lagrangian submanifold $\mathcal{L}$ of the fibers (we can in principle allow a smoothly varying family of Lagrangian submanifolds $\mathcal{L}_x\subset\pi^{-1}(x)$, $\forall x\in\mathcal{F}_1$), we can get, via integration along $\mathcal{L}$, a map
\[
\mathcal{P}_{\mathcal{L}} \colon \mathrm{Dens}^{\frac{1}{2}}\mathcal{F} \rightarrow \mathrm{Dens}^{\frac{1}{2}}\mathcal{F}_1.
\]
What is not guaranteed in the general setting are the results of Theorem~\ref{t:BV-pushforward_properties}.

Conditions on the fiber bundle $\pi\colon\mathcal{F}\to \mathcal{F}_1$ to obtain Theorem~\ref{t:BV-pushforward_properties} are discussed in \cite[Remark 2.13]{CMR2} under the name of a ``BV hedgehog.''

For the present note, we only need a simpler version, which we explain now. Namely, assume that we have a bundle isomorphism
\begin{center}
\begin{tikzpicture}
\matrix (m) [matrix of math nodes,row sep=5em,column sep=6em,minimum width=2em]
{
\mathcal{F} & \widetilde{\mathcal{F}} \\
\mathcal{F}_1 & \widetilde{\mathcal{F}_1}\\
};
 \path[-stealth]
 (m-1-1) edge node [above] {$\Phi$} (m-1-2)
 (m-2-1) edge node [below] {$\phi$} (m-2-2)
 (m-1-1) edge node [left] {$\pi$} (m-2-1)
 (m-1-2) edge node [right] {$\widetilde\pi$} (m-2-2)
 ;
\end{tikzpicture}
\end{center}
where $ \widetilde{\mathcal{F}}=\widetilde{\mathcal{F}_1}\times\widetilde{\mathcal{F}_2}$, $\widetilde\pi$ is the projection to the first factor, and both $\Phi$ and $\phi$ are symplectomorphisms. 
We then choose a Lagrangian submanifold $\widetilde{\mathcal{L}}$ of $\widetilde{\mathcal{F}_2}$ and perform the BV pushforward along it. What we are interested in is the composition
\[
\widehat{\mathcal{P}}_{\widetilde{\mathcal{L}}} \coloneqq (\phi^{-1})_*\circ \mathcal{P}_{\widetilde{\mathcal{L}}} \circ \Phi_*.
\]
Since the pushforwards of half-densities by symplectomorphisms are chain maps, the first statement of Theorem~\ref{t:BV-pushforward_properties} now holds for $\widehat{\mathcal{P}}_{\widetilde{\mathcal{L}}}$. The second statement holds as well for  homologous Lagrangians $\widetilde{\mathcal{L}}\sim \widetilde{\mathcal{L}'}$ in $\widetilde{\mathcal{F}_2}$. Similarly, we get the results of Corollary~\ref{c:BV-pushforward_QME}.

\subsection{BV observables and quasiisomorphisms} \label{s:observables}
Two theories connected through a BV pushforward can be considered equivalent in the sense specified in this section.
Suppose $S$ satisfies the QME on $\mathcal{F}$ with respect to a reference, $\Delta$-closed, nowhere vanishing half-density $\mu^{\frac12}$ (i.e., $\Delta_{\mu}\left(e^{\frac{i}{\hbar} S}\right)=0$). Thanks to the QME, the degree-$1$ operator
\[
\Omega_{(S,\mu)} \coloneqq (S,\ ) -i\hbar \Delta_\mu
\]
acts on a functional $\calO$ as
\[
\Omega_{(S,\mu)}\calO = -i\hbar\, e^{-\frac{i}{\hbar} S}\Delta_\mu \left( e^{\frac{i}{\hbar} S}\calO\right).
\]
Therefore, it squares to zero and defines a cohomology $H_{\Omega_{(S,\mu)}}(\mathcal{F})$.
Note that
\begin{enumerate}
\item $\Omega_{(S,\mu)}\calO = 0$ if and only if $\Delta \left(\mu^{\frac12} e^{\frac{i}{\hbar} S}\calO\right)=0$.
\item $\calO=\Omega_{(S,\mu)}\widetilde\calO$ if and only if $\mu^{\frac12}e^{\frac{i}{\hbar} S}\calO=\Delta \left(-i\hbar\, \mu^{\frac12}e^{\frac{i}{\hbar} S}\widetilde \calO\right)$.
\end{enumerate}
This implies that the expectation value
\[
\langle \calO \rangle_{\mathcal L} \coloneqq \frac{\int_{\mathcal L}\mu^{\frac12}e^{\frac{i}{\hbar} S}\calO}{{\int_{\mathcal L}\mu^{\frac12}e^{\frac{i}{\hbar} S}}}
\]
is invariant under deformations of $\mathcal L$ if $\calO$ is $\Omega_{(S,\mu)}$-closed and vanishes if $\calO$ is $\Omega_{(S,\mu)}$-exact. Therefore, the expectation value is  a well-defined partial linear functional on the cohomology $H_{\Omega_{(S,\mu)}}(\mathcal{F})$, whose elements are accordingly called BV observables.

A BV symplectomorphism that maps $(S,\mu)$ to $(\widetilde S,\widetilde \mu)$ establishes an isomorphic chain map between the corresponding complexes and therefore an isomorphism in cohomology.

Another way to produce a chain map is via the BV pushforward. Namely, with the notations of Corollary~\ref{c:BV-pushforward_QME}, we have the partial chain map\footnote{This map depends on the choice of $\mu$, $S$, and ${\mathcal{L}}$, which we do not write down explicitly in order to keep the notation light.} 
\[
p\colon
\begin{array}[t]{ccc}
 (\mathcal{F}, \Omega_{(S,\mu)}) & \to & (\mathcal{F}_1, \Omega_{(S_1,\mu_1)})\\
 \calO & \mapsto & \mu_1^{-\frac12}e^{-\frac{i}{\hbar} S_1} \mathcal{P}_{\mathcal{L}} \left( \mu^{\frac12}e^{\frac{i}{\hbar} S}\calO\right)
 \end{array}
\]
which in turn induces a linear map $H_{\Omega_{(S,\mu)}}(\mathcal{F})\to H_{\Omega_{(S_1,\mu_1)}}(\mathcal{F}_1)$. 
Note that we have
\[
\int_{{\mathcal L_1}\times{\mathcal L}}\mu^{\frac12}e^{\frac{i}{\hbar} S}\calO = 
\int_{{\mathcal L}_1}\mu_1^{\frac12}e^{\frac{i}{\hbar} S_1}p\calO,
\]
so\footnote{Note that $\int_{{\mathcal L_1}\times{\mathcal L}}\mu^{\frac12}e^{\frac{i}{\hbar} S} = 
\int_{{\mathcal L}_1}\mu_1^{\frac12}e^{\frac{i}{\hbar} S_1}$, so the normalization factors are the same on both sides.}
\[
\langle \calO \rangle_{{\mathcal L_1}\times{\mathcal L}} = 
{\langle p\calO  \rangle_{{\mathcal L}_1}}.
\]

When the map induced by $p$ in cohomology
 is an isomorphism, the chain map $p$ is called a quasiisomorphism. This is equivalent to having a partial chain map $q\colon (\mathcal{F}_1, \Omega_{(S_1,\mu_1)}) \to (\mathcal{F}, \Omega_{(S,\mu)})$ such that $q\circ p$ and $p\circ q$ are homotopic to the identity maps. It follows that, for a functional $\calO_1$ on $\mathcal{F}_1$, we have
 \[
 \langle \calO_1  \rangle_{{\mathcal L}_1}
 ={\langle q\calO_1 \rangle_{{\mathcal L_1}\times{\mathcal L}}}.
 \]

Therefore, when a BV pushforward defines a quasiisomorphism, the two theories are completely equivalent in the sense that there is a one-to-one correspondence between their BV observables and between the respective expectation values.

We will show that the two formulations of gravity considered in this paper are equivalent in this sense.

\section{Two BV theories of PC gravity on cylinders}\label{s:twoBV}
\subsection{\emph{Standard} BV theory}\label{s:sBV}
In this section we introduce the BV version of Palatini--Cartan theory. This description firstly appeared in \cite{CS2017}. Using the notation of Section \ref{sGBVpf}, this theory will play the role of $\mathcal{F}$ in the BV pushforward.

Let $M$ be an $N$-dimensional manifold with $N>2$.

\begin{definition}\label{def:standardPC-BV}
    The \emph{standard} BV theory for Palatini--Cartan gravity is the quadruple 
    $$\mathfrak{F}_{s}(M)= (\mathcal{F}_{s}(M), S_{s}(M), \varpi_{s}(M), Q_{s}(M)),$$
    where the quantities appearing above are defined as follows:
    \begin{enumerate}
    \item $\mathcal{F}_{s}(M)$ is the space of fields and it is defined as 
        \begin{equation*}
        \mathcal{F}_{s}(M)\coloneqq T^*[-1]\left(\Omega_{nd}^1(M, \mathcal{V}) \oplus\mathcal{A}(M) \oplus \mathfrak{X}[1](M) \oplus \Omega^0[1](M,\mathrm{ad}P)\right),
        \end{equation*} 
        where the fields in the base are denoted by $(\be, \bom, \bxi^{}, \bc)$, while the corresponding variables in the cotangent fibre are denoted by $(\be^{\dag}, \bom^{\dag}, \bxi^{\dag}, \bc^{\dag})$.
    \item $\varpi_{s}(M)$ is a symplectic form on $\mathcal{F}_{s}(M)$ and its explicit expression is 
        \begin{equation*}
        \varpi_{s}(M) = \int_M \delta \be \delta \be^{\dag} + \delta \bom \delta \bom^{\dag}+ \delta \bc \delta \bc^{\dag} + \iota_{\delta \bxi^{}} \delta\bxi^{\dag}.
        \end{equation*}
    \item $S_{s}(M) \in C^{\infty}(\mathcal{F}_{s}(M))$ is the BV action functional and reads
        \begin{align}\label{e:standard_action}
        S_{s}(M) &=\int_M \frac{1}{(N-2)!}	\be^{N-2} F_{\bom}  + \left(\iota_{\bxi^{}} F_{\bom}  - d_{\bom} \bc \right)\bom^\dag - \left(L_{\bxi^{}}^{\bom}\be- [\bc,\be]\right)\be^\dag\\
        &\phantom{=}+\int_M \frac{1}{2}\left(\iota_{\bxi^{}}\iota_{\bxi^{}} F_{\bom}  - [\bc,\bc]\right)\bc^\dag +\frac12\iota_{[\bxi^{},\bxi^{}]}\bxi^{\dag}, \nonumber
        \end{align}
        where $L_{\bxi^{}}^{\bom}\be = \iota_{\bxi^{}}d_{\bom}\be- d_{\bom}\iota_{\bxi^{}}\be$.
    \item The cohomological vector field $Q_{s}$ is defined by the equation $\iota_{Q_{s}}\varpi_{s}= \delta S_{s}$ and its action on components is 
        \begin{align*}
        Q_{s} \be &=  L_{\bxi^{}}^{\bom}\be- [\bc,\be] & 
        Q_{s} \be^{\dag} & = \be^{N-3} F_{\bom} + L_{\bxi^{}}^{\bom} \be^{\dag} - [\bc , \be^{\dag}]\\
        Q_{s} \bom &= \iota_{\bxi^{}} F_{\bom}  - d_{\bom} \bc &
        Q_{s} \bom^{\dag} &= \be^{N-3} d_{\bom} \be - d_{\bom} \iota_{\bxi^{}} \bom^{\dag} - [\bc, \bom^{\dag}] + \iota_{\bxi^{}}[\be, \be^{\dag}]-\frac{1}{2} d_{\bom} \iota_{\bxi^{}} \iota_{\bxi^{}} \bc^{\dag}\\
        Q_{s} \bc &= \frac{1}{2}\iota_{\bxi^{}}\iota_{\bxi^{}} F_{\bom}  - \frac{1}{2}[\bc,\bc] &
        Q_{s} \bc^{\dag} &= - d_{\bom} \bom^\dag - [\be, \be^{\dag}] - [\bc, \bc^{\dag}] \\
        Q_{s} \bxi^{} &= \frac{1}{2} [\bxi^{}, \bxi^{}] &
        Q_{s} \bxi^{\dag}_{\bullet} &= F_{\bom\bullet} \bom^\dag -  (d_{\bom\bullet}\be)\be^{\dag}+\iota_{\bxi^{}}F_{\bom\bullet} \bc^\dag + L_{\bxi^{}}^{\bom}\bxi^{\dag}_{\bullet} + (d_{\bom}\iota_{\bxi^{}} \bxi^{\dag})_{\bullet}.
        \end{align*}
        Here we used the symbol $\bullet$ to remind the reader that $\bxi^\dag$ is a one-form with values in densities, and on the right hand side we highlight the one-form part of the expression.
    \end{enumerate}
\end{definition}

{
Suppose now that $M = \Sigma \times I$ where $\Sigma$ is an $(N-1)$-dimensional closed manifold and $I$ is a 1-dimensional manifold.
The product structure of $M$ induces a splitting on forms and on vector fields on $M$: let $\boldsymbol{\phi} \in \Omega^k(\Sigma \times I)$, then we get
\begin{align*}
    \boldsymbol{\phi} = {\phi} + \underline{{\phi}_n}
\end{align*}
where ${\phi} \in \Omega^k(\Sigma)\widehat\otimes C^{\infty}(I)$ and $\underline{{\phi}_n} \in \Omega^{k-1}(\Sigma)\widehat\otimes \Omega^{1}(I)$. Note that if $k=n$ the first addend does not exists, while for $k=0$ only the first survives. On the other hand, for a vector field $v \in \mathfrak{X}(M)$ we obtain
$\mathbf{v}= {v} + \overline{{v}_n}$ where ${v} \in \mathfrak{X}(\Sigma)\widehat\otimes C^{\infty}(I)$ and  $\overline{{v}_n} \in \mathfrak{X}(I)\widehat\otimes C^{\infty}(\Sigma)$

\begin{remark}
    Note that we underline  forms that are one-forms along $I$. In particular, $\underline{{\phi}_n} = {\phi}_n dx^n$
    where $x^n$ is the coordinate along $I$.
    The parity of $\underline{{\phi}_n}$ is the same as $\boldsymbol{\phi}$, while ${\phi}_n$ has opposite parity.
\end{remark}

The fields of the BV formulation of Palatini--Cartan theory then get split as 
\begin{align*}
    \be &= e + \underline{e_n} &  
    \be^\dag &= e^\dag + \underline{e^{\dag}_n} &
    \bom &= \omega + \underline{\omega_n} &  
    \bom^\dag &= \omega{}^\dag + \underline{\omega^{\dag}_n}\\
    \bc &= c & 
    \bc^\dag &= \underline{c^\dag_n} &
    \bxi &= \xi + \overline{{\xi^{n}}} & 
    {\bxi}^\dag &= \underline{\xi^{\dag}} + \underline{{\xi^{\dag}}_n}
\end{align*}

Hence, on a cylindrical manifold $\Sigma \times I$, we can write the symplectic form $\varpi_s$ and $S_s$ as follows:
\begin{align}\label{e:standard_symplectic_form_split}
    \varpi_s = \int_{\Sigma \times I}\delta e \delta \underline{e_n^\dag} + \delta \underline{e_n} \delta e^\dag + \delta \omega \delta \underline{\omega_n}^\dag + \delta \underline{\omega_n} \delta \omega^\dag  + \delta c \delta \underline{c_n^\dag} + \iota_{\delta \xi} \delta \underline{\xi^{\dag}} + \delta \xi^{n} \delta \underline{\xi^{\dag}_n}.
\end{align}

\begin{align}\label{e:standard_action_split_beforev}
    S_{s}& =  \int_{\Sigma \times I}\frac{1}{(N-3)!}  e^{N-3} \underline{e}_n F_{\omega} + \frac{1}{(N-2)!} e^{N-2} \underline{F_{\omega_n}}-  \left(\iota_{\xi{}} F_{\omega}+ F_{\omega_n}\xi^n  - d_{\omega} c \right)\underline{\omega_n^\dag} \\
    \nonumber&\phantom{=  \int_{\Sigma \times I}} - \left(\iota_{\xi{}} \underline{F_{\omega_n}}  - \underline{d_{\omega_n}} c \right)\omega^{\dag} + \left(L_{\xi{}}^{\omega }e  + d_{\omega_n} e \xi^n  + e_n d \xi^n - [c,e] \right)\underline{e}_n^{\dag}\\ 
    \nonumber&\phantom{=  \int_{\Sigma \times I}}+ \left(\iota_{\xi{}}d_{\omega}\underline{e}_n  + \iota_{\underline{\partial}_n \xi{}}e  - \underline{d_{\omega_n}}(e_n \xi^n) - [c,\underline{e}_n] \right)e^\dag\\
    \nonumber&\phantom{=  \int_{\Sigma \times I}}- \left(\frac{1}{2}\iota_{\xi{}}\iota_{\xi{}} F_{\omega}    + \iota_{\xi{}} F_{\omega_n}\xi^n  - \frac{1}{2}[c,c]\right)\underline{c}_n^\dag +\frac12 \iota_{[\xi{},\xi{}]}\underline{\xi^{\dag}}+ \xi^n\iota_{\partial_n\xi{}}\underline{\xi^{\dag}}+  \iota_{\xi{}}d\xi^n\underline{\xi^{\dag}}_n+ \xi^n\partial_n\xi^n\underline{\xi^{\dag}}_n,
\end{align}
where we defined $\underline{F_{\omega_n}} = d_{\omega} \underline{\omega_n} + \underline{\partial_n} \omega$.

\subsection{\emph{Restricted} BV theory}

Using the field $e$ we can then define the following set of maps
\begin{align*}
{W}_{e^{N-3}}^{i,j}\colon \Omega^i\left(M,\textstyle{\bigwedge^j}\mathcal V\right) &\rightarrow \Omega^{i+N-3}\left(M,\textstyle{\bigwedge^{j+N-3}}\mathcal V\right)\\
 Y &\mapsto e^{N-3} \wedge Y.
\end{align*}
The properties of these maps are collected in \cite{C2024}.

We now fix a nowhere vanishing section $\epsilon_n \in \Gamma(M,\mathcal{V})$  with the additional property that $d_I\epsilon_n=0$. Once $\epsilon_n$ is fixed, we restrict the space of fields by the open condition that $e$ and $\epsilon_n$
form a basis of $V$ at every point.  Since $\be$ is nondegenerate, $e_n$ becomes a linear combination of $e$ and $\epsilon_n$, with nonzero $\epsilon_n$-component. More generally we denote by $Y^{[\mu]}$ the components of a field $Y$ with respect to the basis given by $e$ and $\epsilon_n$ (i.e. $Y = Y^{[b]}e_b+Y^{[n]}{\epsilon}_n$). Let now $\alpha$ be a one-form on $\Sigma$, $\alpha= \sum_{a=1}^{N-1} \alpha_a dx^a$. For shortness of notation we define
\begin{align*}
   \underline{z}^a &: = \underline{{e}}_n^{[a]}   & \underline{\lambda} &:= \underline{{e}}_n^{[n]}
\end{align*}
so that
\begin{align*}
    \underline{e}_n = \iota_{\underline{z}} e + \underline{\lambda} \epsilon_n
\end{align*}
and extend this notation using the Leibnitz rule to any form on $\Sigma$. Note that if $\alpha$ is an odd quantity, then $\iota_{\underline{z}} \alpha$ is even and $\iota_{z} \alpha$ is odd and vice versa.

Using this notation we consider $\underline{\mathfrak{W}^\dag} \in \Omega^{N-2}(\Sigma, \wedge^{N-2}\mathcal{V})\otimes \Omega^1(I)$ and ${X} \in \Omega^{0}(\Sigma \times I, \mathcal{V})$ defined as
\begin{align} \label{e:defWdag}
    \underline{\mathfrak{W}^\dag}&:= \underline{\omega}^{\dag}_n - \iota_{\underline{z}}\omega^\dag - \iota_{\xi}\underline{c}_n^{\dag}+\iota_{\underline{z}}{c}_{n}^{\dag}{\xi^{n}}\\
    X &:= [c, \epsilon_n]+ L_{\xi{}}^{\omega}(\epsilon_n)-d_{\omega_n} \epsilon_n\xi^{n}
\end{align}
and define the following BV theory.

\begin{definition}\label{def:restrictedPC-BV}
    The \emph{restricted} BV theory for Palatini--Cartan gravity is the quadruple 
    $$\mathfrak{F}_{r}= (\mathcal{F}_{r}, S_{r}, \varpi_{r}, Q_{r})$$
    where the quantities appearing above are defined as follows:
    \begin{enumerate}
    \item $\mathcal{F}_{r}$ is the space of fields and  is defined as the subspace of $\mathcal{F}_{s}$ of fields satisfying the following equations (called \emph{structural constraints}):
        \begin{align} \label{e:PCconstraints1}
            \mathfrak{W}^\dag &\in \mathrm{Im}({W}_{e^{N-3}}^{1,1})\\
            {\epsilon}_n e^{(N-4)} d_{\omega} e - {\epsilon}_n e^{(N-4)} {W}_{e^{N-3}}^{-1}&(\mathfrak{W}^\dag) d \xi^{n} + X^{[a]}(\omega_a^{\dag} -{c}_{an}^{\dag}{\xi^{n}}) \in \mathrm{Im}({W}_{e^{N-3}}^{1,1}), \label{e:PCconstraints2}
        \end{align}
        $(\epsilon_n, e)$ form a basis and the induced metric $g=\eta(e,e)$ is non-degenerate.
    \item $\varpi_{r}=\varpi_{s}|_{\mathcal{F}_{r}}$;\footnote{Note that the structural constraints above restrict the space, hence the symplectic form is only \emph{written} in the same way.}
    \item $S_{r}=S_{s}|_{\mathcal{F}_{r}}$;
    \item $Q_{r}=Q_{s}$.
    \end{enumerate}
\end{definition}

The importance of this theory was shown in \cite{CCS2020b} by proving that it is BV-isomorphic to the AKSZ theory of gravity.

}

In this note we want to show that the \emph{restricted} BV theory described in Definition \ref{def:restrictedPC-BV} can be obtained as a BV pushforward of the \emph{standard} BV theory described in Definition \ref{def:standardPC-BV}. In particular we divide this in two steps. First we prove in Section \ref{s:symplectomorphism-BV-to-Hedgehog} that the \emph{standard} BV theory is symplectomorphic to a BV theory  isomorphic to a product of BV spaces. We then show that its BV pushforward does indeed coincide with the \emph{restricted} BV theory (Section \ref{s:BV-pushforward_Gravity_Hedgehog}).

\section{From the standard BV PC theory to a BV product}\label{s:symplectomorphism-BV-to-Hedgehog}

The goal of this section is to show that, on cylindrical manifolds, the standard BV theory for Palatini--Cartan gravity $\mathfrak{F}_s$ introduced in Definition  \ref{def:standardPC-BV} is strongly BV equivalent (i.e. there exists a symplectomorphic compatible with the actions) to another BV theory $\mathfrak{F}_H$, in which the symplectic form can be written as the sum of the symplectic form of the restricted BV theory $\mathfrak{F}_r$ (see Definition \ref{def:restrictedPC-BV}) and a complement. Then, in Section \ref{s:gravity_hedgehog} we will show that such theory $\mathfrak{F}_H$ is actually strongly BV equivalent to a BV product with basis given by the restricted BV theory $\mathfrak{F}_r$.

In order to construct the aforementioned symplectomorphism, we consider the following results:

{
    \begin{theorem}\cite{CS2019}[Theorem 33]\label{thm:omegadecomposition_d4}
Suppose that the  boundary metric $g^{\partial}$ is nondegenerate. Given any ${\omega} \in \Omega ( \Sigma,\wedge^2 \mathcal{V})$, there is a unique decomposition 
\begin{equation} \label{e:omegadecomp_d4}
{\omega}= \widehat{\omega} + v
\end{equation}
with $\widehat{\omega}$ satisfying  \eqref{e:PCconstraints2} and $v$ such that $e^{N-3} v=0 $.
\end{theorem}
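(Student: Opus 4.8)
My plan is to treat the statement as what it really is, a fiberwise linear-algebra fact about the coframe $e$. The first observation is that both conditions entering the decomposition depend $C^\infty(\Sigma)$-linearly, indeed pointwise-algebraically, on $\omega$: the equation $e^{N-3}v=0$ is manifestly tensorial, and in \eqref{e:PCconstraints2} the only occurrences of $\omega$ are through $d_\omega e=de+[\omega,e]$ and through $X=[c,\epsilon_n]+\iota_\xi[\omega,\epsilon_n]-d_{\omega_n}\epsilon_n\,\xi^n$, each of which is affine in $\omega$ with algebraic linear part, all the remaining fields being frozen as parameters. Hence it suffices to produce, at each $x\in\Sigma$, a direct-sum decomposition of the fiber of $\bigwedge^2\mathcal V$-valued forms into $K_x:=\ker\bigl(e^{N-3}\wedge\,\cdot\,\bigr)_x$ and a complement on which the affine equation \eqref{e:PCconstraints2} has a unique solution, and then to check that the two resulting projections depend smoothly on $x$.

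I would work in a frame of $\mathcal V_x$ adapted to the splitting $\mathcal V_x=e(T_x\Sigma)\oplus\langle\epsilon_n\rangle$, which is available precisely because $g^\partial=\eta(e,e)$ is nondegenerate. First I would describe $K_x$ element by element and compute $\dim K_x$ from the kernel/cokernel bookkeeping for the maps $W^{i,j}_{e^{N-3}}$ collected in \cite{C2024} (concretely, from surjectivity of $W^{1,2}_{e^{N-3}}$); in the physically relevant case $N=4$ this is $\dim K_x=6$. Then I would check, again from the same source, that $\dim K_x$ equals $\dim\operatorname{coker}W^{1,1}_{e^{N-3}}$, which is the codimension of the solution set of \eqref{e:PCconstraints2} inside the fiber (again $6$ for $N=4$). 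Since the two subspaces then have complementary dimension, the whole problem reduces to a single injectivity statement: if $v\in K_x$ and the $\omega$-linear part of the left-hand side of \eqref{e:PCconstraints2} evaluated at $v$ — whose leading term is $\epsilon_n\wedge e^{N-4}\wedge[v,e]$ — lies in $\Ima W^{1,1}_{e^{N-3}}$, then $v=0$. Granting this, the linear part of \eqref{e:PCconstraints2} restricts to an isomorphism $K_x\xrightarrow{\ \sim\ }\operatorname{coker}W^{1,1}_{e^{N-3}}$, so for each $\omega(x)$ there is a unique $v(x)\in K_x$ with $\widehat\omega(x):=\omega(x)-v(x)$ solving \eqref{e:PCconstraints2}; pointwise existence and uniqueness follow, and since the isomorphism, hence both projections, is built algebraically from the nondegenerate $e$, it varies smoothly in $x$, so $\widehat\omega$ and $v$ are genuine $\mathcal V$-valued forms on $\Sigma$.

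The hard part is this injectivity statement, and it is exactly where the hypothesis on $g^\partial$ is genuinely used: one must know $K_x$ well enough — not merely its dimension — to see that $v\mapsto[\epsilon_n e^{N-4}[v,e]]$ has trivial kernel in $\operatorname{coker}W^{1,1}_{e^{N-3}}$. I would carry this out by expanding $v$ in the adapted frame, computing $\epsilon_n e^{N-4}[v,e]$ component-wise, and using nondegeneracy of $\eta$ on $e(T_x\Sigma)$ to solve the resulting algebraic system; the remaining, antifield-dependent summands of \eqref{e:PCconstraints2}, which enter only through $X^{[a]}(\omega^\dag_a-c^\dag_{an}\xi^n)$, are proportional to antifields and hence contribute only a strictly lower-order, nilpotent perturbation that cannot spoil solvability. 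This computation is the computational core of \cite[Theorem~33]{CS2019}; everything around it — the reduction to fibers, the dimension counts, and smoothness of algebraically-defined projections — is routine.
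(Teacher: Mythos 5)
Your proposal is correct and follows essentially the same route as the cited source \cite{CS2019}: the paper itself does not reprove this statement, but the reduction to a fiberwise affine-linear problem, the matching dimension count $\dim\ker W^{1,2}_{e^{N-3}}=\dim\operatorname{coker}W^{1,1}_{e^{N-3}}=(N-1)N(N-3)/2$, and the key injectivity of $v\mapsto[\epsilon_n e^{N-4}[v,e]]$ on that kernel (with the antifield-dependent terms handled as a nilpotent perturbation) are exactly how that proof is organized. The injectivity you isolate as the computational core is precisely the content of Lemma~\ref{lem:Omega2,2_d4}, quoted immediately after the theorem in this paper.
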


\begin{lemma}\cite{CCS2020}[Lemma 16]\label{lem:Omega2,2_d4}
Let $g^\partial$ be nondegenerate and let $X \in \Omega^{N-2,N-2}_\partial$. Then there exist a unique $Z \in \mathrm{Ker} W_{N-3}^{\partial, (1,2)}$ and a unique $Y \in \Omega_{\partial}^{1,1}$ such that 
\begin{align*}
\beta = e^{N-3} Y + \epsilon_n e^{N-4} [Z, e].
\end{align*}
\end{lemma}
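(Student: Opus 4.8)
\emph{Proof proposal.}
The identity to be established is algebraic and pointwise along $\Sigma$: wedging with $e^{N-3}$, with $e^{N-4}$, or with $\epsilon_n$, and forming the bracket $[Z,e]$ (the $\mathfrak{so}(\mathcal V)\cong\wedge^2\mathcal V$-action on $\mathcal V$), are all fiberwise-linear operations. So I would fix a point of $\Sigma$ at which $g^\partial$ is nondegenerate and prove that the linear map
\[
\Psi\colon\ \Omega^{1,1}_\partial\oplus\mathrm{Ker}\,W^{\partial,(1,2)}_{N-3}\ \longrightarrow\ \Omega^{N-2,N-2}_\partial,\qquad (Y,Z)\ \longmapsto\ e^{N-3}Y+\epsilon_n e^{N-4}[Z,e],
\]
is a linear isomorphism; existence and uniqueness of the pair $(Y,Z)$ are then exactly surjectivity and injectivity of $\Psi$. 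The plan is to (i) check that the source and target of $\Psi$ have equal (finite) dimension, and (ii) prove that $\Psi$ is injective.

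For (i) I would invoke the properties of the maps $W^{\partial,(i,j)}_{N-3}$ collected in \cite{C2024}: when $g^\partial$ is nondegenerate, $W^{\partial,(1,1)}_{N-3}$ is injective and $W^{\partial,(1,2)}_{N-3}\colon\Omega^{1,2}_\partial\to\Omega^{N-2,N-1}_\partial$ is surjective (the surjectivity is also implicit in Theorem~\ref{thm:omegadecomposition_d4}, and in any case is forced a posteriori by the uniqueness part of the present statement). Writing $d_{i,j}=\binom{N-1}{i}\binom{N}{j}$ for the fiberwise dimension of $\Omega^{i,j}_\partial$, this gives $\dim\mathrm{Ker}\,W^{\partial,(1,2)}_{N-3}=d_{1,2}-d_{N-2,N-1}$ and $\dim\mathrm{coker}\,W^{\partial,(1,1)}_{N-3}=d_{N-2,N-2}-d_{1,1}$; a short computation with binomial coefficients shows that both equal $\tfrac12 N(N-1)(N-3)$ and that $d_{1,1}+\bigl(d_{1,2}-d_{N-2,N-1}\bigr)=d_{N-2,N-2}=\tfrac12 N(N-1)^2$. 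Hence $\dim(\text{source of }\Psi)=\dim(\text{target of }\Psi)$, and moreover $\dim\mathrm{Ker}\,W^{\partial,(1,2)}_{N-3}=\dim\mathrm{coker}\,W^{\partial,(1,1)}_{N-3}$.

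For (ii) I would reduce the bijectivity of $\Psi$ to showing that the composite
\[
\phi\colon\ \mathrm{Ker}\,W^{\partial,(1,2)}_{N-3}\ \xrightarrow{\ Z\,\mapsto\,\epsilon_n e^{N-4}[Z,e]\ }\ \Omega^{N-2,N-2}_\partial\ \longrightarrow\ \mathrm{coker}\,W^{\partial,(1,1)}_{N-3}
\]
is a linear isomorphism: granted this and the injectivity of $W^{\partial,(1,1)}_{N-3}$, a short diagram chase recovers both the existence and the uniqueness of $(Y,Z)$. By the dimension equality from (i) it is enough to prove that $\phi$ is injective, i.e.\ that $\epsilon_n e^{N-4}[Z,e]\in\mathrm{Im}\,W^{\partial,(1,1)}_{N-3}$ together with $e^{N-3}Z=0$ forces $Z=0$. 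For this I would work in the basis $(e_1,\dots,e_{N-1},\epsilon_n)$ of $\mathcal V$ adapted to $e$ (available because $(\epsilon_n,e)$ form a basis) and grade $\wedge^\bullet\mathcal V$ by the number of $\epsilon_n$-factors; since every element of $\mathrm{Im}\,W^{\partial,(1,1)}_{N-3}$ is $e^{N-3}$ wedged with a single $\mathcal V$-covector, separating the $\epsilon_n$-free from the $\epsilon_n$-containing part of the relation $e^{N-3}Y=-\epsilon_n e^{N-4}[Z,e]$, then wedging the residual equation once more with $e$ and using $e^{N-3}Z=0$, should reduce it — via the structural identities of \cite{C2024} for the $W$-maps and for the action $[\,\cdot\,,e]$ — to the vanishing of $W^{\partial,(1,2)}_{N-3}$ on a quantity depending injectively on $Z$, which forces $Z=0$. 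A tidier packaging of the same computation is to exhibit an explicit left inverse of $\phi$ assembled from wedging with $e$ and contracting with $\epsilon_n$ through the inner product $\eta$.

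The delicate step is precisely (ii): the transversality inside $\Omega^{N-2,N-2}_\partial$ between $\mathrm{Im}\,W^{\partial,(1,1)}_{N-3}$ and the image of $\mathrm{Ker}\,W^{\partial,(1,2)}_{N-3}$ under $Z\mapsto\epsilon_n e^{N-4}[Z,e]$. For $N=4$ this amounts to a short explicit computation (as in \cite{CS2019,CCS2020}), but for general $N$ the combinatorics on $\wedge^{N-2}\mathcal V$ genuinely calls for the systematic identities for the $W$-maps from \cite{C2024}; and the fact that the adjoint action $[\,\cdot\,,e]$ does not preserve the $e$-/$\epsilon_n$-splitting of $\mathcal V$ — because $\epsilon_n$ need not be $\eta$-orthogonal to $\mathrm{Im}\,e$ — is what makes producing a clean left inverse, rather than a bare dimension count, the technical heart of the argument.
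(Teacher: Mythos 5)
This lemma is imported verbatim from \cite{CCS2020}[Lemma 16]; the present paper contains no proof of it, so your attempt can only be measured against the standard argument in that reference. Your overall strategy is the right one and matches it in outline: the statement is fiberwise linear algebra, and it is equivalent to the map $\Psi(Y,Z)=e^{N-3}Y+\epsilon_n e^{N-4}[Z,e]$ being a linear isomorphism. Your dimension count in (i) is correct and complete: with $d_{i,j}=\binom{N-1}{i}\binom{N}{j}$ one indeed gets $\dim\mathrm{Ker}\,W^{\partial,(1,2)}_{N-3}=\dim\mathrm{coker}\,W^{\partial,(1,1)}_{N-3}=\tfrac12 N(N-1)(N-3)$ and $d_{1,1}+(d_{1,2}-d_{N-2,N-1})=d_{N-2,N-2}$, granting the injectivity of $W^{\partial,(1,1)}_{N-3}$ and surjectivity of $W^{\partial,(1,2)}_{N-3}$ for nondegenerate $g^\partial$ (which are indeed among the properties of the $W$-maps collected in \cite{C2024}; do not lean on the parenthetical ``forced a posteriori by the uniqueness part of the present statement,'' which would be circular). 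The reduction of the lemma to the injectivity of the composite $\phi$ into the cokernel, via the diagram chase you describe, is also correct.

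The genuine gap is step (ii) itself. The entire content of the lemma beyond bookkeeping is the transversality
$\mathrm{Im}\,W^{\partial,(1,1)}_{N-3}\cap\{\epsilon_n e^{N-4}[Z,e]\,:\,e^{N-3}Z=0\}=\{0\}$,
and your argument for it is a description of what ``should'' happen (``should reduce \dots to the vanishing of $W^{\partial,(1,2)}_{N-3}$ on a quantity depending injectively on $Z$''), not a proof. In particular, the step where you separate $\epsilon_n$-free from $\epsilon_n$-containing components is not innocent: as you yourself note, the adjoint action $[\,\cdot\,,e]$ is defined through $\eta$ and does not respect the $e$/$\epsilon_n$ splitting, so the ``residual equation'' you propose to wedge with $e$ is not actually written down, and no left inverse of $\phi$ is exhibited. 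In \cite{CCS2020} this point is settled by an explicit computation in the adapted basis $e_\mu^a=\delta_\mu^a$, $(\epsilon_n)^a=\delta_N^a$ --- the same device used in Lemma~\ref{l:components_of_v} of the present paper to parametrize $\mathrm{Ker}\,W$ in components --- after which the triviality of the intersection can be read off. Until that computation (or an equivalent invariant argument) is supplied, the proof is incomplete precisely at its technical heart.
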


The decomposition \eqref{e:omegadecomp_d4} allows us to write $\omega$ in terms of a field $\widehat{\omega}$ satisfying the constraint \eqref{e:PCconstraints2}, and hence being a field in $\mathcal{F}_{r}$, together with some residual field $v$.

Similarly Lemma \ref{lem:Omega2,2_d4} allows us to decompose $\underline{\mathfrak{W}^\dag}$ as
\begin{align*}
    \underline{\mathfrak{W}^\dag} = e^{N-3} \underline{\tau}^\dag + \epsilon_n e^{N-4} [\underline{\mu}^\dag, e]
\end{align*}
for some $\underline{\tau}^\dag$ and $\underline{\mu}^\dag$.
Using the explicit expression of $\underline{\mathfrak{W}^\dag}$ in \eqref{e:defWdag}, we can then define 
\begin{align*}
    \underline{\widehat{\omega}}^\dag_n &: = e^{N-3} \underline{\tau}^\dag + \iota_{\underline{z}}\omega^\dag+ \iota_{\xi{}}\underline{c}_n^{\dag}-\iota_{\underline{z}}{c}_{n}^{\dag}{\xi^{n}}\\
    \underline{v}^\dag &:=\epsilon_n e^{N-4} [\underline{\mu}^\dag, e],
\end{align*}
and rewrite $\underline{\omega}_n^\dag$ as
\begin{align}\label{e:split_omegan_dag}
    \underline{\omega}^{\dag}_n& = \underline{\whom}^\dag_n + \underline{v}^\dag.
\end{align}

}

{
\begin{theorem}\label{t:split_sympl_form}
    On a cylindrical manifold $\Sigma \times I$, there exists a BV-symplectomorpshim  
    \begin{align*}
        \psi\colon (\mathcal{F}_s,\varpi_s, S_s) \rightarrow (\mathcal{F}_H,\varpi_H, S_H),
    \end{align*}
    where $\mathcal{F}_s= \mathcal{F}_H$
    \begin{align}\label{e:sympl_form_H}
        \varpi_H = \varpi_{r} + \int_{\Sigma \times I}\delta v \delta  \underline{v}^\dag,
    \end{align}
    and 
    \begin{align*}
        S_H = S_{r} + \int_{\Sigma \times I} \frac{1}{2(N-3)!}\underline{e_n}e^{N-3} [v,v] 
        +\left( L_{\xi}^{\omega} v + [c,v] + \xi^{n} d_{\omega_n} v  + d\xi^n \iota_z v \right)\underline{v}^\dag. 
    \end{align*}
\end{theorem}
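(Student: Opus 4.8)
The plan is to produce the symplectomorphism $\psi$ explicitly as a change of variables in $\mathcal F_s$ that isolates $v$ and $\underline v^\dag$ as a conjugate pair decoupled (symplectically) from the remaining fields, and then to verify that $\psi^*S_H = S_s$ by direct substitution. First I would fix notation for the splitting: by Theorem \ref{thm:omegadecomposition_d4} applied slice-wise, the connection along $\Sigma$ decomposes as $\omega = \widehat\omega + v$ with $\widehat\omega$ satisfying \eqref{e:PCconstraints2} and $e^{N-3}v = 0$; dually, by Lemma \ref{lem:Omega2,2_d4}, $\underline{\mathfrak W^\dag}$ decomposes as $e^{N-3}\underline\tau^\dag + \epsilon_n e^{N-4}[\underline\mu^\dag,e]$, which via \eqref{e:defWdag} yields the splitting \eqref{e:split_omegan_dag} of $\underline\omega_n^\dag$ into $\underline{\whom}^\dag_n + \underline v^\dag$. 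The map $\psi$ is then the transformation
\[
(\be,\bom,\bc,\bxi,\be^\dag,\bom^\dag,\bc^\dag,\bxi^\dag)\ \longmapsto\ (\text{same fields, but with }\omega = \widehat\omega + v,\ \underline\omega_n^\dag = \underline{\whom}^\dag_n + \underline v^\dag),
\]
i.e.\ we keep all the "$n$-components'' and antifields untouched except that we trade $(\omega,\underline\omega_n^\dag)$ for the quadruple $(\widehat\omega, v, \underline{\whom}^\dag_n, \underline v^\dag)$ subject to the defining relations $e^{N-3}v=0$ and $\widehat\omega$-constraint. Since both decompositions are \emph{linear and $e$-dependent} (but $e$ is itself unchanged by $\psi$), $\psi$ is a well-defined diffeomorphism of $\mathcal F_s$ onto itself (here $\mathcal F_H = \mathcal F_s$ as a manifold).

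Next I would check that $\psi$ is a symplectomorphism onto $(\mathcal F_H,\varpi_H)$. The only term of $\varpi_s$ affected is $\int \delta\bom\,\delta\bom^\dag$; after the product splitting along $\Sigma\times I$ this is $\int \delta\omega\,\delta\underline\omega_n^\dag + \delta\underline\omega_n\,\delta\omega^\dag$, and only the first summand is touched. Substituting $\omega = \widehat\omega + v$ and $\underline\omega_n^\dag = \underline{\whom}^\dag_n + \underline v^\dag$ gives
\[
\int_{\Sigma\times I}\delta\widehat\omega\,\delta\underline{\whom}^\dag_n + \delta\widehat\omega\,\delta\underline v^\dag + \delta v\,\delta\underline{\whom}^\dag_n + \delta v\,\delta\underline v^\dag.
\]
The crux is to show that the two cross terms vanish and the first term reassembles — together with the $\iota_{\underline z},\iota_\xi$-corrections hidden inside $\underline{\whom}^\dag_n$ and the $\xi^n$-pieces — into precisely $\varpi_r$ written in the fields $(\widehat\omega,e,\bxi,\bc,\dots)$. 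For the cross terms one uses that $e^{N-3}v = 0$ is paired against a form of the type $e^{N-3}(\cdots)$ in $\underline{\whom}^\dag_n$ (and conversely $\underline v^\dag = \epsilon_n e^{N-4}[\underline\mu^\dag,e]$ lies in the annihilator of $\widehat\omega$-type variations), so the nondegenerate pairing behind Theorem \ref{thm:omegadecomposition_d4}/Lemma \ref{lem:Omega2,2_d4} forces orthogonality; this is exactly the content making the direct sum decompositions symplectic rather than merely linear. One must be careful that $\delta$ acting on $\widehat\omega$ and $v$ also produces $\delta e$-terms (since the decomposition depends on $e$); these extra terms must be absorbed into the $e^\dag$-sector of $\varpi_r$, and tracking them is where the bookkeeping is heaviest.

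Finally I would verify $S_H = \psi_* S_s$ by plugging $\omega = \widehat\omega + v$ into \eqref{e:standard_action_split_beforev}. Expanding $F_\omega = F_{\widehat\omega} + d_{\widehat\omega}v + \tfrac12[v,v]$, $d_\omega = d_{\widehat\omega} + [v,\cdot]$, $L_\xi^\omega = L_\xi^{\widehat\omega} + [\iota_\xi v,\cdot]$, etc., every occurrence of $\omega$ in $S_s$ generates the corresponding "$\widehat\omega$ term'' (which collects into $S_r$) plus $v$-corrections. The terms quadratic in $v$ that survive — after using $e^{N-3}v = 0$ to kill the many $v$-corrections that are wedged against $e^{N-3}$ — are exactly $\tfrac{1}{2(N-3)!}\underline{e_n}e^{N-3}[v,v]$ from the $e^{N-2}\underline{F_{\omega_n}}$ term together with the linear-in-$v$ pieces $\bigl(L_\xi^\omega v + [c,v] + \xi^n d_{\omega_n}v + d\xi^n\iota_z v\bigr)\underline v^\dag$ coming from pairing $\delta$-variations against $\underline\omega_n^\dag$'s $\underline v^\dag$-part; all remaining $v$-dependence either vanishes by $e^{N-3}v=0$ or is reabsorbed into $S_r$ via the definition of $\underline{\whom}^\dag_n$. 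I expect the main obstacle to be precisely this last reassembly: showing that \emph{all} the scattered $v$-linear and $\widehat\omega$-mixed contributions either cancel in pairs or repackage cleanly into $S_r + \int(\cdots)\underline v^\dag$, with no leftover cross terms. This requires the $W_{e^{N-3}}^{i,j}$-identities of \cite{C2024} (in particular injectivity/surjectivity statements governing when $e^{N-3}(\cdots)=0$ and the structure of $\mathrm{Ker}\,W^{\partial,(1,2)}_{N-3}$) and careful use of the defining relations \eqref{e:defWdag}; I would organize it by grading the action in powers of $v$ (orders $0,1,2$), treating each order separately, and within order $1$ further splitting according to whether a $v$-factor sits against an $e^{N-3}$-type coefficient (annihilated) or against $\underline v^\dag$ (retained).
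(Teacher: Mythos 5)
Your overall strategy --- decompose $\omega=\widehat\omega+v$ via Theorem \ref{thm:omegadecomposition_d4} and $\underline{\omega}_n^\dag=\underline{\widehat{\omega}}^\dag_n+\underline{v}^\dag$ via Lemma \ref{lem:Omega2,2_d4} and \eqref{e:defWdag}, substitute, and sort by powers of $v$ --- is the same as the paper's. But there is a genuine gap at the central step: the map you define (trade $(\omega,\underline{\omega}_n^\dag)$ for $(\widehat\omega,v,\underline{\widehat{\omega}}^\dag_n,\underline{v}^\dag)$ and leave every other field untouched) is \emph{not} a symplectomorphism onto $(\mathcal{F}_H,\varpi_H)$, and the cross terms $\int\delta\widehat\omega\,\delta\underline{v}^\dag+\delta v\,\delta\underline{\widehat{\omega}}^\dag_n$ do \emph{not} vanish by orthogonality. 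Using $\delta(e^{N-3}v)=0$ and the structural constraints \eqref{e:PCconstraints1}--\eqref{e:PCconstraints2}, these cross terms are instead converted into genuinely nonzero contributions of the form $\delta e\,\delta(v\tau^\dag)$, $\delta(e\sigma)\,\delta\underline{\mu}^\dag$, and further terms paired against $\delta c$, $\delta\xi$, $\delta\xi^n$, $\delta\underline{\omega}_n$, and so on. Killing them requires correcting not just $\underline{e}_n^\dag$ (the only sector you flag) but essentially every antifield --- $e^\dag$, $\omega^\dag$, $\underline{\omega}_n^\dag$, $\underline{c}_n^\dag$, $\underline{\xi}^\dag$, $\underline{\xi}_n^\dag$ --- as well as the fields $c$ and $\underline{\omega}_n$, and even a shift $\widehat\omega\mapsto\widehat\omega+\nu$ with $\nu$ determined by $\epsilon_ne^{N-4}\mu^\dag d\xi^n=e^{N-3}\nu$. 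Constructing these explicit corrections and checking that they close up is the actual content of the proof, not deferrable bookkeeping: with the naive substitution the claimed identity $\psi^*\varpi_H=\varpi_s$ is simply false.

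The same issue propagates to the action. For instance the term $\frac{1}{(N-3)!}e^{N-3}\underline{e}_n\,d_{\widehat\omega}v$ produced by expanding $e^{N-3}\underline{e}_nF_{\omega}$ does not vanish (one has $e^{N-3}d_{\widehat\omega}v=-d_{\widehat\omega}(e^{N-3})\,v\neq0$ in general), nor do $\iota_{\xi}d_{\widehat\omega}v\,\underline{\widehat{\omega}}^\dag_n$, $[v,c]\,\underline{\widehat{\omega}}^\dag_n$, $\iota_{\xi}d_{\underline{\omega}_n}v\,\omega^\dag$, $\frac12\iota_{\xi}\iota_{\xi}d_{\widehat\omega}v\,\underline{c}_n^\dag$, and a long list of similar terms; they must cancel against the terms generated by the nontrivial corrections in $\psi$ described above. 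So the claim that ``all remaining $v$-dependence either vanishes by $e^{N-3}v=0$ or is reabsorbed into $S_r$'' is precisely where your plan would break down. To complete the argument you must first enlarge $\psi$ as indicated and only then run the order-by-order check in $v$.
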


In other words, the \emph{standard} BV theory is BV symplectomorphic to the \emph{restricted} one plus some additional parts proportional to $v$. We can then apply a BV pushforward to $(\mathcal{F}_H,\varpi_H, S_H)$, integrating $v$ in order to obtain the \emph{restricted} theory. This will be made explicit in Section \ref{s:BV-pushforward_Gravity_Hedgehog}.
}

\begin{proof}
\label{s:proofTheorem}
We prove the claim by explicitely showing a symplectomorphism $\psi$.

Let $\underline{v}^\dag$ as in \eqref{e:split_omegan_dag},  using Lemma \ref{lem:Omega2,2_d4} we can write $e^{N-4}\mu^\dag d\xi^n = e^{N-3} \alpha + \epsilon_n e^{N-4}[\beta, e]$. Since $\epsilon_n^2=0$, $\epsilon_n e^{N-4}\mu^\dag d\xi^n = \epsilon_n e^{N-3} \alpha= e^{N-3} \nu$ where we defined $\nu=\epsilon_n \alpha.$

Then the explicit expression of $\psi$ reads: 
    \begin{align*}
    \te &= e & 
    \underline{\te}_n^\dag &  = \underline{e}_n^\dag + \frac{1}{N-3} e^{N-4} \left( d_{\widehat{\omega}} (\epsilon_n \underline{\mu}^\dag) + \sigma \underline{\mu}^\dag+ \underline{\tau}^\dag (\nu+v)\right)\\
    &&&\phantom{=}+ X^{(a)}(Z_b \underline{\mu}^\dag)_a^{(b)} +
    \iota_{\underline{z}}((\nu_a+v_a) Z)^{(a)}+ \theta_N(5) e^{N-5}\epsilon_n \mu^\dag d\xi^n \underline{\tau}^\dag\\
    \underline{\te}_n & = \underline{e}_n & 
    \te^\dag & = e^\dag + ((\nu_a + v_a) Z)^{(a)} \\
    \widehat{\tom} & = \widehat{\omega} + \nu  &
    \underline{\tom}_n^\dag & =\underline{\omega}_n^\dag +  [\epsilon_n, \iota_{\xi}(Z_a \underline{\mu}^\dag)^{(a)}]\\
    \tv &= v & 
    \tom^\dag & = \omega^\dag  + [\epsilon_n, (Z_a \mu^\dag)^{(a)}\xi^n] \\
    \underline{\widetilde{v}}^\dag & = \underline{v}^\dag & 
    \underline{\tom}_n & = \underline{\omega}_n  + \iota_{\underline{z}}(\nu+ v) +X^{(a)}\underline{\mu}^\dag_a\\
    \txi{n} & = \xi^{n} &
    \tc & = c + \iota_{\xi}(\nu+v) + \iota_{z}(\nu+v) \xi^n + X^{(a)}\mu^\dag_a\xi^n\\
    \txi{a}& = \xi^{a} & 
    \underline{\tc}_n^\dag & = \underline{c}_n^\dag + [\epsilon_n, (Z_a \underline{\mu}^\dag)^{(a)}]\\
        && \underline{\txi{}}^\dag &= \underline{\xi}^{\dag} + \underline{c}_n^\dag (\nu+v) + d_{\omega}\epsilon_n (Z_a \underline{\mu}^\dag)^{(a)} + \nu [\epsilon_n, (Z_a \underline{\mu}^\dag)^{(a)}] \\
        && \underline{\txi{}}_n^{\dag} &= \underline{\xi}^{\dag}_n + X^{(a)}\underline{c}_{an}^\dag \mu^\dag + d(e^{N-4}\epsilon \underline{\tau}^\dag \mu^\dag) + \iota_z \underline{c}_n^\dag (\nu+v) + \iota_z(\nu+v) [\epsilon_n, (Z_a \underline{\mu}^\dag)^{(a)}]\\
        && &{\phantom{=}}+ d_{\omega_n}\epsilon_n (Z_a \underline{\mu}^\dag)^{(a)} + X^{(b)}\mu_{b}^\dag [\epsilon_n, (Z_a \underline{\mu}^\dag)^{(a)}]
    \end{align*}

where $\widetilde{Z}=Z= \omega^{\dag} + c_n^\dag \xi^n$ and $X= [c +\iota_{\xi}v, \epsilon_n]+ L_{\xi{}}^{\omega}(\epsilon_n)-d_{\omega_n} \epsilon_n\xi^{n}$ and $\theta_N(5)=0$ for $N=4$, $\theta_N(5)=\frac{1}{(N-4)!}$ for $N>4$.
The proof that $\psi^*(\varpi_H) = \varpi_s$ and $\psi^*(S_H)= S_s$ are long computations and are postponed to Appendix \ref{a:appendix}. Let us now just recall the key points (for $N=4$).

Using the decompositions \eqref{e:omegadecomp_d4},  $\omega= \widehat{\omega}+ v$ and \eqref{e:split_omegan_dag}, $\underline{\omega}_n^\dag= \underline{\widehat{\omega}}_n^\dag + \underline{v}^{\dag} $ in the standard BV gravity symplectic form \eqref{e:standard_symplectic_form_split}, we get
\begin{align} 
    \varpi_{s}
    &= \varpi_{r} + \int_{\Sigma \times I}\delta \widehat{\omega}  \delta \underline{v}^\dag +\delta v \delta \underline{\widehat{\omega}}^\dag_n +\delta v \delta  \underline{v}^\dag.
\end{align}

The difference with $\varpi_H$ \eqref{e:sympl_form_H} is then given by the terms $\delta \widehat{\omega}  \delta \underline{v}^\dag +\delta v \delta \underline{\widehat{\omega}}^\dag_n$. We can write the first as 
\begin{align*}
    \delta \widehat{\omega} \delta (\epsilon_n [e, \underline{\mu}^\dag]) = \delta ( \epsilon_n [\widehat{\omega},e]) \delta  \underline{\mu}^\dag + \cdots
\end{align*}
and then use \eqref{e:PCconstraints1}  getting 
\begin{align*}
    \delta ( \epsilon_n [\widehat{\omega},e]) \delta  \underline{\mu}^\dag = \delta (e \sigma) \delta  \underline{\mu}^\dag + \cdots
\end{align*} 
Similarly, for $\delta v \delta \underline{\widehat{\omega}}^\dag_n$ we can use \eqref{e:PCconstraints2} and get 
\begin{align*}
    \delta v \delta \underline{\widehat{\omega}}^\dag_n = \delta v \delta (e \underline{\tau}^\dag) + \cdots = \delta e \delta (v \tau^\dag) + \cdots .
\end{align*}
Then these two terms are compensated by the corresponding term arising from the transformation of $\underline{e}_n^\dag$.
The same constraints are used in the same way in the computation for the actions.
\end{proof}

\section{BV pushforward}\label{s:BV-pushforward_Gravity_Hedgehog}\label{s:hedge}

\subsection{The product structure}\label{s:gravity_hedgehog}
\begin{proposition}\label{p:gravity_hedgehog}
    Let $e_0$ be a reference tetrad and define $\mathcal{F}_f=T^*[1](V)$ where $V=Ker W_{e_0}$. Then $\mathcal{F}_H \rightarrow \mathcal{F}_r$ is a BV bundle as in Section \ref{s:BV_bundles} with typical fiber $\mathcal{F}_f$.
\end{proposition}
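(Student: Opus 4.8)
The plan is to read the bundle off directly from Theorem~\ref{t:split_sympl_form}, to identify its fibres by means of Lemma~\ref{lem:Omega2,2_d4}, and then to trivialise the resulting family of odd symplectic fibres using the coframe $e$ as a moving frame. By Theorem~\ref{t:split_sympl_form} we may work on $(\mathcal{F}_H,\varpi_H,S_H)$, where $\varpi_H=\varpi_r+\int_{\Sigma\times I}\delta v\,\delta\underline{v}^\dag$; thus a full set of coordinates on $\mathcal{F}_H$ is given by the fields paired by $\varpi_r$ — those of $\mathcal{F}_r$ — together with $(v,\underline{v}^\dag)$, and $\mathcal{F}_r$ sits inside $\mathcal{F}_H$ as the locus $\{v=0,\ \underline{v}^\dag=0\}$: this is how $\mathcal{F}_H$ and $\psi$ were built, the structural constraints \eqref{e:PCconstraints1}--\eqref{e:PCconstraints2} being solved, through the decompositions of Theorem~\ref{thm:omegadecomposition_d4} and Lemma~\ref{lem:Omega2,2_d4}, precisely by $v=0$ and $\underline{v}^\dag=0$, and $\psi$ restricting to the identity there. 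Let $\pi\colon\mathcal{F}_H\to\mathcal{F}_r$ forget $(v,\underline{v}^\dag)$: it is a vector bundle with zero section $\mathcal{F}_r$, and $\varpi_H$ restricts on each fibre to $\int_{\Sigma\times I}\delta v\,\delta\underline{v}^\dag$, so every fibre is a graded symplectic manifold with symplectic form of degree $-1$. It then remains to see that these fibres are symplectomorphic to $\mathcal{F}_f$ and that the identifications assemble into a bundle isomorphism of the ``simpler'' kind of Section~\ref{s:BV_bundles}.

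Over a point of $\mathcal{F}_r$ with coframe $e$, the field $v$ ranges over $V_e:=\ker W_{e^{N-3}}^{1,2}$ (the space ``$e^{N-3}v=0$'' of Theorem~\ref{thm:omegadecomposition_d4}), while $\underline{v}^\dag=\epsilon_n e^{N-4}[\underline{\mu}^\dag,e]=:\kappa_e(\underline{\mu}^\dag)$ with $\underline{\mu}^\dag$ ranging over the kernel of Lemma~\ref{lem:Omega2,2_d4}, on which $\kappa_e$ is injective and which is itself a copy of $V_e$. So the fibre is $V_e\times\mathrm{Im}\,\kappa_e$, with the symplectic form induced by $(v,\underline{v}^\dag)\mapsto\int_{\Sigma\times I}v\wedge\underline{v}^\dag$ (the inner product on $\mathcal{V}$ being used to land in top forms). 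The point I would prove, invoking the properties of the maps $W_{e^{N-3}}^{i,j}$ collected in \cite{C2024} and used in \cite{CS2019,CCS2020}, is that this pairing is \emph{perfect} — equivalently, that the complementarity of $\mathrm{Im}\,\kappa_e$ and $\mathrm{Im}(W_{e^{N-3}}^{1,1})$ in Lemma~\ref{lem:Omega2,2_d4} upgrades to nondegeneracy of $(v,Z)\mapsto\int_{\Sigma\times I}v\wedge\kappa_e(Z)$. Granting this, $\mathrm{Im}\,\kappa_e$ is the suitably density-twisted dual of $V_e$, so each fibre is, as a degree $-1$ symplectic manifold, the shifted cotangent bundle of $V_e$, i.e.\ a copy of $\mathcal{F}_f$.

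Finally I would trivialise the family $\{V_e\}$ over $\mathcal{F}_r$. Since $(\epsilon_n,e)$ is a pointwise basis of $\mathcal{V}$, it identifies $\mathcal{V}\xrightarrow{\sim}TM$ and thereby turns $W_{e^{N-3}}^{i,j}$ into a fixed ``wedge with a reference'' operator read through this identification; hence $e$ carries $V_e$ isomorphically onto a fixed model $V$, smoothly in $e$ and globally (the identifying datum being $e$ itself), and $e=e_0$ gives $V=V_{e_0}=\ker W_{e_0}$. Absorbing the resulting $e$-dependent conformal factor into a rescaling of $(v,\underline{v}^\dag)$ — equivalently, straightening the fibrewise symplectic form, the degree $-1$ symplectic forms on a fixed model forming a contractible set — makes the trivialisation symplectic. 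Assembling the fibrewise data, one obtains a bundle isomorphism $\Phi\colon\mathcal{F}_H\to\widetilde{\mathcal{F}}=\mathcal{F}_r\times\mathcal{F}_f$ over $\mathrm{id}_{\mathcal{F}_r}$ which is a symplectomorphism ($\varpi_r$ is untouched on the base and $\varpi_H$ goes to the canonical form of $\mathcal{F}_f$ on the fibre), exhibiting $\mathcal{F}_H\to\mathcal{F}_r$ as a BV bundle of the type of Section~\ref{s:BV_bundles} with $\widetilde{\mathcal{F}_1}=\mathcal{F}_r$ and typical fibre $\widetilde{\mathcal{F}_2}=\mathcal{F}_f$, as claimed.

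The main obstacle is the perfectness of the pairing between $V_e$ and $\mathrm{Im}\,\kappa_e$, and, to a lesser extent, the symplectic compatibility of the trivialisation: both are pointwise linear-algebra statements about the wedge maps $W_{e^{N-3}}^{i,j}$ and their behaviour under variation of the nondegenerate coframe, and this is exactly where the structural hypotheses — $(\epsilon_n,e)$ a basis and $g=\eta(e,e)$ nondegenerate — are used; everything else is bookkeeping on top of Theorem~\ref{t:split_sympl_form}.
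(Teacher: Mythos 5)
Your proposal is essentially correct and follows the same route as the paper: the natural projection $\pi\colon\mathcal{F}_H\to\mathcal{F}_r$ forgetting $(v,\underline{v}^\dag)$, identification of the fibre over a point with coframe $e$ as $\ker W_{e^{N-3}}^{1,2}$ together with its dual, and trivialisation by the unique pointwise linear map carrying $e$ to the reference $e_0$ (which is exactly the paper's $\Lambda_x$, acting as $\Lambda_x^2$ on $\wedge^2\mathcal{V}$-valued objects and intertwining $\ker W_e$ with $\ker W_{e_0}$). The one place where the two arguments genuinely diverge is the symplectic compatibility of the fibrewise identification. You treat $\underline{v}^\dag$ as an independent coordinate ranging over $\epsilon_n e^{N-4}[\,\cdot\,,e]$ applied to the kernel of Lemma~\ref{lem:Omega2,2_d4}, which forces you to (a) prove that the pairing between $\ker W_{e^{N-3}}^{1,2}$ and that image is perfect — a step you flag but do not carry out, although it does follow from the direct-sum statement of Lemma~\ref{lem:Omega2,2_d4} together with the nondegeneracy of the wedge pairing, or simply from the fact that Theorem~\ref{t:split_sympl_form} already asserts $\varpi_H$ is symplectic — and (b) ``absorb an $e$-dependent conformal factor,'' which as stated is vague. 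The paper sidesteps both points at once: it defines the trivialisation only on the base coordinate $v$ (via $\Lambda_x^2$) and then extends it to $\underline{v}^\dag$ as the cotangent lift, which is tautologically a symplectomorphism of $T^*[1]$ of the fibre and needs no separate nondegeneracy or rescaling argument. So your argument works once the perfect-pairing claim is closed (and it can be, by the route you indicate), but the cotangent-lift extension is the cleaner way to finish, and is what the paper actually does.
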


\begin{proof} 
    We define $\pi$ as the natural projection $\mathcal{F}_H \rightarrow \mathcal{F}_r$. We have to find a symplectomorphism $\pi^{-1}(\mathcal{F}_r)\rightarrow \mathcal{F}_r \times \mathcal{F}_f$. 
    For each $x \in M$, there exists a unique transformation $\Lambda_x$ such that $ \Lambda_x e = e_0$. Correspondingly, we have $Ker W_{e_0}^{(2,1)}= \Lambda_x^2 Ker W_e^{(2,1)} $.
    We then apply $\Lambda^2_x$ to $v_x$ and extend this to $v^\dag$ as a cotangent lift.
\end{proof}

\subsection{Nondegeneracy of the Gaussian integrand}

We have now proven that the standard formulation of BV gravity is symplectomorphic to a BV product. We can then perform a Gaussian integration on the fiber in order to obtain the restricted theory. This can be done thanks to the following lemma:
\begin{lemma}\label{l:quadratic_nondegenerate}
    The quadratic form in $v$ 
\begin{align*}
    \int_{\Sigma \times I} \frac{1}{2(N-3)!}\underline{e_n}e^{N-3} [v,v]
\end{align*}
is nondegenerate. 
\end{lemma}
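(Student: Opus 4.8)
The plan is to treat the expression $\int_{\Sigma\times I}\frac{1}{2(N-3)!}\,\underline{e_n}\,e^{N-3}[v,v]$ as a fiberwise quadratic form on $V_x = \mathrm{Ker}\,W_{e_x}^{(2,1)}$ and show that, for each point $x$, the associated symmetric bilinear pairing $B_x(v,w) = \frac{1}{(N-3)!}\,\underline{e_n}\,e^{N-3}[v,w]$ (a top-degree density-valued form, so that the integral makes sense) has trivial radical. Since the integrand is pointwise in the fiber variable $v$ (no derivatives of $v$ appear), nondegeneracy of the integral reduces to pointwise nondegeneracy of $B_x$ on the finite-dimensional space $V_x$. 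The first step is therefore to fix a point $x$, drop the integral, and work purely in linear algebra on $\bigwedge^2\mathcal V$ at that point, using the splitting $\mathcal V = \langle e(T\Sigma)\rangle \oplus \langle \epsilon_n\rangle$ furnished by the restricted-theory hypotheses.

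Next I would invoke Theorem~\ref{thm:omegadecomposition_d4}: the elements of $V_x = \mathrm{Ker}\,W_{e}^{(2,1)}$ are exactly the $v\in\Omega^1(\Sigma,\wedge^2\mathcal V)$ with $e^{N-3}v = 0$, and this space is a complement to the image of the structural-constraint map, so $\omega\mapsto(\widehat\omega,v)$ is an isomorphism. The key observation is that the pairing $B_x$ is, up to the fixed nonzero density factor coming from $\underline{e_n} = \iota_{\underline z}e + \underline\lambda\,\epsilon_n$ (whose $\epsilon_n$-component $\underline\lambda$ is nonzero precisely because $\be$ is nondegenerate), exactly the restriction to $V_x$ of a bilinear form of the type already analyzed in the PC literature — it is the obstruction pairing controlling whether $e^{N-3}[v,w]$, viewed through the wedge by $\underline{e_n}$, can vanish identically in $w$. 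Concretely, I would show: if $v\in V_x$ satisfies $\underline{e_n}\,e^{N-3}[v,w] = 0$ for all $w\in V_x$, then combining this with $e^{N-3}v = 0$ forces $v = 0$. The cleanest route is to pass to a basis of $\wedge^2\mathcal V$ adapted to $e$ and $\epsilon_n$, write $v$ in components, use $e^{N-3}v=0$ to pin down which components survive (these are the $\epsilon_n$-tangential pieces), and then check that $[v,w]$ for those components, wedged with $\underline{e_n}\,e^{N-3}$, is a perfect pairing on the surviving subspace — this last point is essentially a nondegeneracy statement for the Lie bracket of $\mathfrak{so}$ composed with the top wedge, which is the standard ``$e^{N-3}$ is injective on the relevant complement'' fact from \cite{C2024, CS2019}.

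The main obstacle I anticipate is the case distinction in low dimension: for $N=4$ the spaces $\wedge^2\mathcal V$ and the kernel $\mathrm{Ker}\,W_{e}^{(2,1)}$ have small rank and the argument is a direct $4\times 4$-type computation, whereas for $N>4$ one must carefully track the combinatorics of $e^{N-3}\wedge(\cdot)$ and argue that the bracket pairing does not degenerate even though $W_{e^{N-3}}^{1,1}$ has a larger kernel. I would handle this uniformly by reducing, via the adapted basis, to the statement that the composite $V_x\otimes V_x \to \Omega^{top}$, $v\otimes w\mapsto \underline{e_n}\,e^{N-3}[v,w]$, is the pairing induced by the metric $\eta$ on $\mathcal V$ together with the fixed volume, hence nondegenerate exactly when $g = \eta(e,e)$ is nondegenerate — which is assumed in Definition~\ref{def:restrictedPC-BV}. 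If a slick invariant argument proves elusive, the fallback is an explicit index computation in a local orthonormal frame where $e$ is the identity and $\epsilon_n$ is the last basis vector, at which point $[v,w]$ and the wedge products become manifestly computable and the Gram matrix of $B_x$ on $V_x$ is seen to be (block-)diagonal with nonzero entries.
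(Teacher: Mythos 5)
Your fallback route is essentially the paper's proof: fix the adapted frame $e_\mu^a=\delta_\mu^a$, $(e_n)^a=\delta_N^a$, use the component characterization of $\mathrm{Ker}\,W_{e}$ (the paper quotes this as Lemma~\ref{l:components_of_v}) to identify the surviving components of $v$, and compute the Gram matrix of the pairing explicitly, treating $N=4$ and $N>4$ separately. Your preliminary reduction to pointwise nondegeneracy of the fiberwise bilinear form is sound and is implicit in the paper, since no derivatives of $v$ appear in the integrand.

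Two caveats on the decisive step. First, your closing claim that the Gram matrix ``is seen to be (block-)diagonal with nonzero entries'' is not what actually happens and would not suffice anyway: the matrix is block-diagonal, but the blocks are not diagonal (for $N=4$ one gets a $3\times3$ block with \emph{zero} diagonal, $\left(\begin{smallmatrix}0&-1&1\\-1&0&-1\\1&-1&0\end{smallmatrix}\right)$, direct-summed with $-2\,\mathrm{Id}_3$; for $N>4$ one of the block types is $-\tfrac12(3\,\mathrm{Id}+ww^T)$), so nondegeneracy genuinely requires computing the block determinants, which is where the real work of the paper's proof lies. Second, your preferred ``slick'' argument --- that the pairing is the one induced by $\eta$ and hence nondegenerate exactly when $g=\eta(e,e)$ is --- is not substantiated and is not borne out by the computation: the $\mathfrak{so}$-bracket contracted with $\eta$ and the $\epsilon$-symbols mixes components in a way that does not reduce to a restriction of $\eta$, and the constraint $\sum_{c\neq N,e} v_c^{ce}=0$ must be solved before one even has an unconstrained basis on which to evaluate the form. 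So keep the explicit computation as the actual proof rather than as a fallback.
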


In order to prove this lemma, we fix a specific set of coordinates in which $e_{\mu}^{a}= \delta_{\mu}^{a}$ and $(e_n)^{a}= \delta_N^{a}$. Using this basis the components of $v$ have been characterized as follows:
\begin{lemma}\cite{CCS2020}\label{l:components_of_v}
    Let $v \in \Omega(\Sigma, \wedge^2 \mathcal{V})$ such that $ev=0$. Then in the basis for which $e_{\mu}^{a}= \delta_{\mu}^{a}$ and $(e_n)^{a}= \delta_N^{a}$, the components of $v$ satisfy the following set of equations:
    \begin{align*}
        v_{i}^{N j} =0 \qquad & 1 \leq i,j \leq N-1 \; i \neq j \\
        \sum_{i\neq N, i\neq j} v_{i}^{iN} =0 \qquad & 1 \leq j \leq N-1 \\
        \sum_{i\neq N, i\neq j} v_{i}^{ij} =0 \qquad & 1 \leq j \leq N-1.
    \end{align*}
\end{lemma}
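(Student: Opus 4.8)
The plan is to prove the lemma by a direct computation in components, showing that the single kernel condition on $v$ is \emph{equivalent}, coefficient by coefficient, to the three displayed families. First, note that for general $N$ the hypothesis written as $ev=0$ is the condition $e^{N-3}\wedge v=0$ defining $v\in\ker W_{e^{N-3}}^{1,2}$ (it reduces to $e\wedge v=0$ precisely when $N=4$). I would work in the adapted basis $u_1,\dots,u_N$ of $\mathcal V$ for which $e=\sum_{\mu=1}^{N-1}u_\mu\,dx^\mu$ and $e_n=u_N$, and expand $v=\sum_{i=1}^{N-1}v_i\,dx^i$ with $v_i=\sum_{a<b}v_i^{ab}\,u_a\wedge u_b$, the indices $a,b$ running in $\{1,\dots,N\}$.

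The object $e^{N-3}\wedge v$ is an $(N-2)$-form on the $(N-1)$-dimensional $\Sigma$, valued in $\wedge^{N-1}\mathcal V$; both the de Rham part and the internal part are top-minus-one, so a basis of its coefficients is labelled by a pair $(k,m)$, where $dx^1\wedge\cdots\wedge\widehat{dx^{k}}\wedge\cdots\wedge dx^{N-1}$ is the surviving form monomial (omitted $\Sigma$-index $k$) and $u_1\wedge\cdots\wedge\widehat{u_m}\wedge\cdots\wedge u_N$ is the surviving internal monomial (omitted internal index $m$); there are exactly $N(N-1)$ such coefficients. The key structural point I would isolate is that $e^{N-3}$ contributes the internal wedge $u_{\mu_1}\wedge\cdots\wedge u_{\mu_{N-3}}$ carrying precisely the indices $\{1,\dots,N-1\}\setminus\{k,i\}$, so that when it is wedged with $v_i$ only the three components $v_i^{ik}$, $v_i^{iN}$, $v_i^{kN}$ survive, landing respectively in the internal monomials that omit $m=N$, $m=k$, and $m=i$.

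Collecting the coefficient of each internal monomial then yields the three families at once. The $m=N$ coefficient sums the $v_i^{ik}$ over $i\in\{1,\dots,N-1\}\setminus\{k\}$, giving $\sum_{i\ne N,\,i\ne k}v_i^{ik}=0$; the $m=k$ coefficient sums the $v_i^{iN}$, giving $\sum_{i\ne N,\,i\ne k}v_i^{iN}=0$; and for $m=\ell$ with $\ell\in\{1,\dots,N-1\}\setminus\{k\}$ only the single term $i=\ell$ occurs, giving $v_\ell^{kN}=0$, equivalently $v_\ell^{Nk}=0$. Letting $k$ run over $\{1,\dots,N-1\}$ and renaming $k\mapsto j$ reproduces exactly the stated equations, the first family arising from the pairs with $i\ne j$. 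A count confirms nothing is lost: the two trace-type families contribute $2(N-1)$ relations and the first family $(N-1)(N-2)$, for a total of $N(N-1)$, matching the number of coefficients, so the kernel condition is in fact equivalent to the three families.

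The one place requiring care—and the main, though routine, obstacle—is the sign bookkeeping from the two antisymmetrizations: reordering $(\mu_1,\dots,\mu_{N-3},i)$ and $(k,N)$ into increasing order introduces permutation signs depending on $i$ and $k$. I expect these signs to be constant within each family (up to an overall factor), which is exactly what lets the bare sums $\sum_i v_i^{ik}$ and $\sum_i v_i^{iN}$ appear rather than sign-twisted combinations. I would verify this by tracking the parity of the relevant permutation, checking first the transparent case $N=4$ (where $e^{N-3}=e$ and the four coefficients for each omitted $k$ reproduce $v_i^{4j}=0$ together with the two length-two traces) and then confirming the uniform-sign claim for general $N$ by the same parity argument.
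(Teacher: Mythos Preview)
The paper does not actually prove this lemma; it is stated with a citation to \cite{CCS2020} and then used as input to the proof of Lemma~\ref{l:quadratic_nondegenerate}. So there is no in-paper argument to compare against.

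Your approach is correct and is essentially the natural one: interpret the hypothesis as $e^{N-3}\wedge v=0$ (so that it really is $ev=0$ only for $N=4$), pass to the adapted basis, and read off the $(N-1)\cdot N$ scalar equations indexed by the omitted form index $k$ and the omitted internal index $m$. Your identification of which components of $v_i$ survive the wedge with $u_{I_i}$ is right, and the trichotomy $m=N$, $m=k$, $m=\ell\in\{1,\dots,N-1\}\setminus\{k\}$ yields precisely the three displayed families after renaming $k\mapsto j$. The counting check is a nice sanity confirmation.

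On the sign point you flag: it does work, and for a clean reason. In the two trace-type families ($m=N$ and $m=k$) the index $i$ appears both as the extra form index in $dx^{I_i}\wedge dx^i$ and as one of the two internal indices in $u_{I_i}\wedge u_i\wedge u_\bullet$ (with $\bullet\in\{k,N\}$). The permutation needed to slot $i$ into its place is literally the same on both sides, so those $i$-dependent signs cancel, leaving only an overall sign depending on $k$ (and on $\bullet$). Hence the sums $\sum_{i\ne k} v_i^{ik}$ and $\sum_{i\ne k} v_i^{iN}$ appear untwisted, as you anticipated. For the third family there is a single term, so the sign is irrelevant. Your $N=4$ check already exhibits this mechanism.
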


\begin{proof}[Proof of Lemma \ref{l:quadratic_nondegenerate}.]
We start from $N=4$ (see also \cite{CC2025}).
    In components, we have the following expression of the quadratic form ($1\leq a,b,c,d,e,f \leq 4$, $1\leq \mu, \nu, \rho\leq 3$):
\begin{align*}
    (e_n)^{a} e_{\mu}^{b} [v,v]^{cd}_{\nu\rho}\epsilon^{\mu\nu\rho}\epsilon_{abcd} &=
    (e_n)^{a} e_{\mu}^{b} v_{\nu}^{ce}v_{\rho}^{df}\eta_{ef}\epsilon^{\mu\nu\rho}\epsilon_{abcd}\\
    &= \delta_4^{a}\delta_{\mu}^{b}v_{\nu}^{ce}v_{\rho}^{df}\eta_{ef}\epsilon^{\mu\nu\rho}\epsilon_{abcd}\\
    &= v_{\nu}^{ce}v_{\rho}^{df}\eta_{ef}\epsilon^{b\nu\rho}\epsilon_{bcd},
\end{align*}
where now we also have $1\leq b,c,d \leq 3$. Using now the identity
\begin{align*}
    \epsilon^{b\nu\rho}\epsilon_{bcd} = \delta_{c}^{\nu}\delta_{d}^{\rho} - \delta_{d}^{\nu}\delta_{c}^{\rho},
\end{align*}
we get
\begin{align*}
    (e_n)^{a} e_{\mu}^{b} [v,v]^{cd}_{\nu\rho}\epsilon^{\mu\nu\rho}\epsilon_{abcd} &= v_{\nu}^{ce}v_{\rho}^{df}\eta_{ef}\left(\delta_{c}^{\nu}\delta_{d}^{\rho} - \delta_{d}^{\nu}\delta_{c}^{\rho}\right)\\
    & = v_{c}^{ce}v_{d}^{df}\eta_{ef}- v_{d}^{ce}v_{c}^{df}\eta_{ef}.
\end{align*}
From Lemma \ref{l:components_of_v} we deduce that the free components of $v$ are parametrized by 
\begin{align}\label{e:v_components}
    &v_{1}^{23} &  &v_{2}^{13} & &v_{3}^{12} & &v_{1}^{12} & &v_{1}^{13} & &v_{2}^{21}
\end{align}
and we have $v_{3}^{32}= -v_{1}^{12}$, $v_{2}^{23}= -v_{1}^{13}$ and $v_{3}^{31}= -v_{2}^{21}$, while all the other components are zero.

Then we have (note that now also $1\leq e,f\leq 3$)
\begin{align*}
    (e_n)^{a} e_{\mu}^{b} [v,v]^{cd}_{\nu\rho}\epsilon^{\mu\nu\rho}\epsilon_{abcd} 
    & = v_{c}^{ce}v_{d}^{df}\eta_{ef}- v_{d}^{ce}v_{c}^{df}\eta_{ef}\\
    &= v_{c}^{c1}v_{d}^{d1} + v_{c}^{c2}v_{d}^{d2} + v_{c}^{c3}v_{d}^{d3}- v_{d}^{c1}v_{c}^{d1}-v_{d}^{c2}v_{c}^{d2}-v_{d}^{c3}v_{c}^{d3}\\
    &= v_{2}^{21}v_{2}^{21} + 2v_{2}^{21}v_{3}^{31}+ v_{3}^{31}v_{3}^{31} + v_{1}^{12}v_{1}^{12} + 2v_{1}^{12}v_{3}^{32} + v_{3}^{32}v_{3}^{32}\\
    &\phantom{=} + v_{1}^{13}v_{1}^{13}+ 2v_{2}^{23}v_{1}^{13}+ v_{2}^{23}v_{2}^{23}- v_{2}^{21}v_{2}^{21}- 2v_{2}^{31}v_{3}^{21}- v_{3}^{31}v_{3}^{31}\\
    &\phantom{=} -v_{1}^{12}v_{1}^{12}-2v_{1}^{32}v_{3}^{12}-v_{3}^{32}v_{3}^{32} -v_{1}^{13}v_{1}^{13}-2v_{1}^{23}v_{2}^{13} -v_{2}^{23}v_{2}^{23}\\
    &= 2 \left(v_{2}^{21}v_{3}^{31}+v_{1}^{12}v_{3}^{32}+ v_{2}^{23}v_{1}^{13}-v_{2}^{31}v_{3}^{21}-v_{1}^{32}v_{3}^{12}-v_{1}^{23}v_{2}^{13} \right)\\
    &= 2 \left(-(v_{2}^{21})^2-(v_{1}^{12})^2- (v_{1}^{13})^2-v_{2}^{13}v_{3}^{12}+v_{1}^{23}v_{3}^{12}-v_{1}^{23}v_{2}^{13} \right).
\end{align*}
Then, expressing the quadratic form as a matrix with respect to the basis \eqref{e:v_components}, we get
\begin{center}
\begin{align*}
    A=\left(
    \begin{tabular}{cccccc}
         0  & -1 & 1  & 0  & 0  & 0  \\
         -1 & 0  & -1 & 0  & 0  & 0  \\
         1  & -1 & 0  & 0  & 0  & 0  \\
         0  & 0  & 0  & -2 & 0  & 0  \\
         0  & 0  & 0  & 0  & -2 & 0  \\
         0  & 0  & 0  & 0  & 0  & -2  
    \end{tabular}
    \right).
\end{align*}
\end{center}
Then we can compute the determinant:
\begin{align*}
    \det A = - \det 
    \left(
    \begin{tabular}{ccc}
         0  & -1 & 1   \\
         -1 & 0  & -1  \\
         1  & -1 & 0   
    \end{tabular}
    \right)= - \det 
    \left(
    \begin{tabular}{cc}
         -1  & -1  \\
         1   & 0   
    \end{tabular}
    \right) - \det 
    \left(
    \begin{tabular}{cc}
         -1  &  0  \\
         1   & -1   
    \end{tabular}
    \right)= -2 \neq 0.
\end{align*}

Let now $N>4$. 
    In components, we have the following expression of the quadratic form ($1\leq a_1,\dots a_N, b,c \leq N$, $1\leq \mu_1, \dots \mu_{N-1} \leq N-1$):
\begin{align*}
    (e_n)^{a_1} e_{\mu_1}^{a_2} \dots e_{\mu_{N-3}}^{a_{N-2}} &[v,v]^{a_{N-1}a_{N}}_{\mu_{N-2}\mu_{N-1}}\epsilon^{\mu_1\dots\mu_{N-1}}\epsilon_{a_1\dots a_N} \\
    &=(e_n)^{a_1} e_{\mu_1}^{a_2} \dots e_{\mu_{N-3}}^{a_{N-2}}  v_{\mu_{N-2}}^{a_{N-1}b}v_{\mu_{N-1}}^{a_N c}\eta_{bc}\epsilon^{\mu_1\dots\mu_{N-1}}\epsilon_{a_1\dots a_N}\\
    &= \delta_N^{a_1}\delta_{\mu_1}^{a_2}\dots \delta_{\mu_{N-3}}^{a_{N-2}}  v_{\mu_{N-2}}^{a_{N-1}b}v_{\mu_{N-1}}^{a_N c}\eta_{bc}\epsilon^{\mu_1\dots\mu_{N-1}}\epsilon_{a_1\dots a_N}\\
    &= v_{\mu_{N-2}}^{a_{N-1}b}v_{\mu_{N-1}}^{a_N c}\eta_{bc}\epsilon^{a_2\dots a_{N-2}\mu_{N-2}\mu_{N-1}}\epsilon_{a_2\dots a_{N-2} a_{N-1} a_N}
\end{align*}
where now we also have $1\leq a_2\dots  a_N \leq N-1$. Using the identity
\begin{align*}
\epsilon^{a_2\dots a_{N-2}\mu_{N-2}\mu_{N-1}}\epsilon_{a_2\dots a_{N-2} a_{N-1} a_N}
     = \frac{1}{N-2}\left(\delta_{a_{N-1}}^{\mu_{N-2}}\delta_{a_N}^{\mu_{N-1}} - \delta_{a_N}^{\mu_{N-2}}\delta_{a_{N-1}}^{\mu_{N-1}}\right),
\end{align*}
and denoting by $\nu= \mu_{N-2}$, $\rho= \mu_{N-1}$, $c=a_{N-1}$, $d= a_{N}$, modulo a factor, we get
\begin{align*}
(e_n)^{a_1} e_{\mu_1}^{a_2} \dots e_{\mu_{N-3}}^{a_{N-2}} &[v,v]^{a_{N-1}a_{N}}_{\mu_{N-2}\mu_{N-1}}\epsilon^{\mu_1\dots\mu_{N-1}}\epsilon_{a_1\dots a_N} \\
     &= v_{\nu}^{ce}v_{\rho}^{df}\eta_{ef}\left(\delta_{c}^{\nu}\delta_{d}^{\rho} - \delta_{d}^{\nu}\delta_{c}^{\rho}\right)\\
    & = v_{c}^{ce}v_{d}^{df}\eta_{ef}- v_{d}^{ce}v_{c}^{df}\eta_{ef}\\
    & = v_{c}^{ce}v_{d}^{de}- v_{d}^{ce}v_{c}^{de}
\end{align*}
where in the last line $1\leq c,d,e  \leq N-1$, they are all different and summation over $c,d,e$ is understood.
From Lemma \ref{l:components_of_v} we deduce that the free components of $v$ are parametrized by 
\begin{align}\label{e:v_components2}
    &v_{c}^{de} &  &v_{c}^{ce} 
\end{align}
where $\sum_{c\neq N, c\neq e} v_{c}^{ce} =0$. Let us begin with the second term of the sum $-v_{d}^{ce}v_{c}^{de}$. If we select three different indexes $c,d,e$, the corresponding components $v_{c}^{de}$, $v_{d}^{ce}$, and $v_{e}^{cd}$ appear in the terms
\begin{align*}
    - v_{c}^{de} v_{d}^{ce}- v_{c}^{ed}v_{e}^{cd}- v_{d}^{ec}v_{e}^{dc} = - v_{c}^{de} v_{d}^{ce}+ v_{c}^{de}v_{e}^{cd}- v_{d}^{ce}v_{e}^{cd}.
\end{align*}
The matrix associated to this part of the quadratic form reads
\begin{align*}
    A=\frac{1}{2}\left(
    \begin{tabular}{ccc}
         0  & -1 & 1    \\
         -1 & 0  & -1   \\
         1  & -1 & 0    \\
    \end{tabular}
    \right),
\end{align*}
and its determinant is 1. Counting all possible triples $c,d,e$ we get $\binom{3}{N-1}$ blocks of this type.

Let us now consider the terms $v_{c}^{ce}v_{d}^{de}$ and let us fix $e$. Let now $m = \max (1, \dots, N-1)$ with $m \neq e$. Then from the equation $\sum_{c\neq N, c\neq e} v_{c}^{ce} =0$ we get 
\begin{align*}
    v_{m}^{me} = -\sum_{c\neq N, e, m} v_{c}^{ce},
\end{align*}
where all the other $v_{c}^{ce}$ are now independent. Hence we get
\begin{align*}
    \sum_{\substack{d,c\neq e \\ d \neq c}} v_{c}^{ce}v_{d}^{de} &= \sum_{\substack{d,c\neq e, m \\ d \neq c}} v_{c}^{ce}v_{d}^{de} + 2\sum_{c\neq e,m} v_{c}^{ce}v_{m}^{me} \\
    &= \sum_{\substack{d,c\neq e, m \\ d \neq c}} v_{c}^{ce}v_{d}^{de} -2 \sum_{c,d\neq e,m} v_{c}^{ce}v_{d}^{de}.
\end{align*}
Therefore the corresponding $(N-3) \times (N-3)$-matrix is 
\begin{align*}
    B=-\frac{1}{2}\left(
    \begin{tabular}{cccc}
         4  & 1  & $\hdots$ & 1    \\
         1  & 4  & $\hdots$ & 1    \\
         $\vdots$ & $\vdots$ & $\ddots$ & $\vdots$\\
         1  & 1 & $\hdots$ & 4    
    \end{tabular}
    \right).
\end{align*}
We can compute its determinant as the sum of an invertible matrix ($3$Id) and the outer product of a vector $w =(1, \cdots, 1)$ with its transposed:
\begin{align*}
    B = -\frac{1}{2}\left( 3 \mathrm{Id} + w w^T \right).
\end{align*}
Hence 
\begin{align*}
    \det B =\left( -\frac{1}{2}\right)^{N-3}\left(1 + \frac{1}{3}w^T \mathrm{Id} w\right)3^{N-3} \det \mathrm{Id} = \left(-\frac{3}{2}\right)^{N-3} \left(1 + \frac{N-3}{3}\right)= \left(-\frac{3}{2}\right)^{N-3} \frac{N}{3}.
\end{align*}
Note that in the matrix representing the quadratic form there are $(N-1)$ such blocks. Hence, since the determinant is always different from zero, we get the desired claim.
\end{proof}

Collecting all the information, we get the following theorem.

\begin{theorem}
Let 
\begin{align*}
    \widetilde{\mathcal{P}_V} : \text{Dens}^{\frac{1}{2}} \mathcal{F}_s \rightarrow \text{Dens}^{\frac{1}{2}} \mathcal{F}_r
\end{align*}
be defined by
\begin{align*}
    \text{Dens}^{\frac{1}{2}} \mathcal{F}_s \overset{\psi^*}{\longrightarrow} \text{Dens}^{\frac{1}{2}} \mathcal{F}_H \overset{\mathcal{P}_V}{\longrightarrow} \text{Dens}^{\frac{1}{2}} \mathcal{F}_r
\end{align*}
where $\psi$ is defined respectively in Section \ref{s:proofTheorem} and $\mathcal{P}_V$ is the BV pushforward with respect to the Lagrangian submanifold $V   \coloneqq \{\underline{v}^\dag = 0\}$ inside $T^*[1]V$. Then we have that
\begin{align*}
        \mu_r^{\frac{1}{2}} e^{\frac{i}{\hbar} S_r} = \widetilde{\mathcal{P}_V} \left( \mu_s^{\frac{1}{2}} e^{\frac{i}{\hbar} S_s}\right)
\end{align*}
where $\mu_r^{\frac{1}{2}}$ and $\mu_s^{\frac{1}{2}}$ are suitable half-densities on $\mathcal{F}_r$ and $\mathcal{F}_s$ respectively.
\end{theorem}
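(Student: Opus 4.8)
The strategy is to realise $\widetilde{\mathcal{P}_V}=\mathcal{P}_V\circ\psi^*$ as the composition of a transport of the reference half-density along the symplectomorphism $\psi$ and an honest fiber integral that is trivial in the odd direction and Gaussian in the even one. By Theorem~\ref{t:split_sympl_form}, $\psi\colon(\mathcal{F}_s,\varpi_s,S_s)\to(\mathcal{F}_H,\varpi_H,S_H)$ is a BV symplectomorphism, so the induced isomorphism $\psi^*$ on half-densities is a chain map and carries $\mu_s^{\frac12}e^{\frac{i}{\hbar}S_s}$ to $\mu_H^{\frac12}e^{\frac{i}{\hbar}S_H}$, where $\mu_H^{\frac12}$ denotes $\mu_s^{\frac12}$ transported along $\psi$. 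Thus it suffices to compute $\mathcal{P}_V\bigl(\mu_H^{\frac12}e^{\frac{i}{\hbar}S_H}\bigr)$.

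By Proposition~\ref{p:gravity_hedgehog}, the projection $\pi\colon\mathcal{F}_H\to\mathcal{F}_r$ is a BV bundle with typical fiber $\mathcal{F}_f=T^*[1]V$, and the trivialisation built there from the family $\Lambda_x^2$ puts us in the simplified ``BV hedgehog'' situation of Section~\ref{s:BV_bundles}, so that Theorem~\ref{t:BV-pushforward_properties} and Corollary~\ref{c:BV-pushforward_QME} apply to $\mathcal{P}_V$. I would choose the reference Berezinian on $\mathcal{F}_H$ in the product form $\mu_H=\mu'\,\mu_f$ adapted to this trivialisation, and take as Lagrangian the zero section $V=\{\underline{v}^\dag=0\}$ of $T^*[1]V$; with these choices $\mathcal{P}_V$ integrates out the even fiber coordinate $v$, the fiber antifield $\underline{v}^\dag$ being set to zero on $V$. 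Corollary~\ref{c:BV-pushforward_QME} already guarantees that the output is of the form $(\mu')^{\frac12}e^{\frac{i}{\hbar}S_1}$ for some $S_1\in C^\infty(\mathcal{F}_r)[[\hbar]]$; the content of the theorem is to identify $S_1$, up to a redefinition of the half-density, with the classical $S_r$.

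The core step is the fiber integral itself. Restricting $S_H$ to $\{\underline{v}^\dag=0\}$ annihilates the term bilinear in $(v,\underline{v}^\dag)$ and leaves
\[
S_H|_{V}=S_{r}+\int_{\Sigma\times I}\frac{1}{2(N-3)!}\underline{e_n}\,e^{N-3}[v,v],
\]
i.e.\ $S_r$, which is independent of $v$, plus a quadratic form in $v$ whose coefficients live on the base $\mathcal{F}_r$. The BV integral over $V$ is therefore a genuine Gaussian integral in the bosonic variable $v$; by Lemma~\ref{l:quadratic_nondegenerate} (using the basis of components of $v$ from Lemma~\ref{l:components_of_v}) the associated matrix $Q[e]$ is nondegenerate at every point of $\mathcal{F}_r$, where the induced metric $g$ is nondegenerate by definition of $\mathcal{F}_r$. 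Since the $v$-dependence is purely quadratic, the stationary-phase evaluation is exact with no higher-order-in-$\hbar$ corrections: the integral equals $e^{\frac{i}{\hbar}S_r}$ times the one-loop factor $(\det Q[e])^{-\frac12}$, up to a constant power of $2\pi i\hbar$ and a signature phase. Setting $\mu_r^{\frac12}\coloneqq(\mu')^{\frac12}(\det Q[e])^{-\frac12}$, which is nowhere vanishing on $\mathcal{F}_r$ since $\det Q[e]\neq0$ there, one obtains $\mathcal{P}_V\bigl(\mu_H^{\frac12}e^{\frac{i}{\hbar}S_H}\bigr)=\mu_r^{\frac12}e^{\frac{i}{\hbar}S_r}$, and composing with $\psi^*$ gives the claim with $\mu_s^{\frac12}$ and $\mu_r^{\frac12}$ so constructed.

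The main obstacle I expect is not the Gaussian integration but the bookkeeping of Berezinians and half-densities: one must check that the reference half-density on $\mathcal{F}_s$ can be chosen so that transport by $\psi$ and the product factorisation $\mu_H=\mu'\mu_f$ over $\mathcal{F}_r$ are simultaneously compatible, that the $e$-dependent trivialisation of the fiber from Proposition~\ref{p:gravity_hedgehog} does not spoil this compatibility, and that the resulting one-loop factor $(\det Q[e])^{-\frac12}$ is globally well defined on $\mathcal{F}_r$ — for which the pointwise nondegeneracy of Lemma~\ref{l:quadratic_nondegenerate}, together with the nondegeneracy of $g$ built into the definition of $\mathcal{F}_r$, is precisely what is needed.
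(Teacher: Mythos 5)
Your proposal is correct and follows essentially the same route as the paper, which proves this theorem simply by ``collecting all the information'': the symplectomorphism $\psi$ of Theorem~\ref{t:split_sympl_form}, the BV bundle/product structure of Proposition~\ref{p:gravity_hedgehog} placing one in the setting of Section~\ref{s:BV_bundles}, and the nondegeneracy of the quadratic form in $v$ from Lemma~\ref{l:quadratic_nondegenerate}, with the resulting Gaussian determinant absorbed into the ``suitable'' half-densities. Your write-up in fact spells out the fiber integration and the bookkeeping of Berezinians more explicitly than the paper does.
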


 This theorem shows that the restricted BV theory can be obtained as the BV pushforward of the standard BV theory. 

\begin{remark}
    In this paper we have asssumed the cosmological constant $\Lambda=0$. 
The same construction holds for a non vanishing $\Lambda$, just by adding a term 
\begin{align*}
    \int_{\Sigma \times I} \frac{1}{(N-1)!} \underline{e}_n e^{N-1} \Lambda
\end{align*}
to $S_r$.
\end{remark}

\section{Dealing with the boundary}\label{s:boundary}
In the previous sections, we have ignored the fact that the interval $I$ has boundary points. Two simple ways out are either to work on $\Sigma\times S^1$ or to take $I$ as an open interval (we can even take $I=\mathbb{R}$) and to assume fast decay of the fields towards the boundary of $I$. In any of these two cases, the results of the paper hold exactly as stated so far.

It is however interesting to consider also the case when $I$ is a compact interval (we also assume that $\Sigma$ is a closed manifold). In this case we have to deal with the boundary structure and to impose appropriate boundary conditions when performing the BV pushforward.

\subsection{The pre--BV-BFV formalism}
If a BV theory, for a closed space--time manifold, is put on a compact manifold with boundary, the master equation may be spoiled. Actually, what typically does not hold anymore is the condition that the cohomological vector field $Q$ is the hamiltonian vector field of the BV action $S$: in fact, the original equation $\iota_Q\omega=\delta S$, true for a closed space--time manifold, may now be spoiled by boundary terms arising by integration by parts in the computation of the functional derivative $\delta S$. We recollect here the details of the construction to circumvent this problem (this is an adapted reformulation of the construction of \cite{CMR2012,CMR2012b,CMR2015}).

\newcommand{\ualpha}{{\underline\alpha}}
\newcommand{\uvarpi}{{\underline\varpi}}
\newcommand{\uS}{{\underline S}}

\newcommand{\talpha}{{\widetilde\alpha}}
\newcommand{\tvarpi}{{\widetilde\varpi}}
\newcommand{\tS}{{\widetilde S}}
\newcommand{\tQ}{{\widetilde Q}}
\newcommand{\tF}{{\widetilde{\mathcal{F}}}}

\newcommand{\id}{{\mathrm{id}}}

\newcommand{\calY}{{\mathcal Y}}
\newcommand{\de}{\partial}

\begin{definition}[Relaxed BV manifold] 
A relaxed BV manifold is a quadruple $(\mathcal{F},S,\varpi,Q)$, where $\mathcal{F}$ is a graded manifold and $S,\varpi,Q$ are respectively a function, a closed $2$-form, and a vector field on $\mathcal{F}$ satisfying the following:
\begin{itemize}
\item $Q$ is cohomological, i.e., it is odd of degree $1$ and $[Q,Q]=0$;
\item $\varpi$ is odd of degree $-1$ and $S$ is even of degree $0$.\footnote{A generalization of this, called relaxed BF$^{k}$V manifold, replaces this condition by saying that $S$ has degree $k$, $\omega$ has degree $k-1$, and they both have parity equal to their degree modulo $2$.}
\end{itemize}
\end{definition}
\noindent Given this, we define
\begin{align*}
\ualpha & \coloneqq \iota_Q\varpi-\delta S,\\
\uvarpi & \coloneqq \delta\ualpha = - L_Q\varpi.
\end{align*}
Immediate consequences of Cartan's calculus, degree considerations, and the condition $[Q,Q]=0$ are the following:
\begin{enumerate}
\item $\ualpha$ and $\uvarpi$ are even of degree $0$.
\item There exists a unique odd function $\uS$ of degree $1$ satisfying $\iota_Q\uvarpi=\delta\uS$.
\end{enumerate}
This shows that $(\mathcal{F},\uS,\ualpha,Q)$ is an exact preBFV manifold according to the following
\begin{definition}[BFV]
A preBFV manifold is a quadruple $(\tF,\tS,\tvarpi,\tQ)$, where $\tF$ is a graded manifold and $\tS,\tvarpi,\tQ$ are respectively a function, a closed $2$-form, and a vector field on $\mathcal{F}$ satisfying the following:
\begin{itemize}
\item $\tQ$ is cohomological, i.e., it is odd of degree $1$ and $[\tQ,\tQ]=0$;
\item $\tvarpi$ is even of degree $0$ and $\tS$ is odd of degree $1$;
\item $\tQ$ is a hamiltonian vector field of $\tS$, i.e., $\iota_{\tQ}\tvarpi=\delta\tS$.
\end{itemize}
The preBFV manifold is called a BFV manifold if $\tvarpi$ is nondegenerate. If $\tvarpi$ is exact with a specified potential $\talpha$, i.e., $\tvarpi=\delta\talpha$, then we say that $(\tF,\tS,\talpha,\tQ)$ is an exact (pre)BFV manifold.
\end{definition}

We also have a relation between the original relaxed BV manifold $(\mathcal{F},S,\varpi,Q)$ and the resulting exact preBFV manifold $(\mathcal{F},\uS,\ualpha,Q)$ (namely, the defining equation $\ualpha = \iota_Q\varpi-\delta S$), which can be viewed as a special case of the following
\begin{definition}[BV-BFV]\label{d:preBV-BFV}
A preBV-BFV structure is a triple $((\mathcal{F},S,\varpi,Q),(\tF,\tS,\talpha,\tQ),\pi)$, where $(\mathcal{F},S,\varpi,Q)$ is a relaxed BV manifold, $(\tF,\tS,\talpha,\tQ)$ is an exact preBFV manifold, and $\pi\colon \mathcal{F}\to\tF$ is a surjective submersion, such that
\begin{itemize}
\item $Q$ and $\tQ$ are $\pi$-related, i.e., for every function $f$ on $\tF$, we have $Q\pi^*f = \pi^*\tQ f$;
\item $\iota_Q\varpi=\delta S+\pi^*\talpha$.
\end{itemize}
\end{definition}
It is then easy to compute how the ``master equation'' is violated in a general preBV-BFV structure:
\begin{theorem}\label{t:preBV-BFV}
Let $((\mathcal{F},S,\varpi,Q),(\tF,\tS,\talpha,\tQ),\pi)$ be a preBV-BFV structure. Then
\[
\frac12 \iota_Q\iota_Q\varpi = \pi^*\tS\qquad\text{and}\qquad
L_QS=\pi^*(2\tS-\iota_{\tQ}\talpha).
\]
\end{theorem}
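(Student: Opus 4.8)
\textbf{Proof plan for Theorem~\ref{t:preBV-BFV}.}
The strategy is a direct Cartan-calculus computation, exploiting the two defining identities of a preBV-BFV structure together with $[Q,Q]=0$. The first claim, $\frac12\iota_Q\iota_Q\varpi=\pi^*\tS$, I would obtain as follows. Start from $\iota_Q\varpi=\delta S+\pi^*\talpha$ and apply $\iota_Q$ to both sides. On the left we get $\iota_Q\iota_Q\varpi$. On the right, $\iota_Q\delta S = L_Q S$ (since $\iota_Q$ of an exact one-form is the Lie derivative of the function), and $\iota_Q\pi^*\talpha = \pi^*(\iota_{\tQ}\talpha)$ because $Q$ and $\tQ$ are $\pi$-related. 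Separately, I would compute $\delta$ of the equation: from $\uvarpi \coloneqq \delta\ualpha = -L_Q\varpi$ and the identification of $\ualpha$ with $-\pi^*\talpha$ in the special case, or more directly: contract $\iota_Q$ once more using that $\iota_Q\iota_Q$ is \emph{not} automatically zero for odd $Q$ — rather $\iota_Q\iota_Q = \frac12\iota_{[Q,Q]} = 0$ only when acting on forms, so in fact $\iota_Q\iota_Q\varpi$ need not vanish as a function. The cleanest route is: apply $\delta$ to $\iota_Q\varpi=\delta S+\pi^*\talpha$ to get $L_Q\varpi - \iota_Q\delta\varpi = \delta\pi^*\talpha = \pi^*\tvarpi$ (using $\delta\varpi=0$), i.e. $L_Q\varpi = \pi^*\tvarpi$; then note $\frac12 L_Q\iota_Q\varpi$ relates to $\iota_Q L_Q \varpi$ by the graded Cartan formula, and $\iota_Q L_Q\varpi = \iota_Q\pi^*\tvarpi = \pi^*\iota_{\tQ}\tvarpi = \pi^*\delta\tS$, whence contracting once more gives $\pi^*\tS$ up to the factor $\frac12$.

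More concretely, I would record the graded identities I need: for an odd degree-$1$ vector field $Q$ and a two-form $\varpi$, one has $L_Q = \iota_Q\delta + \delta\iota_Q$ on forms, and on a function $g$, $L_Q g = \iota_Q\delta g = Qg$. The key computation is then:
\begin{align*}
\delta\!\left(\tfrac12\iota_Q\iota_Q\varpi\right) &= \iota_Q\delta(\iota_Q\varpi) - \tfrac12\iota_{[Q,Q]}\varpi = \iota_Q\delta(\iota_Q\varpi)\\
&= \iota_Q\bigl(L_Q\varpi\bigr) = \iota_Q\pi^*\tvarpi = \pi^*(\iota_{\tQ}\tvarpi) = \pi^*\delta\tS = \delta(\pi^*\tS),
\end{align*}
where in the first line I used the graded Leibniz rule for $\delta\iota_Q$ on the one-form $\iota_Q\varpi$ and $[Q,Q]=0$; in the second line that $\delta\varpi=0$ so $\delta\iota_Q\varpi = L_Q\varpi$, and that $L_Q\varpi = \pi^*\tvarpi$ follows from applying $\delta$ to the structural equation $\iota_Q\varpi = \delta S + \pi^*\talpha$. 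Since $\delta$ of the difference vanishes, $\frac12\iota_Q\iota_Q\varpi - \pi^*\tS$ is (locally) constant; a degree/normalization argument — both sides are degree $1$ and vanish at the zero section, or one checks the constant is fixed by evaluating at a point of the body — pins the constant to zero.

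For the second identity, $L_Q S = \pi^*(2\tS - \iota_{\tQ}\talpha)$, I would contract the structural equation $\iota_Q\varpi = \delta S + \pi^*\talpha$ with $Q$: the left side is $\iota_Q\iota_Q\varpi = 2\pi^*\tS$ by the first part, and on the right $\iota_Q\delta S = L_Q S$ and $\iota_Q\pi^*\talpha = \pi^*\iota_{\tQ}\talpha$ (again by $\pi$-relatedness of $Q$ and $\tQ$). Rearranging gives $L_Q S = 2\pi^*\tS - \pi^*\iota_{\tQ}\talpha = \pi^*(2\tS - \iota_{\tQ}\talpha)$, as claimed.

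The only genuine subtlety — the ``hard part'' — is bookkeeping the graded signs and the fact that $\iota_Q\iota_Q$ does not automatically annihilate functions (it is the second-order part of $L_Q$), so one must be careful to use $\iota_Q\iota_Q\varpi = \iota_Q(\delta S + \pi^*\talpha) - (\text{nothing extra})$ consistently and to justify $\iota_Q(L_Q\varpi)$ versus $L_Q(\iota_Q\varpi)$ via $L_Q\iota_Q - \iota_Q L_Q = \iota_{[Q,Q]} = 0$. Fixing the integration constant in the first identity (passing from equality of $\delta$'s to equality of functions) also deserves a one-line justification using the degree-$1$, body-vanishing property of both sides. Everything else is a short symmetric manipulation of Cartan calculus and uses only $[Q,Q]=0$, $\delta\varpi=0$, $\pi$-relatedness, and the defining equation $\iota_Q\varpi=\delta S+\pi^*\talpha$.
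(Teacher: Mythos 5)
Your argument is correct and is precisely the Cartan-calculus computation that the paper leaves implicit (it states the theorem with no written proof, remarking only that it is ``easy to compute''): contract the structural equation $\iota_Q\varpi=\delta S+\pi^*\talpha$ with $Q$ to get the second identity from the first, and obtain the first by showing $\delta\bigl(\tfrac12\iota_Q\iota_Q\varpi-\pi^*\tS\bigr)=0$ via $[Q,Q]=0$, $\delta\varpi=0$, $\pi$-relatedness, and $\iota_{\tQ}\tvarpi=\delta\tS$. The one point worth stating cleanly is the final normalization: a $\delta$-closed function on a connected graded manifold is constant, and since both sides have degree $1$ the constant must vanish — that degree argument (rather than any ``evaluation at the zero section'') is the clean way to finish, and your sign ambiguities in $L_Q$ and in the $\iota_{[Q,Q]}$ term are harmless since that term vanishes identically.
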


We can summarize what we have so far by saying that $((\mathcal{F},S,\varpi,Q),(\mathcal{F},\uS,\ualpha,Q),\id)$ is a preBV-BFV structure. 

The reason for considering other (pre)BV-BFV structures is that the $2$-form $\uvarpi=\delta\ualpha$
is typically degenerate and we might want to remove part or all of the degeneracy by performing a quotient. 

Let $\calY$ be a simple, integrable distribution on $\mathcal{F}$ such that for every $Y$ in $\calY$ we have $\iota_Y\uvarpi=0$. Note that $Q$ is automatically projectable, i.e., $[Q,Y]\in\calY$ for every $Y$ in $\calY$. Let $\tF$ denote the leaf space of the distribution $\calY$ and $\pi\colon \mathcal{F}\to\tF$ the canonical projection. There are then uniquely determined $\tvarpi$ and $\tQ$ such that $\uvarpi=\pi^*\tvarpi$ and $Q\pi^*f = \pi^*\tQ f$ for every function $f$ on $\tF$. There is also a uniquely determined function $\tS$ satisfying $\iota_{\tQ}\tvarpi=\delta\tS$ and $\uS=\pi^*\tS$.
If there is a (necessarily unique) $1$-form $\talpha$ on $\tF$ satisfying $\ualpha=\pi^*\talpha$, then $((\mathcal{F},S,\varpi,Q),(\tF,\tS,\talpha,\tQ),\pi)$ is a preBV-BFV structure.\footnote{There is a more general version of a preBV-BFV structure, which we are not going to use in this paper, where $\tvarpi$ is only locally exact, i.e., it is the curvature of a connection $1$-form $\talpha$ for some line bundle over $\tF$. We can view $\ualpha$ as a connection $1$-form for the trivial line bundle over $\mathcal{F}$. Requiring that $\ualpha$ descend to the quotient as a connection is a weaker condition than requiring that it descend as a globally defined $1$-form.} 

In particular, let $\calY$ be the kernel of $\uvarpi$, i.e., the module of all vector fields $Y$ such that $\iota_Y\uvarpi=0$. This distribution is automatically involutive but not necessarily regular or simple. If it is, and $\ualpha$ is basic, we can perform the reduction as above. In this case, the resulting $2$-form $\tvarpi$ is nondegenerate, so symplectic. The notation $((\mathcal{F},S,\varpi,Q),(\mathcal{F}^\de,S^\de,\alpha^\de,Q^\de),\pi^\de)$ is reserved for the resulting BV-BFV structure.

\newcommand{\calB}{{\mathcal B}}

\subsection{Relaxed BV pushforward}
Let $(\mathcal{F},S,\varpi,Q)$ be a relaxed BV manifold. 
As a warm up, we are interested in the 
BV integral $Z_{\mathcal{L}} = \int_{\mathcal{L}} \left( \mu^{\frac{1}{2}} e^{\frac{i}{\hbar} S}\right)$, where $\mu$ is a compatible Berezinian for the graded symplectic manifold $(\mathcal{F},\varpi)$ and $\mathcal{L}$ is a Lagrangian submanifold. At this point, $Z_{\mathcal{L}}$ might not be invariant under deformations of $\mathcal{L}$ because $S$ possibly does not satisfy the quantum master equation. We assume for simplicity that $\Delta_\mu S=0$ (however $\Delta_\mu$ is regularized in the infinite-dimensional case). Still, the classical master equation is in general violated, in one of the forms appearing in Theorem~\ref{t:preBV-BFV}. To avoid the problem, we then have to make a choice so that $\uS$ and $\ualpha$ disappear (note that we consider the preBV-BFV structure in which $\tS=\uS$ and $\talpha=\ualpha$). This choice also has to be compatible with all the structure. We make the following
\begin{definition}
A good b-condition\footnote{b stands for boundary, for in the application to field theory this is really a boundary condition} is a submanifold $\calB\stackrel\iota\hookrightarrow\mathcal{F}$ such that
\begin{enumerate}
\item $\calB$ is a symplectic submanifold of $(\mathcal{F},\varpi)$;
\item $\calB$ is $Q$-invariant;
\item $\iota^*\uS=0$;
\item $\iota^*\ualpha=0$.
\end{enumerate}
\end{definition}
These conditions imply that $(\mathcal{B},\iota^*S,\iota^*\varpi,Q)$ is a BV manifold.\footnote{The first condition implies that $\iota^*\varpi$ is an odd symplectic form on $\calB$. The second condition implies that $Q$ is an odd vector field on $\calB$. The third condition implies that $Q$ on $\calB$ is Hamiltonian with Hamiltonian function $\iota^*S$. The last condition now implies that the master equation is satisfied, so $Q$ is a cohomological vector field on $\calB$.}
The last condition moreover implies that $\calB$ is an isotropic submanifold of $(\mathcal{F},\uvarpi)$. 
It is convenient to assume that it is maximally isotropic.\footnote{For applications in field theory, one usually wants to perturb around a critical point of $S$. Maximality of $\calB$ is usually required to ensure that the space of critical points is not empty.} Now it makes sense to consider $Z_{\mathcal{L}} = \int_{\mathcal{L}} \left( \mu^{\frac{1}{2}} e^{\frac{i}{\hbar} S}\right)$ for $\mu$  a compatible Berezinian and $\mathcal{L}$ the intersection with $\calB$ of the given Lagrangian submanifold of $\mathcal F$. The usual BV theorem now implies that $Z_{\mathcal{L}}$ is invariant under deformations of ${\mathcal{L}}$.

We now move to the BV pushforward.
Assume, as in Section~\ref{sGBVpf}, 
that $\mathcal{F}= \mathcal{F}_1 \times \mathcal{F}_2$ and $\varpi = \varpi_1 + \varpi_2$. We also assume, as in Corollary~\ref{c:BV-pushforward_QME}, that we have 
a compatible Berezinian for the graded symplectic manifold $(\mathcal{F},\varpi)$
and a lagrangian submanifold $\mathcal{L}$ of $\mathcal{F}_2$.
We are interested in the BV pushforward
\[
\mu_1^{\frac{1}{2}} e^{\frac{i}{\hbar} S_1} = \mathcal{P}_{\mathcal{L}} \left( \mu^{\frac{1}{2}} e^{\frac{i}{\hbar} S}\right).
\]
As before, the problem is that $S$ possibly does not satisfy the quantum master equation and we assume that $\Delta_\mu S=0$. Still, the classical master equation is in general violated. To avoid the problem, we then have to make a suitable choice, adapting the notion of good b-condition.

\begin{definition}\label{d:good_b_condition}
A good b-condition 
is a submanifold $\calB\stackrel\iota\hookrightarrow\mathcal{F}$ such that
\begin{itemize}
\item $\calB$ is a symplectic submanifold of $(\mathcal{F},\varpi)$;
\item $\calB$ is $Q$-invariant;
\item the restriction 
$\pi_\calB\colon\calB\to\mathcal{F}_2$ of $\pi\colon \mathcal{F}\to\mathcal{F}_2$ is 
such that $\iota_Y\varpi=0$ for all $\pi_\calB$-vertical vector fields $Y$;
\item $\iota^*\uS=\pi^*\uS_1$ for a (necessarily uniquely determined) function $\uS_1$ on $\mathcal{F}_1$;
\item $\iota^*\ualpha=\pi^*\ualpha_1$ for a (necessarily uniquely determined) $1$-form $\ualpha_1$ on $\mathcal{F}_1$.
\end{itemize}
\end{definition}
We now work on $\calB$. Note the we have
\[
\Delta_\mu e^{\frac{i}{\hbar} S}{=} \frac12 \left(\frac{i}{\hbar}\right)^2 L_QS e^{\frac{i}{\hbar} S}
=
\left(2\pi^*\uS_1-\iota_{Q}\pi^*\ualpha_1\right)
\frac12 \left(\frac{i}{\hbar}\right)^2 e^{\frac{i}{\hbar} S}.
\]
The BV pushforward $\Omega_1$ of $2\pi^*\uS_1-\iota_{Q}\pi^*\ualpha_1$
along $\mathcal{L}$ determines the violation of the master equation for $S_1$. 

In field theory, $\uS$ and $\ualpha$, and therefore $\uS_1$, $\ualpha_1$, and $\Omega_1$, only depend on boundary fields (and their transversal jets). As a consequence, the violation of the master equation for $S_1$ also has this property. Moreover, a deformation of ${\mathcal{L}}$ leads to a change of $e^{\frac{i}{\hbar} S_1}$ by a $\Delta_{\mu_1}$-exact term and by a term that only depends on boundary fields (and their transversal jets).

For simplicity, we focus on the case,  occurring in the current paper, that the $\mathcal{F}_1$-component  $Q_1$ of $Q$
depends only on ${\mathcal F}_2$. In this case, $\Omega_1=2\uS_1-\iota_{Q_1}\ualpha_1$. We also assume that
$S_1$ is the evaluation of $S$ at the critical point. As a result, $((\mathcal{F}_1,S_1,\varpi_1,Q_1),(\mathcal{F}_1,\uS_1,\ualpha_1,Q_1),\id)$ is the preBV-BFV structure for the effective theory.
Moreover, a deformation of $\mathcal L$ changes
$\mu_1^{\frac{1}{2}} e^{\frac{i}{\hbar} S_1}$ by an $\left(\Omega_1-i\hbar\Delta_{\mu_1}\right)$-exact term.

The BV pushforward now has all the properties as in Section~\ref{sGBVpf}.

\subsection{Back to PC gravity}
In presence of a boundary, we get a preBV-BFV structure to which we can apply a relaxed BV pushforward. 
\begin{proposition}
    Denoting by $(\mathcal{F}^\partial, \varpi^\partial, S^\partial)$ the BFV theory described in \cite[Theorem 30]{CCS2020}, the PC gravity preBV-BFV structure is given by the following data (listed as in Definition \ref{d:preBV-BFV}):
\begin{itemize}
    \item The bulk space of fields is $\mathcal{F}=  \mathcal{F}_H \times \mathcal{F}_2 .$
    \item The bulk symplectic form is   
        \begin{align*}
        \varpi = \varpi_{r} + \int_{\Sigma \times I}\delta v \delta  \underline{v}^\dag.
        \end{align*}
    \item The bulk action is 
        \begin{align*}
        S  = S_{r} + \int_{\Sigma \times I} \frac{1}{2(N-3)!}\underline{e_n}e^{N-3} [v,v] 
        +\left( L_{\xi}^{\omega} v + [c,v] + \xi^{n} d_{\omega_n} v  + d\xi^n \iota_z v \right)\underline{v}^\dag. 
    \end{align*}
    \item $\tF = \mathcal{F}^\partial \times \mathcal{F}_{add}$ 
    where \begin{align*}
        \mathcal{F}_{add} = \Omega^{1}(\Sigma, \wedge^2\mathcal{V}) \oplus \Omega^{N-2}(\Sigma, \wedge^{N-2}\mathcal{V}). 
        \end{align*}
    \item $\tvarpi = \varpi^{\partial} + \varpi_{add}$ where
        \begin{align*}
        \varpi_{add} = \int_{\Sigma} \delta v \delta (v^\dag \xi^n).
        \end{align*}
    \item $\tS = S^{\partial} + S_{add}$
        where
        \begin{align*}
        S^{\partial}_{add} &= \int_{\Sigma}\frac{1}{2(N-3)!}e_n\xi^n e^{N-3} [v,v ]+ \left( L_{\xi}^{\omega} v + [c,v] + d\xi^n \iota_z v \right)v^\dag \xi^n.
        \end{align*}
    \item The projection $\pi$ is given by $(\pi^\partial, \id): \mathcal{F}_H \times \mathcal{F}_2 \rightarrow \mathcal{F}^\partial \times \mathcal{F}_{add}$ where $\pi^\partial$ is the composition of the inverse of the symplectomorphism described in \cite[equation (78)]{CCS2020b} and the projection to the boundary.
\end{itemize}
\end{proposition}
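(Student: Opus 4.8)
The strategy is to recognize the listed data as the output of the reduction procedure of the paragraph following Theorem~\ref{t:preBV-BFV}, applied to the relaxed BV manifold of Theorem~\ref{t:split_sympl_form}, and then to check the two bullets of Definition~\ref{d:preBV-BFV}. So I start from $(\mathcal{F}_H,\varpi_H,S_H,Q_H)$ with $\mathcal{F}_H=\mathcal{F}_r\times\mathcal{F}_2$, $\varpi_H=\varpi_r+\int_{\Sigma\times I}\delta v\,\delta\underline{v}^\dag$, and $Q_H=(\psi^{-1})_*Q_s$. On $\Sigma\times I$ with $I$ compact this is only a \emph{relaxed} BV manifold: since $\Sigma$ is closed, the only contributions to integration by parts in $\delta S_H$ come from derivatives along $I$, landing on $\Sigma\times\partial I$. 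The first step is therefore to compute $\ualpha=\iota_{Q_H}\varpi_H-\delta S_H$ explicitly. Varying $S_H=S_r+\int_{\Sigma\times I}\frac{1}{2(N-3)!}\underline{e_n}e^{N-3}[v,v]+\left(L_\xi^\omega v+[c,v]+\xi^n d_{\omega_n}v+d\xi^n\iota_z v\right)\underline{v}^\dag$, the $S_r$-part produces, by the analysis of \cite{CS2017,CCS2020}, the pullback of the Palatini--Cartan boundary potential; for the $v$-part one integrates by parts along $I$ and uses the structural constraints \eqref{e:PCconstraints1}--\eqref{e:PCconstraints2}, exactly as in the proof of Theorem~\ref{t:split_sympl_form}, to collect the remaining boundary terms into the shape $\int_\Sigma(\cdots)\,\delta v+(\cdots)\,\delta(v^\dag\xi^n)$. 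The upshot is $\ualpha=\pi^*\talpha$ for a $1$-form $\talpha$ that splits into a Palatini--Cartan part and a $v$-part.

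Next, $\uvarpi=\delta\ualpha=-L_{Q_H}\varpi_H$ descends along $\pi$ to $\tvarpi=\varpi^\partial+\varpi_{add}$ with $\varpi_{add}=\int_\Sigma\delta v\,\delta(v^\dag\xi^n)$, while $\uS=\tfrac12\iota_{Q_H}\iota_{Q_H}\varpi_H$ (the first identity of Theorem~\ref{t:preBV-BFV} applied to the $\id$-structure) descends to $\tS=S^\partial+S_{add}$; both statements are verified by the same integrations by parts and constraint manipulations. It then remains to see that the distribution $\calY=\ker\uvarpi$ is simple with leaf space exactly $\mathcal{F}^\partial\times\mathcal{F}_{add}$ and that $\talpha,\uS$ are $\pi$-basic: here the identification of the leaf space with the genuine symplectic Palatini--Cartan BFV phase space $\mathcal{F}^\partial$ of \cite[Theorem~30]{CCS2020} is supplied by the symplectomorphism of \cite[equation~(78)]{CCS2020b} --- this is precisely what cures the irregularity of the naive quotient alluded to in the introduction --- so $\pi=(\pi^\partial,\id)$ with $\pi^\partial$ as stated. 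Projectability of $Q_H$ is automatic, since $L_{Q_H}\uvarpi=0$ forces $[Q_H,\calY]\subseteq\calY$; hence there is a unique $\tQ$ with $Q_H\pi^*f=\pi^*\tQ f$, its components are read off to be $Q^\partial+Q_{add}$, and one checks $\iota_{\tQ}\tvarpi=\delta\tS$. Together with $\iota_{Q_H}\varpi_H=\delta S_H+\pi^*\talpha$ (which is just the definition of $\ualpha$), this establishes both conditions of Definition~\ref{d:preBV-BFV}.

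The main obstacle is the bookkeeping of these integrations by parts combined with the structural-constraint step: the $v$-dependent boundary contributions do not a priori appear in the compact form $\delta v\,\delta(v^\dag\xi^n)$, and --- more seriously --- one must compute $\ker\uvarpi$ precisely, so that the leaf space is the honest symplectic manifold $\mathcal{F}^\partial\times\mathcal{F}_{add}$ rather than a singular quotient; this is where the structural results of \cite{CS2019,CCS2020,CCS2020b} have to be used in full. Everything else is degree counting and Cartan calculus already packaged in the general discussion of (pre)BV-BFV structures preceding Theorem~\ref{t:preBV-BFV}.
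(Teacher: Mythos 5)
Your overall strategy is the one the paper intends: the bulk data is just that of Theorem~\ref{t:split_sympl_form}, $\ualpha=\iota_{Q}\varpi-\delta S$ is computed by integration by parts along $I$ (all boundary terms land on $\Sigma\times\partial I$ since $\Sigma$ is closed), the structural constraints \eqref{e:PCconstraints1}--\eqref{e:PCconstraints2} are used to put the $v$-dependent boundary terms in the stated form, and the identification of the boundary quotient with $\mathcal{F}^\partial\times\mathcal{F}_{add}$ goes through the symplectomorphism of \cite[equation (78)]{CCS2020b}. The paper itself only records this as ``a long but easy computation,'' so your sketch is a faithful expansion of it.

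There is, however, one genuine misstep: you take $\calY=\ker\uvarpi$ and assert that its leaf space is the ``honest symplectic manifold'' $\mathcal{F}^\partial\times\mathcal{F}_{add}$. If the quotient really were by the full kernel, the general construction preceding Theorem~\ref{t:preBV-BFV} would produce a nondegenerate $\tvarpi$, i.e.\ a genuine BV-BFV structure --- but the proposition only claims a \emph{pre}BV-BFV structure, and for good reason: the additional piece $\varpi_{add}=\int_{\Sigma}\delta v\,\delta(v^\dag\xi^n)$ is paired through $\xi^n$, so along the locus $\xi^n=0$ the boundary $v$- and $v^\dag$-directions fall into the kernel of $\uvarpi$ and the rank of $\ker\uvarpi$ jumps. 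The full kernel is therefore not a regular (hence not a simple) distribution, and the projection $(\pi^\partial,\id)$ is not the quotient by it. The correct move --- which the paper's general setup explicitly allows --- is to choose a \emph{smaller} simple integrable distribution $\calY\subset\ker\uvarpi$ whose leaf space is $\mathcal{F}^\partial\times\mathcal{F}_{add}$, accept that $\tvarpi$ is degenerate where $\xi^n=0$, and defer the cure of this degeneracy to the good b-condition $\mathcal B$ (which kills $v$ on the boundary) in the subsequent relaxed BV pushforward. This is exactly the irregularity the paper is organized around, so identifying $\calY$ with the full kernel is not a cosmetic slip but would make the reduction step fail as stated; the rest of your argument (projectability of $Q$, the two identities of Definition~\ref{d:preBV-BFV}) is fine once $\calY$ is chosen correctly.
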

\begin{proof}
    The bulk data is that of Theorem \ref{t:split_sympl_form}.
    Using the results in \cite{CCS2020b} it is a long but easy computation to show that these data form a preBV-BFV structure.
\end{proof}

In order now to perform a relaxed BV pushforward, we then have to take a good b-condition.
We define
\[
{\mathcal B} \coloneqq \{ \iota_\partial^*v = 0 \text{ and } \iota_\partial^*(d_{\omega_n})^kv=0\,\forall k \in \mathbb{N}^*
\}.
\]
\begin{proposition}
    $\mathcal B$ is a good b-condition.
\end{proposition}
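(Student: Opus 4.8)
The plan is to verify the four conditions of Definition~\ref{d:good_b_condition} for the submanifold $\mathcal B \coloneqq \{ \iota_\partial^*v = 0 \text{ and } \iota_\partial^*(d_{\omega_n})^kv=0\,\forall k \in \mathbb{N}^*\}$ one at a time, using the explicit bulk data of the preceding Proposition (i.e.\ of Theorem~\ref{t:split_sympl_form}) together with the computation of $\ualpha$ and $\uS$ supplied by Theorem~\ref{t:preBV-BFV} and the preBV-BFV structure just exhibited.

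First I would check that $\mathcal B$ is a symplectic submanifold of $(\mathcal{F},\varpi)$. Since $\varpi = \varpi_{r} + \int_{\Sigma \times I}\delta v \delta  \underline{v}^\dag$ and the defining equations of $\mathcal B$ only constrain the boundary restrictions of $v$ (and its covariant normal derivatives) while leaving $\underline v^\dag$ and all of $\mathcal{F}_r$ free in the bulk, the restriction $\varpi|_{\mathcal B}$ loses only a ``boundary-supported'' piece; one checks that the remaining form is nondegenerate because the conjugate variable $\underline v^\dag$ is similarly unconstrained, so the pairing survives on $\mathcal B$. Concretely I would exhibit a complement and show $T\mathcal B \oplus (T\mathcal B)^{\perp_\varpi} = T\mathcal{F}$. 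Second, $Q$-invariance: one must show that the cohomological vector field $Q$ (equivalently $Q_H=Q_s$ under $\psi$) is tangent to $\mathcal B$, i.e.\ that $Q(\iota_\partial^*v)$ and $Q(\iota_\partial^*(d_{\omega_n})^kv)$ vanish on $\mathcal B$. From the transformation rules of Definition~\ref{def:standardPC-BV} (pushed through $\psi$), $Qv = L_\xi^\omega v + [c,v] + \xi^n d_{\omega_n}v + \dots$, which on $\mathcal B$ is built out of $v$, $d_{\omega_n}v$, $(d_{\omega_n})^2 v,\dots$ and thus vanishes on $\Sigma$ when those all vanish; the tower of conditions $(d_{\omega_n})^k v|_\partial = 0$ is designed precisely to make this closure work, and one runs the same argument for each $k$ using $Q\omega_n$ and the Leibniz rule for $d_{\omega_n}$.

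Third and fourth are the conditions $\iota^*\uS = 0$ and $\iota^*\ualpha = 0$ (note that here, as in the paragraph before the definition, we are in the case $\tS=\uS$, $\talpha=\ualpha$, so the ``$\uS_1,\ualpha_1$'' are zero). I would compute $\ualpha = \iota_Q\varpi - \delta S$ using the boundary term that arises from integrating by parts in $\delta S$ over $\Sigma\times I$: the boundary contribution is an integral over $\partial(\Sigma\times I) = \Sigma\times\partial I$ of an expression which, from the explicit $S$, is precisely the one-form $\talpha$ built from $S^\partial_{add}$ and the genuine BFV potential $\alpha^\partial$ of \cite{CCS2020}. Every term in $\talpha$ that involves the ``additional'' sector $\mathcal{F}_{add}$ is, by inspection of $S^\partial_{add} = \int_{\Sigma}\frac{1}{2(N-3)!}e_n\xi^n e^{N-3} [v,v ]+ \left( L_{\xi}^{\omega} v + [c,v] + d\xi^n \iota_z v \right)v^\dag \xi^n$, at least linear in $\iota_\partial^* v$ — and the same holds for $\uS$, which by Theorem~\ref{t:preBV-BFV} equals $\frac12\iota_{\tQ}\talpha + \tS$ type expressions, hence quadratic in $v|_\partial$. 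Therefore pulling back along $\iota$ kills all of these, and what could remain is only the part of $\alpha^\partial$ (resp.\ $S^\partial$) from genuine PC-gravity boundary fields; but on the true BFV side those are not set to zero, so one needs the extra observation that the pure-$\mathcal{F}^\partial$ part of the violation is actually absorbed into $\uS_1,\ualpha_1$ living on $\mathcal F_1=\mathcal F_H$ — i.e.\ the fourth and fifth bullets of Definition~\ref{d:good_b_condition} rather than literal vanishing. I would restate this carefully: the conditions to verify are $\iota^*\uS = \pi^*\uS_1$ and $\iota^*\ualpha = \pi^*\ualpha_1$ with $\uS_1,\ualpha_1$ the honest PC-gravity boundary data, and the point of $\mathcal B$ is exactly that it kills the $\mathcal F_{add}$-contamination.

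The main obstacle I anticipate is the third bullet of Definition~\ref{d:good_b_condition}, namely showing that every $\pi_\calB$-vertical vector field $Y$ satisfies $\iota_Y\varpi=0$ on $\mathcal B$: this is a statement about the full (bulk) symplectic form restricted to $\mathcal B$, not just its boundary part, so it forces one to understand precisely which directions inside $\mathcal B$ map to zero under $\pi_\calB\colon \calB\to\mathcal F_2$ and to check that the $v$–$\underline v^\dag$ pairing together with the $\varpi_r$ pairing annihilates exactly those. This is where the tower $(d_{\omega_n})^k v|_\partial=0$ does real work: it is the condition that makes the normal jet of $v$ at $\partial I$ essentially free (equivalently, that $v$ near the boundary is not over-determined), which is what is needed for the vertical directions of $\pi_\calB$ to be $\varpi$-isotropic and for $\calB$ to be maximally isotropic in $(\mathcal F,\uvarpi)$. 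I would also flag that one should double-check that $\calB$ is actually a smooth submanifold (the infinite tower of conditions has to cut out something manifold-like, presumably via a normal-form/flow argument for $d_{\omega_n}$ along $I$), and that it is maximally isotropic for $\uvarpi$, as used in the discussion of the relaxed BV pushforward.
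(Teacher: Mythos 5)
Your proposal is correct and follows essentially the same route as the paper: the conditions of Definition~\ref{d:good_b_condition} involving $\uS$ and $\ualpha$ hold because every $\mathcal{F}_{add}$-contribution ($S^\partial_{add}$, $\varpi_{add}$) is at least linear in $\iota_\partial^* v$ and hence dies on $\mathcal B$, while $Q$-invariance follows from computing $Qv$ and $Q\omega_n$ and observing that the tower $\iota_\partial^*(d_{\omega_n})^k v=0$ closes under $Q$ via the Leibniz rule — exactly the paper's argument. The paper dismisses the remaining bullets as "easily verified," so your extra care about the symplectic-submanifold condition, the vertical-isotropy bullet, and your (correct) self-correction that the fourth and fifth conditions require basicness rather than literal vanishing only make the argument more complete, not different.
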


\begin{proof}
    Noting that $\iota_\partial^*\underline{S}_1=0$ and $\iota_\partial^*\underline{\alpha}_1=0$, all the conditions in Definition \ref{d:good_b_condition} are easily verified, only the $Q$-invariance requires some additional care. From the expression of $S_H$, we deduce that 
    \begin{align*}
        Qv &=  L_{\xi}^{\omega} v + [c,v] + \xi^{n} d_{\omega_n} v  + d\xi^n \iota_z v\\
        Q \omega_n &= \iota_{\xi} F_{\omega_n} + d_{\omega_n} c 
    \end{align*}
    Hence $Q v |_{\mathcal{B}}=0$ and 
    \begin{align*}
        Q \left(d_{\omega_n}v\right) = [\iota_{\xi} F_{\omega_n} + d_{\omega_n} c ,v] + d_{\omega_n} \left(L_{\xi}^{\omega} v + [c,v] + \xi^{n} d_{\omega_n} v  + d\xi^n \iota_z v\right).
    \end{align*}
    Hence, using $\iota_\partial^*(d_{\omega_n})^kv=0$, it is easy to see, that $\mathcal B$ is $Q$-invariant.
\end{proof}

With this choice of a good b-condition, we can proceed and perform a relaxed BV pushforward exactly as in the case without boundary, with the same results.

\appendix 
\section{Detailed proof of Theorem \ref{t:split_sympl_form}} \label{a:appendix}

\subsection{Symplectic forms}
We prove here that 
\begin{align}
    \psi^* (\varpi_H)= \varpi_s.
\end{align}
We have:
\begin{align*}
    \varpi_H = \int_{\Sigma \times I} \delta \te \delta \underline{\te}_n^{\dag} + \delta \underline{\te}_n \delta \te^\dag  + \delta \widetilde{\whom} \delta \underline{\widetilde{\whom}}_n^{\dag} + \delta \underline{\tom}_n \delta \tom^\dag + \delta \tc \delta \underline{\tc}_n^{\dag} + \iota_{\delta \txi{}} \delta \underline{\txi{}}^\dag + \delta \txi{n} \delta \underline{\txi{}}_n^\dag + \delta \tv \delta \underline{\tv}^{\dag}
\end{align*}
\begin{align}\label{e:omega_s}
        \varpi_s &= \int_{\Sigma \times I} 
        \unl{\delta e \delta \underline{e}_n^{\dag} }{psiS1}
        + \unl{\delta \underline{e}_n \delta e^\dag  }{psiS2}
        + \unl{\delta \whom \delta \underline{\whom}_n^{\dag} }{psiS3} 
        + \unl{\delta \underline{\omega}_n \delta \omega^\dag }{psiS4}
        + \unl{\delta c \delta \underline{c}_n^{\dag} }{psiS5}
        + \unl{\iota_{\delta \xi} \delta \underline{\xi}^\dag }{psiS6}
        + \unl{\delta \xi^{n} \delta \underline{\xi}_n^\dag }{psiS7}\\
    &\phantom{= \int_{\Sigma \times I} } \nonumber
        + \unl{\delta v \delta \underline{v}^{\dag} }{psiS8}
        + \unl{\delta \whom \delta \underline{v}^{\dag} }{psiS9}
        + \unl{\delta v\delta \underline{\whom}_n^{\dag}}{psiS10}
\end{align}

Hence, using the explicit expression of $\psi$ we get

\setcounter{terms}{0} 
\begin{align}\label{e:psi-omegaH}
    \psi^* (\varpi_H) 
    &= \int_{\Sigma \times I} 
    \unl{\delta e \delta \underline{e}_n^\dag}{psiHe1} 
    + \frac{1}{N-3}\unl{\delta e \delta  \left(  e^{N-4}d_{\widehat{\omega}} (\epsilon_n \underline{\mu}^\dag) +  e^{N-4}\sigma \underline{\mu}^\dag\right)}{psiHe2} 
    + \frac{1}{N-3}\unl{\delta e \delta \left(e^{N-4})\underline{\tau}^\dag \nu\right)}{psiHe3}\\
    &\phantom{=  } \nonumber
    + \frac{1}{N-3}\unl{\delta e \delta \delta \left(e^{N-4})\underline{\tau}^\dag v\right)}{psiHe4}
    + \unl{\delta e \delta \left(X^{(a)}(Z_b \underline{\mu}^\dag)_a^{(b)}\right)}{psiHe5}
    + \unl{\delta e \delta \left(\iota_{\underline{z}}(v_a Z)^{(a)}\right)}{psiHe6}\\
    &\phantom{=  } \nonumber
    + \unl{\delta e \delta \left(\iota_{\underline{z}}(\nu_a Z)^{(a)}\right)}{psiHe7}
    + \theta_N(5) \unl{\delta e \delta \left(e^{N-5}\epsilon_n \mu^\dag d\xi^n \underline{\tau}^\dag\right)}{psiHe8}\\
    &\phantom{=  } \nonumber
    +\unl{\delta \underline{e}_n \delta e^\dag }{psiHen1}
    +\unl{\delta \underline{e}_n \delta (v_a Z)^{(a)} }{psiHen2}
    +\unl{\delta \underline{e}_n \delta (\nu_a Z)^{(a)} }{psiHen3}\\
    &\phantom{=  } \nonumber
    +\unl{\delta \whom \delta \underline{\whom}_n^\dag}{psiHo1}
    +\unl{\delta \nu \delta \underline{\whom}_n^\dag}{psiHo2}
    +\unl{\delta \whom \delta [\epsilon_n, \iota_{\xi}(Z_a \underline{\mu}^\dag)^{(a)}] }{psiHo3}
    +\unl{\delta \nu \delta [\epsilon_n, \iota_{\xi}(Z_a \underline{\mu}^\dag)^{(a)}] }{psiHo4}\\
    &\phantom{=  } \nonumber
    +\unl{\delta \underline{\omega}_n \delta \omega^\dag }{psiHon1}
    +\unl{\delta \iota_{\underline{z}}\nu \delta \omega^\dag }{psiHon2}
    +\unl{\delta \iota_{\underline{z}}v \delta \omega^\dag }{psiHon3}
    +\unl{\delta X^{(a)}\underline{\mu}^\dag_a \delta \omega^\dag }{psiHon4}\\
    &\phantom{=  } \nonumber
    +\unl{\delta \underline{\omega}_n \delta [\epsilon_n, (Z_a \mu^\dag)^{(a)}\xi^n] }{psiHon5}
    +\unl{\delta \iota_{\underline{z}}\nu \delta [\epsilon_n, (Z_a \mu^\dag)^{(a)}\xi^n] }{psiHon6}
    +\unl{\delta \iota_{\underline{z}}v \delta [\epsilon_n, (Z_a \mu^\dag)^{(a)}\xi^n] }{psiHon7}\\
    &\phantom{=  } \nonumber
    +\unl{\delta X^{(a)}\underline{\mu}^\dag_a [\epsilon_n, (Z_a \mu^\dag)^{(a)}\xi^n] }{psiHon8}
    +\unl{\delta c \delta \underline{c}_n^\dag}{psiHc1}
    +\unl{\delta (\iota_{\xi}\nu) \delta \underline{c}_n^\dag}{psiHc2}
    +\unl{\delta (\iota_{\xi}v) \delta \underline{c}_n^\dag}{psiHc3}
    +\unl{\delta (\iota_{z}\nu \xi^n) \delta \underline{c}_n^\dag}{psiHc4}\\
    &\phantom{=  } \nonumber
    +\unl{\delta (\iota_{z} v \xi^n) \delta \underline{c}_n^\dag}{psiHc5}
    +\unl{\delta \left(X^{(a)}\mu^\dag_a\xi^n\right) \delta \underline{c}_n^\dag}{psiHc6}
    +\unl{\delta c \delta [\epsilon_n, (Z_a \underline{\mu}^\dag)^{(a)}]}{psiHc7}
    +\unl{\delta (\iota_{\xi}\nu) \delta [\epsilon_n, (Z_a \underline{\mu}^\dag)^{(a)}]}{psiHc8}\\
    &\phantom{=  } \nonumber
    +\unl{\delta (\iota_{\xi}v) \delta [\epsilon_n, (Z_a \underline{\mu}^\dag)^{(a)}]}{psiHc9}
    +\unl{\delta (\iota_{z}\nu \xi^n) \delta [\epsilon_n, (Z_a \underline{\mu}^\dag)^{(a)}]}{psiHc10}
    +\unl{\delta (\iota_{z} v \xi^n) \delta [\epsilon_n, (Z_a \underline{\mu}^\dag)^{(a)}]}{psiHc11}\\
    &\phantom{=  } \nonumber
    +\unl{\delta \left(X^{(a)}\mu^\dag_a\xi^n\right) \delta [\epsilon_n, (Z_a \underline{\mu}^\dag)^{(a)}]}{psiHc12}
    + \unl{\iota_{\delta \xi} \delta \underline{\xi}^{\dag}}{psiHx1}
    + \unl{\iota_{\delta \xi} \delta (\underline{c}_n^\dag \nu)}{psiHx2}
    + \unl{\iota_{\delta \xi} \delta (\underline{c}_n^\dag v)}{psiHx3}\\
    &\phantom{=  } \nonumber
    + \unl{\iota_{\delta \xi} \delta (d_{\omega}\epsilon_n (Z_a \underline{\mu}^\dag)^{(a)})}{psiHx4}
    + \unl{\iota_{\delta \xi} \delta (\nu [\epsilon_n, (Z_a \underline{\mu}^\dag)^{(a)}])}{psiHx5}\\
    &\phantom{=  } \nonumber
    + \unl{\delta \xi^n \delta \underline{\xi}^{\dag}_n}{psiHxn1}
    + \unl{\delta \xi^n \delta (X^{(a)}\underline{c}_{an}^\dag \mu^\dag)}{psiHxn2}
    + \unl{\delta \xi^n \delta (d(e^{N-4}\epsilon \underline{\tau}^\dag \mu^\dag))}{psiHxn3}
    + \unl{\delta \xi^n \delta (\iota_z \underline{c}_n^\dag \nu)}{psiHxn4}\\
    &\phantom{=  } \nonumber
    + \unl{\delta \xi^n \delta (\iota_z \underline{c}_n^\dag v)}{psiHxn5}
    + \unl{\delta \xi^n \delta (\iota_z\nu [\epsilon_n, (Z_a \underline{\mu}^\dag)^{(a)}])}{psiHxn6}
    + \unl{\delta \xi^n \delta (d_{\omega_n}\epsilon_n (Z_a \underline{\mu}^\dag)^{(a)})}{psiHxn7}\\
    &\phantom{=  } \nonumber
    + \unl{\delta \xi^n \delta (X^{(b)}\mu_{b}^\dag [\epsilon_n, (Z_a \underline{\mu}^\dag)^{(a)}])}{psiHxn8}
    + \unl{\delta \xi^n \delta (\iota_zv [\epsilon_n, (Z_a \underline{\mu}^\dag)^{(a)}])}{psiHxn9}
    + \unl{\delta v \delta \underline{v}^\dag}{psiHv1}.
\end{align}

Let us now compare the two expressions \eqref{e:omega_s} and \eqref{e:psi-omegaH}  and show that each term appears in both expressions. Let us divide the computation into some steps.
\begin{enumerate}[label=\Alph*]
\item The following terms appear verbatim in the two expressions: 
\begin{align*}
    \reft{e:omega_s}{psiS1} &= \reft{e:psi-omegaH}{psiHe1};& 
        \reft{e:omega_s}{psiS2} &= \reft{e:psi-omegaH}{psiHen1};& 
        \reft{e:omega_s}{psiS3} &= \reft{e:psi-omegaH}{psiHo1};\\
        \reft{e:omega_s}{psiS4} &= \reft{e:psi-omegaH}{psiHon1};& 
        \reft{e:omega_s}{psiS5} &= \reft{e:psi-omegaH}{psiHc1};& 
        \reft{e:omega_s}{psiS6} &= \reft{e:psi-omegaH}{psiHx1};\\
        \reft{e:omega_s}{psiS7} &= \reft{e:psi-omegaH}{psiHxn1};& 
        \reft{e:omega_s}{psiS8} &= \reft{e:psi-omegaH}{psiHv1}.
\end{align*}
        
\item The following terms in \eqref{e:psi-omegaH} are equal and simplify:
    \begin{align*}
        \reft{e:psi-omegaH}{psiHo4} &= \reft{e:psi-omegaH}{psiHc8} + \reft{e:psi-omegaH}{psiHx5}; &
        \reft{e:psi-omegaH}{psiHon6} &= \reft{e:psi-omegaH}{psiHc10} + \reft{e:psi-omegaH}{psiHxn6};\\
        \reft{e:psi-omegaH}{psiHon7} &= \reft{e:psi-omegaH}{psiHc11} + \reft{e:psi-omegaH}{psiHxn9}; &
        \reft{e:psi-omegaH}{psiHon8} &= \reft{e:psi-omegaH}{psiHc12} + \reft{e:psi-omegaH}{psiHxn8}.
    \end{align*}

\item \label{i:etau}Consider \reft{e:psi-omegaH}{psiHe4}:
    \begin{align*}
        \frac{1}{N-3}\delta e \delta \left(e^{N-4} v {\tau}^\dag\right)&=\delta e^{N-3} \delta ( v \tau^\dag)= \delta \left(e^{N-3} \tau^\dag\right) \delta v\\
        & = \delta \underline{\whom}^\dag_n \delta v - \delta( \iota_{\underline{z}}Z )\delta v +\delta (\iota_{\xi}\underline{c}_n^{\dag} ) \delta v,
    \end{align*}
    The first and last terms are respectively \reft{e:omega_s}{psiS10} and \reft{e:psi-omegaH}{psiHc3} $+$ \reft{e:psi-omegaH}{psiHx3}, while the second is compensated by \reft{e:psi-omegaH}{psiHe6}, \reft{e:psi-omegaH}{psiHen2}, \reft{e:psi-omegaH}{psiHon3}, \reft{e:psi-omegaH}{psiHc5} and \reft{e:psi-omegaH}{psiHxn5}:
    \begin{align*}
    \delta (\iota_{\underline{z}}v) \delta \omega^\dag - \delta ( \iota_z v \xi^{n}) \delta \underline{c}_n^\dag + \delta \xi^{n} \delta (\iota_z v \underline{c}_{n}^\dag)= \delta (\iota_{\underline{z}}v) \delta(Z)
    \end{align*}
    and 
    \begin{align*}
        \delta e \delta \left(\iota_{\underline{z}}(v_a Z)^{(a)}\right) + \delta \underline{e}_n \delta \left((v_a Z)^{(a)}\right)= \iota_{\delta \underline{z}} \delta ( vZ)
    \end{align*}
\item Consider \reft{e:psi-omegaH}{psiHe2}:
\begin{align*}
      \frac{1}{N-3} &\delta e \delta  \left(  e^{N-4}d_{\widehat{\omega}} (\epsilon_n \underline{\mu}^\dag) +  e^{N-4}\sigma \underline{\mu}^\dag\right) =\delta e^{N-3} \delta \left(  d_{\widehat{\omega}} (\epsilon_n \underline{\mu}^\dag) +  \sigma \underline{\mu}^\dag\right)\\
     &=   \delta e^{N-3}  [\delta {\whom} , \epsilon_n \underline{\mu}^\dag] + d_{\whom} (\delta e^{N-3})  \epsilon_n \delta \underline{\mu^\dag} + \delta (e^{N-3} \sigma) \delta\underline{\mu}^\dag\\
     &\overset{\eqref{e:PCconstraints2}}{=}\delta {\widehat{\omega}} \delta \left(\epsilon_n[e^{N-3}, \underline{\mu}^\dag]\right)+\epsilon_n  \delta \underline{\mu}^\dag [\delta {\widehat{\omega}} ,  e^{N-3}]+ d_{\widehat{\omega}} \delta e^{N-3}  \epsilon_n \delta \underline{\mu}^\dag\\
     &\phantom{=}+ \delta \left({\epsilon}_n  d_{\whom} e^{(N-3)} - {\epsilon}_n e^{(N-4)} \tau^\dag d \xi^{n} + {X}^{[a]}Z_a\right) \delta\underline{\mu}^\dag\\
     &= \delta {\widehat{\omega}} \delta v^\dag+ \delta \left( - {\epsilon}_n e^{(N-4)} \tau^\dag d \xi^{n} + X^{[a]}Z_a\right) \delta\underline{\mu}^\dag.
\end{align*}
The first term is \reft{e:omega_s}{psiS9}. We consider the last terms in the following points.
\item We consider $\delta \left(X^{[a]}Z_a\right) \delta\underline{\mu}^\dag  $ together with \reft{e:psi-omegaH}{psiHon4}, \reft{e:psi-omegaH}{psiHc6}, \reft{e:psi-omegaH}{psiHxn2}:
\begin{align*}
    \delta  X^{[a]} \delta \left(Z_a \underline{\mu}^\dag \right) =   X^{[b]} (\delta e_b)^{[a]}\delta \left(Z_a \underline{\mu}^\dag \right) +  (\delta X)^{[a]} \delta \left(Z_a \underline{\mu}^\dag \right).
\end{align*}
The first term is  \reft{e:psi-omegaH}{psiHe5}, while, recalling that 
\begin{align*}
    \delta X = [\delta (c+ \iota_{\xi}v), \epsilon_n]+ [\iota_{\xi} \delta \whom, \epsilon_n] + \iota_{\delta \xi{}}d_{\whom}(\epsilon_n)-[\delta \omega_n, \epsilon_n]\xi^{n}+d_{\omega_n} \epsilon_n\delta \xi^{n}
\end{align*}
the second is given by \reft{e:psi-omegaH}{psiHc7}, \reft{e:psi-omegaH}{psiHo3}, \reft{e:psi-omegaH}{psiHx4}, \reft{e:psi-omegaH}{psiHon5}, \reft{e:psi-omegaH}{psiHc9} and \reft{e:psi-omegaH}{psiHxn7}.
\item Lastly we consider $\delta \left( - {\epsilon}_n e^{(N-4)} \tau^\dag d \xi^{n}\right)\delta\underline{\mu}^\dag$ together with \reft{e:psi-omegaH}{psiHxn3} and \reft{e:psi-omegaH}{psiHe8}:
\begin{align*}
   \delta &\left( - {\epsilon}_n e^{(N-4)} \tau^\dag d \xi^{n}\right)\delta\underline{\mu}^\dag+ \delta \xi^n \delta (d(e^{N-4}\epsilon \underline{\tau}^\dag \mu^\dag))+ \theta_N(5) \delta e \delta \left(e^{N-5}\epsilon_n \mu^\dag d\xi^n \underline{\tau}^\dag\right) \\
   &= \delta \tau^\dag \delta \left({\epsilon}_n e^{(N-4)}d \xi^{n}\underline{\mu}^\dag\right)= \delta \tau^\dag \delta \left( e^{(N-3)}\nu\right)\\
   & = \delta \left(\tau^\dag e^{(N-3)} \right) \delta \nu + \frac{1}{N-3}\delta e \delta\left(e^{N-4} \nu \tau^\dag\right).
\end{align*}
 The last term is \reft{e:psi-omegaH}{psiHe3}, while for the first we proceed exactly as in \ref{i:etau} and get \reft{e:psi-omegaH}{psiHo2}, \reft{e:psi-omegaH}{psiHc2}, \reft{e:psi-omegaH}{psiHx2}, \reft{e:psi-omegaH}{psiHe7}, \reft{e:psi-omegaH}{psiHen3}, \reft{e:psi-omegaH}{psiHon2}, \reft{e:psi-omegaH}{psiHc4} and \reft{e:psi-omegaH}{psiHxn4}.
\end{enumerate}

\subsection{Actions}

We use the same strategy for the action. We rewrite $S_s$ as the sum of $S_r$ and an additional part:
\setcounter{terms}{0} 
\begin{align}\label{e:standard_action_split3}
    S_{s}(\Sigma \times I) &= S_{r}(\Sigma \times I) 
    + \int_{\Sigma \times I} \unl{\frac{1}{(N-3)!}e^{N-3} \underline{e}_n d_{\whom} v}{term1}
    +\unl{ \frac{1}{2(n-3)!}e^{N-3} \underline{e}_n[v,v]}{term2} \\
    \nonumber&\phantom{= + \int}
    +\unl{ \frac{1}{(N-2)!} e^{N-2}  d_{\underline{\omega_n}}v}{term3}-\unl{  \iota_{\xi} d_{\whom} v \underline{\whom}_n^\dag}{term4}
    +\unl{ \frac{1}{2}\iota_{\xi}[v,v])\underline{\whom_n^\dag}}{term5}
    -\unl{ [v,c] \underline{\whom}_n^\dag}{term6}
    +\unl{ X(\underline{v^\dag}) }{term7}\\
    \nonumber&\phantom{= + \int}
    -\unl{ \iota_{\xi}  d_{\underline{\omega_n}}v\omega^{\dag}}{term8}
    +\unl{ [\iota_{\xi}v, e] \underline{e}_n^{\dag}}{term9}+\unl{ [\iota_{\xi}v, \underline{e}_n]e^\dag}{term10}-\unl{  \frac{1}{2}\iota_{\xi}\iota_{\xi}d_{\whom} v \underline{c}_n^\dag}{term11}+\unl{ \frac{1}{4}\iota_{\xi}\iota_{\xi}[v,v]\underline{c}_n^\dag}{term12}\\
    \nonumber&\phantom{= + \int}-\unl{d_{\omega_n}v \xi^n \underline{\whom}_n^\dag}{term13}-\unl{\iota_{\xi}d_{\omega_n}v\xi^n\underline{c}_n^\dag}{term14}
\end{align}
where \reft{e:standard_action_split3}{term7} reads
\begin{align*}
    X(\underline{v^\dag}) = \left( \unl{\iota_{\xi} F_{\whom} }{term15}+ \unl{\iota_{\xi} d_{\whom} v }{term16}+ \unl{[\iota_{\xi}v,v]}{term17}+ \unl{d_{\whom} \omega_n \xi^n}{term18} + \unl{\partial_n \whom \xi^n }{term19}+ \unl{d_{\omega_n}v \xi^n }{term20} - \unl{d_{\whom } c }{term21}- \unl{[v,c] }{term22}\right)\underline{v^\dag}
\end{align*}

Before computing the pullback of the action $S_H$ note the following:
\begin{align*}
\psi^*\left(\iota_{\txi{}}F_{\whom} + F_{\tom_n}\txi{n} + d_{\whom}\tc\right) &= \iota_{\xi}F_{\whom} + F_{\omega_n}\xi^n + d_{\whom}c + L_{\xi}^{\omega} \nu + \iota_z (\nu+v) d\xi^n\\
&\phantom{=}+ X^{(a)}\mu_a^\dag d\xi^n  + d_{\omega_n}\nu \xi^n + [\nu,c]\\
\psi^*\left(\iota_{\txi{}}\underline{F_{\tom_n}} +\underline{d_{\tom_n}}\tc \right) &= \iota_{\xi}\underline{F_{\omega_n}} +\underline{d_{\omega_n}}c + \iota_{\underline{\partial_n} \xi} \nu + \underline{d_{\omega_n}}\iota_\xi v + \underline{d_{\omega_n}}\left(X^{(a)}\mu_a^\dag \xi^n\right)\\
&\phantom{=} + \underline{d_{\omega_n}}\left(\iota_{\underline{z}} \nu + \iota_{\underline{z}} v \right) + [X^{(a)}\underline{\mu}_a^\dag,c] + \iota_{\xi}d_{\whom} \left(\iota_{\underline{z}} \nu + \iota_{\underline{z}} v \right)\\
&\phantom{=} + \iota_{\xi}d_{\whom} \left(X^{(a)}\underline{\mu}_a^\dag\right)+ [\iota_{\underline{z}} \nu + \iota_{\underline{z}} v,c]\\
\psi^*\left(\frac{1}{2}\iota_{\txi{}}\iota_{\txi{}}F_{\whom} + \iota_{\txi{}}F_{\tom_n}\txi{n} + \frac{1}{2}[\tc,\tc]\right) &= \frac{1}{2}\iota_{\xi} \iota_{\xi}F_{\whom} +  \iota_{\xi}F_{\omega_n}\xi^n + \frac{1}{2}[c,c]+ \frac{1}{2}\iota_{\xi} \iota_{\xi}d_{\whom} \nu \\
&\phantom{=}+ \iota_{\xi}d_{\whom} \left(\iota_z v+ \iota_z \nu\right)\xi^n + \iota_{\xi}d_{\whom} \left(X^{(a)}\mu_a^\dag\right)\xi^n + \iota_\xi d_{\omega_n}\nu \xi^n\\
&\phantom{=} + [X^{(a)}\mu_a^\dag,c] +[\iota_\xi (\nu+v),c]+ [\iota_z (\nu+v)\xi^n,c]
\end{align*}

We divide the computation in several parts. Let us start from the classical action:
\setcounter{terms}{0} 
\begin{align}\label{e:psiSH_cl}
\psi^* &\left(\frac{1}{(N-2)!}\te^{N-2} F_{\tom_n}+ \frac{1}{(N-3)!} \te^{N-3} \underline{\te}_n F_{\widetilde{\whom}}\right) = \frac{1}{(N-2)!} e^{N-2} F_{\omega_n}+ \frac{1}{(N-3)!} e^{N-3} \underline{e}_n F_{\whom} \\ \nonumber
    & + 
    \frac{1}{(N-2)!}e^{N-2}\left( \unl{ d_{\whom} \left(X^{(a)}\underline{\mu}_a^\dag\right)}{psiSHcl1}
    +  \unl{ d_{\whom} \left(\iota_{\underline{z}}(\nu+v)\right)}{psiSHcl2}
    +\unl{ \underline{d_{\omega_n}} \nu}{psiSHcl3}
    +  \unl{ [\nu, X^{(a)}\underline{\mu}_a^\dag]}{psiSHcl4}+  \unl{[\nu,\iota_{\underline{z}}(\nu+v)]}{psiSHcl5}\right)\\ \nonumber
    &
    + \frac{1}{(N-3)!} \unl{e^{N-3} \underline{e}_n d_{\whom} \nu}{psiSHcl6}
    + \frac{1}{2(N-3)!} \unl{e^{N-3} \underline{e}_n [\nu, \nu]}{psiSHcl7}
\end{align}
We then consider the terms with antifields $\whom_n^\dag$ and $\omega^\dag$
\setcounter{terms}{0} 
\begin{align}\label{e:psiSH_o}
    \psi^*&\left((\iota_{\txi{}}F_{\whom} + F_{\tom_n}\txi{n} + d_{\whom}\tc)\widetilde{\whom}_n^\dag\right) = \left(\iota_{\xi}F_{\whom} + F_{\omega_n}\xi^n + d_{\whom}c \right) \whom_n^\dag \\ \nonumber
    &+ \left(\unl{L_{\xi}^{\omega} \nu }{psiSHon1}
    + \unl{\iota_z (\nu+v) d\xi^n }{psiSHon2}
    + \unl{X^{(a)}\mu_a^\dag d\xi^n }{psiSHon3}
    + \unl{d_{\omega_n}\nu \xi^n }{psiSHon4} 
    + \unl{[\nu,c] }{psiSHon5}
    + \unl{[\nu,\iota_{\xi}v] }{psiSHon14}
    + \unl{d_{\whom} \iota_{\xi}v}{psiSHon15}\right) \underline{\whom}_n^\dag\\ \nonumber
    & + \left(\unl{\iota_{\xi}F_{\whom}}{psiSHon6} 
    + \unl{F_{\omega_n}\xi^n }{psiSHon7}
    + \unl{d_{\whom}c }{psiSHon8} 
    + \unl{ L_{\xi}^{\omega} \nu }{psiSHon9}
    + \unl{\iota_z (\nu+v) d\xi^n }{psiSHon10}
    + \unl{X^{(a)}\mu_a^\dag d\xi^n  }{psiSHon11}\right) [\epsilon_n, \iota_{\xi}(Z_a \underline{\mu}^\dag)^{(a)}]\\ \nonumber
    & + \left(
     \unl{d_{\omega_n}\nu \xi^n  }{psiSHon12}
    + \unl{[\nu,c]}{psiSHon13}
    + \unl{[\nu,\iota_{\xi}v]}{psiSHon16}
    + \unl{d_{\whom} \iota_{\xi}v}{psiSHon17}
    \right)[\epsilon_n, \iota_{\xi}(Z_a \underline{\mu}^\dag)^{(a)}]\\\nonumber
    \psi^*&\left((\iota_{\txi{}}\underline{F_{\tom_n}} +\underline{d_{\tom_n}}\tc )\tom^\dag\right) = \left(\iota_{\xi}\underline{F_{\omega_n}} +\underline{d_{\omega_n}}c \right)\omega^\dag 
    +\left( \unl{\iota_{\underline{\partial_n} \xi} \nu}{psiSHo1} 
    + \unl{\underline{d_{\omega_n}}\iota_\xi v}{psiSHo2} \right)\omega^\dag\\ \nonumber
    &+ \left( \unl{\underline{d_{\omega_n}}\left(X^{(a)}\mu_a^\dag \xi^n\right)}{psiSHo3}
    + \unl{ \underline{d_{\omega_n}}\left(\iota_{\underline{z}} \nu \xi^n+ \iota_{\underline{z}} v \xi^n\right)}{psiSHo4}
    + \unl{[X^{(a)}\underline{\mu}_a^\dag,c + \iota_{\xi}v]}{psiSHo5} \right)\omega^\dag \\ \nonumber
    & + \left(
    \unl{\iota_{\xi}d_{\whom} \left(\iota_{\underline{z}} \nu + \iota_{\underline{z}} v \right)}{psiSHo6}
    +\unl{\iota_{\xi}d_{\whom} \left(X^{(a)}\underline{\mu}_a^\dag\right)}{psiSHo7}
    + \unl{[\iota_{\underline{z}} \nu + \iota_{\underline{z}} v,c + \iota_{\xi}v]}{psiSHo8}
    \right)\omega^\dag\\ \nonumber
    & + \left(
    \unl{\iota_{\xi}\underline{F_{\omega_n}}}{psiSHo10} +\unl{\underline{d_{\omega_n}}c }{psiSHo11}
    +\unl{\iota_{\underline{\partial_n} \xi} \nu}{psiSHo12} 
    + \unl{\underline{d_{\omega_n}}\iota_\xi v}{psiSHo13}
    + \unl{\underline{d_{\omega_n}}\left(X^{(a)}\mu_a^\dag \xi^n\right)}{psiSHo14}
    \right) [\epsilon_n, (Z_a \mu^\dag)^{(a)}\xi^n] \\ \nonumber
    &+ \left( 
    \unl{ \underline{d_{\omega_n}}\left(\iota_{\underline{z}} \nu \xi^n+ \iota_{\underline{z}} v \xi^n\right)}{psiSHo15}
    + \unl{[X^{(a)}\underline{\mu}_a^\dag,c + \iota_{\xi}v]}{psiSHo16} 
    + \unl{\iota_{\xi}d_{\whom} \left(\iota_{\underline{z}} \nu + \iota_{\underline{z}} v \right)}{psiSHo17}
    \right) [\epsilon_n, (Z_a \mu^\dag)^{(a)}\xi^n]  \\ \nonumber
    & + \left(
    \unl{\iota_{\xi}d_{\whom} \left(X^{(a)}\underline{\mu}_a^\dag\right)}{psiSHo18}
    + \unl{[\iota_{\underline{z}} \nu + \iota_{\underline{z}} v,c + \iota_{\xi}v]}{psiSHo19}
    \right) [\epsilon_n, (Z_a \mu^\dag)^{(a)}\xi^n] 
\end{align}
We now consider the terms with antifields $e_n^\dag$ and $e^\dag$:
\setcounter{terms}{0} 
\begin{align}\label{e:psiSH_e}
    \psi^*&\left(\left(L_{\txi{}}^{\widetilde{\whom} } \te  + d_{\tom_n} \te \txi{n}  + \te_n d \txi{n} - [\tc,\te] \right)\underline{\te}_n^{\dag}\right) = \left(L_{\xi{}}^{\omega }e  + d_{\omega_n} e \xi^n  + e_n d \xi^n - [c,e] \right)\underline{e}_n^{\dag}\\ \nonumber
    &+ \unl{[\iota_{\xi}v,e]\underline{e}_n^{\dag}}{psiSHe21} + \frac{1}{N-3} e^{N-4} L_{\xi{}}^{\omega }e \left(  \unl{d_{\widehat{\omega}} (\epsilon_n \underline{\mu}^\dag) + \sigma \underline{\mu}^\dag}{psiSHe1}
    + \unl{\underline{\tau}^\dag (\nu+v)}{psiSHe2}\right)\\ \nonumber
    &+L_{\xi{}}^{\omega }e \left(\unl{X^{(a)}(Z_b \underline{\mu}^\dag)_a^{(b)}}{psiSHe3} 
    + \unl{\iota_{\underline{z}}((\nu_a+v_a) Z)^{(a)}}{psiSHe4}
    + \unl{\frac{\theta_N(5) e^{N-5}}{N-4}\epsilon_n \mu^\dag d\xi^n \underline{\tau}^\dag}{psiSHe5}\right)
    \\ \nonumber
    & + \frac{1}{N-3} e^{N-4} d_{\omega_n} e \xi^n \left(  \unl{d_{\widehat{\omega}} (\epsilon_n \underline{\mu}^\dag) + \sigma \underline{\mu}^\dag}{psiSHe6}
    + \unl{\underline{\tau}^\dag (\nu+v)}{psiSHe7}\right)\\ \nonumber
    &+d_{\omega_n} e \xi^n \left(\unl{X^{(a)}(Z_b \underline{\mu}^\dag)_a^{(b)}}{psiSHe8} 
    + \unl{\iota_{\underline{z}}((\nu_a+v_a) Z)^{(a)}}{psiSHe9}
    + \unl{\frac{\theta_N(5) e^{N-5}}{N-4}\epsilon_n \mu^\dag d\xi^n \underline{\tau}^\dag}{psiSHe10}\right)
    \\ \nonumber
    & + \frac{1}{N-3} e^{N-4} e_n d \xi^n \left(  \unl{d_{\widehat{\omega}} (\epsilon_n \underline{\mu}^\dag) + \sigma \underline{\mu}^\dag}{psiSHe11}
    + \unl{\underline{\tau}^\dag (\nu+v)}{psiSHe12}\right)\\ \nonumber
    &+e_n d \xi^n \left(\unl{X^{(a)}(Z_b \underline{\mu}^\dag)_a^{(b)}}{psiSHe13} 
    + \unl{\iota_{\underline{z}}((\nu_a+v_a) Z)^{(a)}}{psiSHe14}
    + \unl{\frac{\theta_N(5) e^{N-5}}{N-4}\epsilon_n \mu^\dag d\xi^n \underline{\tau}^\dag}{psiSHe15}\right)
    \\ \nonumber
    & + \frac{1}{N-3} e^{N-4} [c + \iota_{\xi} v,e] \left(  \unl{d_{\widehat{\omega}} (\epsilon_n \underline{\mu}^\dag) + \sigma \underline{\mu}^\dag}{psiSHe16}
    + \unl{\underline{\tau}^\dag (\nu+v)}{psiSHe17}\right)\\ \nonumber
    &+[c + \iota_{\xi} v,e] \left(\unl{X^{(a)}(Z_b \underline{\mu}^\dag)_a^{(b)}}{psiSHe18} 
    + \unl{\iota_{\underline{z}}((\nu_a+v_a) Z)^{(a)}}{psiSHe19}
    + \unl{\frac{\theta_N(5) e^{N-5}}{N-4}\epsilon_n \mu^\dag d\xi^n \underline{\tau}^\dag}{psiSHe20}\right)\\ \nonumber
    \psi^*&\left(\left(L_{\txi{}}^{\widetilde{\whom} } \underline{\te}_n  + \underline{d_{\tom_n}} (\te_n \txi{n})  + \iota_{\underline{\partial_n}\txi{}}\te - [\tc,\te_n] \right)\te^{\dag}\right) = \left(\iota_{\xi{}}d_{\omega}\underline{e}_n  + \iota_{\underline{\partial}_n \xi{}}e  - \underline{d_{\omega_n}}(e_n \xi^n) - [c,\underline{e}_n] \right)e^\dag \\ \nonumber
    &+ \unl{[\iota_{\xi}v,\underline{e}_n]e^{\dag}}{psiSHen5}+\left(\unl{\iota_{\xi{}}d_{\omega}\underline{e}_n }{psiSHen1} + \unl{\iota_{\underline{\partial}_n \xi{}}e }{psiSHen2} - \unl{\underline{d_{\omega_n}}(e_n \xi^n)}{psiSHen3} - \unl{[c+\iota_\xi v,\underline{e}_n]}{psiSHen4} \right)((\nu_a + v_a) Z)^{(a)}
\end{align}

Next we consider the terms with $c_n^\dag$
\setcounter{terms}{0} 
\begin{align}\label{e:psiSH_c}
    \psi^*&\left(\left(\frac{1}{2}\iota_{\txi{}}\iota_{\txi{}}F_{\whom} + \iota_{\txi{}}F_{\tom_n}\txi{n} + \frac{1}{2}[\tc,\tc]\right) \underline{\tc}_n^\dag\right) = \left(\frac{1}{2}\iota_{\xi} \iota_{\xi}F_{\whom} +  \iota_{\xi}F_{\omega_n}\xi^n + \frac{1}{2}[c,c]\right) \underline{c}_n^\dag \\ \nonumber
    &+\left( \frac{1}{2}\unl{\iota_{\xi} \iota_{\xi}d_{\whom} \nu }{psiSHc1}
    + \unl{\iota_{\xi}d_{\whom} \left(\iota_z v+ \iota_z \nu\right)\xi^n }{psiSHc2}
    + \unl{\iota_{\xi}d_{\whom} \left(X^{(a)}\mu_a^\dag\right)\xi^n }{psiSHc3}
    + \unl{\iota_\xi d_{\omega_n}\nu \xi^n }{psiSHc4}\right)\underline{c}_n^\dag \\ \nonumber
    &+ \left(\unl{[X^{(a)}\mu_a^\dag\xi^n,c]}{psiSHc5}
    +\unl{[\iota_\xi (\nu+v),c]}{psiSHc6}
    + \unl{[\iota_z (\nu+v)\xi^n,c]}{psiSHc7}\right)\underline{c}_n^\dag \\ \nonumber
     &+ \left(\unl{\frac{1}{4}\iota_{\xi}\iota_{\xi}[v,v]}{psiSHc18}+\unl{\frac{1}{2}\iota_{\xi}\iota_{\xi}[\nu,v]}{psiSHc19}+\unl{[\iota_{z}(\nu+v)\xi^n,\iota_{\xi}v]}{psiSHc20}+\unl{[X^{(a)}\mu_a^\dag \xi^n,\iota_{\xi}v]}{psiSHc21}\right)\underline{c}_n^\dag \\ \nonumber
    &+\left(\frac{1}{2}\unl{\iota_{\xi} \iota_{\xi}F_{\whom} }{psiSHc8}
    +\unl{  \iota_{\xi}F_{\omega_n}\xi^n }{psiSHc9}
    + \unl{ \frac{1}{2}[c,c]}{psiSHc10}\right) [\epsilon_n, (Z_a \underline{\mu}^\dag)^{(a)}] \\ \nonumber
    &+\left( \frac{1}{2}\unl{\iota_{\xi} \iota_{\xi}d_{\whom} \nu }{psiSHc11}
    + \unl{\iota_{\xi}d_{\whom} \left(\iota_z v+ \iota_z \nu\right)\xi^n }{psiSHc12}
    + \unl{\iota_{\xi}d_{\whom} \left(X^{(a)}\mu_a^\dag\right)\xi^n }{psiSHc13}
    + \unl{\iota_\xi d_{\omega_n}\nu \xi^n }{psiSHc14}\right)[\epsilon_n, (Z_a \underline{\mu}^\dag)^{(a)}] \\ \nonumber
    &+ \left(\unl{[X^{(a)}\mu_a^\dag\xi^n,c]}{psiSHc15}
    +\unl{[\iota_\xi (\nu+v),c]}{psiSHc16}+ \unl{[\iota_z (\nu+v)\xi^n,c]}{psiSHc17}\right)[\epsilon_n, (Z_a \underline{\mu}^\dag)^{(a)}]\\ \nonumber
    &+ \left(\unl{\frac{1}{4}\iota_{\xi}\iota_{\xi}[v,v]}{psiSHc22}+\unl{\frac{1}{2}\iota_{\xi}\iota_{\xi}[\nu,v]}{psiSHc23}+\unl{[\iota_{z}(\nu+v)\xi^n,\iota_{\xi}v]}{psiSHc24}+\unl{[X^{(a)}\mu_a^\dag \xi^n,\iota_{\xi}v]}{psiSHc25}\right)[\epsilon_n, (Z_a \underline{\mu}^\dag)^{(a)}] 
\end{align}
We now consider the terms with $\xi_n^\dag$ and $\xi^\dag$:
\setcounter{terms}{0} 
\begin{align}\label{e:psiSH_x}
    \psi^*&\left(\iota_{\txi{}}d\txi{n}\underline{\txi{\dag}}_n + \txi{n}\partial_n \txi{n}\underline{\txi{\dag}}_n\right)= \iota_{\xi}d\xi^n \underline{\xi}_n^\dag + \xi^n\partial_n \xi^n \underline{\xi}_n^\dag\\ \nonumber
    &+ \iota_{\xi}d\xi^n \left( \unl{X^{(a)}\underline{c}_{an}^\dag \mu^\dag }{psiSHxn1}
    + \unl{ d(e^{N-4}\epsilon \underline{\tau}^\dag \mu^\dag) }{psiSHxn2}
    + \unl{ \iota_z \underline{c}_n^\dag (\nu+v) }{psiSHxn3}
    + \unl{ \iota_z(\nu+v) [\epsilon_n, (Z_a \underline{\mu}^\dag)^{(a)}]}{psiSHxn4}\right)\\ \nonumber
    &+ \iota_{\xi}d\xi^n \left( \unl{ d_{\omega_n}\epsilon_n (Z_a \underline{\mu}^\dag)^{(a)}}{psiSHxn5}
    + \unl{ X^{(b)}\mu_{b}^\dag [\epsilon_n, (Z_a \underline{\mu}^\dag)^{(a)}]}{psiSHxn6}\right)\\ \nonumber
    &+ \xi^n\partial_n \xi^n \left( \unl{X^{(a)}\underline{c}_{an}^\dag \mu^\dag }{psiSHxn7}
    + \unl{ d(e^{N-4}\epsilon \underline{\tau}^\dag \mu^\dag) }{psiSHxn8}
    + \unl{ \iota_z \underline{c}_n^\dag (\nu+v) }{psiSHxn9}
    + \unl{ \iota_z(\nu+v) [\epsilon_n, (Z_a \underline{\mu}^\dag)^{(a)}]}{psiSHxn10}\right)\\ \nonumber
    &+ \xi^n\partial_n \xi^n \left( \unl{ d_{\omega_n}\epsilon_n (Z_a \underline{\mu}^\dag)^{(a)}}{psiSHxn11}
    + \unl{ X^{(b)}\mu_{b}^\dag [\epsilon_n, (Z_a \underline{\mu}^\dag)^{(a)}]}{psiSHxn12}\right)
    \\ \nonumber
    \psi^*&\left(\txi{n}\iota_{\partial_n\txi{}}\underline{\txi{\dag}} + \frac{1}{2}\iota_{[\txi{},\txi{}]}\underline{\txi{\dag}}\right) = \xi^{n}\iota_{\partial_n\xi}\underline{\xi}^{\dag} + \frac{1}{2}\iota_{[\xi,\xi]}\underline{\xi}^{\dag}\\ \nonumber
    & + \xi^{n}\iota_{\partial_n\xi}\left(\unl{\underline{c}_n^\dag (\nu+v) }{psiSHxn13}
    + \unl{ d_{\omega}\epsilon_n (Z_a \underline{\mu}^\dag)^{(a)} }{psiSHxn14}
    + \unl{\nu [\epsilon_n, (Z_a \underline{\mu}^\dag)^{(a)}]}{psiSHxn15}\right)
    \\ \nonumber
    & + \frac{1}{2}\iota_{[\xi,\xi]}\left(\unl{\underline{c}_n^\dag (\nu+v) }{psiSHxn16}
    + \unl{ d_{\omega}\epsilon_n (Z_a \underline{\mu}^\dag)^{(a)} }{psiSHxn17}
    + \unl{\nu [\epsilon_n, (Z_a \underline{\mu}^\dag)^{(a)}]}{psiSHxn18}\right)
\end{align}
Lastly we consider all the remaining terms:
\setcounter{terms}{0} 
\begin{align}\label{e:psiSH_v}
        \psi^*&\left(\frac{1}{2(N-3)!}\te^{N-3}\underline{\te}_n [\tv,\tv]+\left( L_{\txi{}}^{\widetilde{\whom}} \tv + [\tc,\tv] + \txi{n} d_{\tom_n} v  + d\txi{n} \iota_z \tv \right)\underline{\tv}^\dag\right)\\ \nonumber
        &= \unl{\frac{1}{2(N-3)!}e^{N-3}\underline{e}_n[v,v]}{psiSHv11}\left( \unl{L_{\xi}^{\omega} v }{psiSHv1}
        + \unl{[c,v] }{psiSHv2}
        + \unl{\xi^{n} d_{\omega_n} v  }{psiSHv3}
        + \unl{d\xi^n \iota_z v }{psiSHv4}
        + \unl{[\iota_{\xi}v,v] }{psiSHv9}\right)\underline{v}^\dag
    \end{align}
The terms that are not underlined correspond to the ones in $S_r$. Everything else should be found in \eqref{e:standard_action_split3} or vanish.

\begin{enumerate}[label=\Alph*]
    \item The following terms simplify:
    \begin{align*}
    \reft{e:standard_action_split3}{term2} &= \reft{e:psiSH_v}{psiSHv11};
    &\reft{e:standard_action_split3}{term3} &= 0;
    &\reft{e:psiSH_cl}{psiSHcl5} &=0;
    &\reft{e:psiSH_cl}{psiSHcl4} &=0;\\
    \reft{e:psiSH_o}{psiSHo10} &= \reft{e:psiSH_c}{psiSHc9};
    &\reft{e:psiSH_o}{psiSHon10} &= \reft{e:psiSH_x}{psiSHxn4};
    &\reft{e:psiSH_o}{psiSHon11} &= \reft{e:psiSH_x}{psiSHxn6};
    &\reft{e:psiSH_o}{psiSHon12} &= \reft{e:psiSH_c}{psiSHc14};\\
    \reft{e:standard_action_split3}{term22} &= \reft{e:psiSH_v}{psiSHv2};
    &\reft{e:standard_action_split3}{term20} &= \reft{e:psiSH_v}{psiSHv3};
    &\reft{e:standard_action_split3}{term17} &= \reft{e:psiSH_v}{psiSHv9};
    &\reft{e:standard_action_split3}{term9} &= \reft{e:psiSH_e}{psiSHe21};\\
    \reft{e:standard_action_split3}{term10} &= \reft{e:psiSH_e}{psiSHen5};
    &\reft{e:psiSH_o}{psiSHon16} &= \reft{e:psiSH_c}{psiSHc23};
    &\reft{e:psiSH_o}{psiSHo14} &= \reft{e:psiSH_x}{psiSHxn12};
    &\reft{e:psiSH_o}{psiSHo15} &= \reft{e:psiSH_x}{psiSHxn10};\\
    \reft{e:psiSH_o}{psiSHo12} &=  \reft{e:psiSH_x}{psiSHxn15};
    &\reft{e:psiSH_o}{psiSHo17} &= \reft{e:psiSH_c}{psiSHc12};
    &\reft{e:psiSH_o}{psiSHo18} &= \reft{e:psiSH_c}{psiSHc13};
        \end{align*}
    \begin{align*}
        \reft{e:psiSH_o}{psiSHo16} &= \reft{e:psiSH_c}{psiSHc15} + \reft{e:psiSH_c}{psiSHc25};
    &\reft{e:psiSH_o}{psiSHo19} &= \reft{e:psiSH_c}{psiSHc17} + \reft{e:psiSH_c}{psiSHc24};\\
     \reft{e:psiSH_c}{psiSHc11} &+ \reft{e:psiSH_o}{psiSHon9}  = \reft{e:psiSH_x}{psiSHxn18};
    &\reft{e:standard_action_split3}{term8} &+ \reft{e:psiSH_o}{psiSHo1} + \reft{e:psiSH_o}{psiSHo2} + \reft{e:psiSH_x}{psiSHxn13} = \reft{e:psiSH_e}{psiSHen2};\\
    \reft{e:psiSH_cl}{psiSHcl5} &+ \reft{e:psiSH_cl}{psiSHcl7} = \reft{e:psiSH_v}{psiSHv4}.
    \end{align*}

    \item \label{i:terms_with_c}
    We consider the terms containing the ghost $c$. Using \eqref{e:PCconstraints1} and \eqref{e:PCconstraints2} we get:
    \begin{align}
   &\reft{e:psiSH_e}{psiSHe16}+\reft{e:psiSH_e}{psiSHe17} + \reft{e:psiSH_e}{psiSHe20}
    \\
    &= d_{\whom} (c + \iota_{\xi} v) \underline{v}^\dag + [c + \iota_{\xi} v, X^{(a)}Z_a]\underline{\mu}^\dag + \frac{1}{N-3}[c + \iota_{\xi} v, \epsilon_n]^{(a)}\underline{\mu}_a^\dag e^{N-3} \left(d_{\whom}e + \tau^\dag d \xi^n\right) \nonumber\\ \label{e:item_terms_with_c}
    &\phantom{=}+ [c+ \iota_{\xi} v, \epsilon_n]^{(n)}X^{(a)}Z_a\underline{\mu}^\dag+  \frac{1}{N-3}[c + \iota_{\xi} v, v+\nu] e^{N-3} \underline{\tau}^\dag.
    \end{align}
    We also have that
    \begin{align*}
        \reft{e:psiSH_o}{psiSHo5} + \reft{e:psiSH_c}{psiSHc5}+\reft{e:psiSH_c}{psiSHc21} = [c, X^{(a)}\underline{\mu}^\dag_a] Z + [\iota_{\xi}v, X^{(a)}\underline{\mu}^\dag_a] Z.
    \end{align*}
    compensating the second term in \eqref{e:item_terms_with_c} and that
    \begin{align*}
        \reft{e:psiSH_o}{psiSHon5}&+ \reft{e:psiSH_o}{psiSHon14} + \reft{e:standard_action_split3}{term5}+\reft{e:standard_action_split3}{term6} + \reft{e:psiSH_o}{psiSHo8} +\reft{e:psiSH_c}{psiSHc6}+\reft{e:psiSH_c}{psiSHc7}+\reft{e:psiSH_c}{psiSHc18}+\reft{e:psiSH_c}{psiSHc19}+\reft{e:psiSH_c}{psiSHc20}\\
        &=  \frac{1}{N-3}[c+ \iota_{\xi}v, v+\nu] e^{N-3} \underline{\tau^\dag} 
    \end{align*}
    Compensating the last term of \eqref{e:item_terms_with_c}. Furthermore the following identity holds:
    \begin{align*}
        \reft{e:psiSH_c}{psiSHc10}&+ \reft{e:psiSH_o}{psiSHo13}+  \reft{e:psiSH_o}{psiSHon17} + \reft{e:psiSH_o}{psiSHon13} + \reft{e:psiSH_c}{psiSHc16} + \reft{e:psiSH_c}{psiSHc22} + \reft{e:psiSH_o}{psiSHo11}+ \reft{e:psiSH_o}{psiSHon8}+\reft{e:psiSH_e}{psiSHe18}\\
        & = [c + \iota_{\xi}v, \epsilon_n]^{(n)}X^{(a)}Z_a\underline{\mu}^\dag + [c+ \iota_{\xi}v, \epsilon_n]^{(a)} \left(L_{\xi}^{\whom}\epsilon_n + d_{\omega_n} \epsilon^n \xi^n \right)^{(n)}Z_a\underline{\mu}^\dag \\
        & \phantom{=}+ L_{\xi}^{\whom} \left([c+ \iota_{\xi}v, \epsilon_n]^{(a)}\right)Z_a\underline{\mu}^\dag
         + d_{\omega_n} \left([c+ \iota_{\xi}v, \epsilon_n]^{(a)}\right)\xi^n Z_a\underline{\mu}^\dag\\
        & \phantom{=}
         +\left( L_{\xi}^{\whom}e_a + d_{\omega_n} e_a \xi^n \right) [c+ \iota_{\xi}v, \epsilon_n]^{(a)} (Z_b\underline{\mu}^\dag)^{(b)}.
    \end{align*}
    Lastly we also have 
    \begin{align*}
        \reft{e:psiSH_e}{psiSHe19}+\reft{e:psiSH_e}{psiSHen4} = \lambda[c+ \iota_{\xi}v, \epsilon_n]^{(a)} (\nu_a +v_a) Z.
    \end{align*}

    \item \label{i:terms_with_xi}
    Similarly, for the terms containing $\xi$   we get
    \begin{align*}
    \reft{e:psiSH_e}{psiSHe1}& + \reft{e:psiSH_e}{psiSHe2} + \reft{e:psiSH_e}{psiSHe5} +\reft{e:psiSH_x}{psiSHxn2}\\
 &=\iota_{\xi}F_{\whom}  \underline{v}^\dag + L_{\xi}^{\whom}(X^{(a)}Z_a)\underline{\mu}^\dag + \frac{1}{N-3} \left(L_{\xi}^{\whom}\epsilon_n\right)^{(a)}\underline{\mu}_a^\dag e^{N-3} \left(d_{\whom}e + \tau^\dag d \xi^n\right)\\
    &\phantom{=} + \left(L_{\xi}^{\whom}\epsilon_n\right)^{(n)}X^{(a)}Z_a\underline{\mu}^\dag+  \frac{1}{N-3}L_{\xi}^{\whom}( v+\nu) e^{N-3}  \underline{\tau}^\dag
    \end{align*}
    Furthermore
    \begin{align*}
        \reft{e:psiSH_o}{psiSHon1} &+\reft{e:psiSH_o}{psiSHon15} +  \reft{e:standard_action_split3}{term4}+\reft{e:psiSH_o}{psiSHo6} + \reft{e:psiSH_c}{psiSHc1} + \reft{e:standard_action_split3}{term11}+ \reft{e:standard_action_split3}{term12}+\reft{e:psiSH_c}{psiSHc2} +\reft{e:psiSH_x}{psiSHxn16}\\
        &= L_{\xi}^{\whom}(\nu+v) e \underline{\tau}^\dag + \iota_{[\xi,\underline{z}]}(\nu+v)Z
    \end{align*}
    and 
    \begin{align*}
        \reft{e:psiSH_o}{psiSHo7}& + \reft{e:psiSH_c}{psiSHc3} + \reft{e:psiSH_x}{psiSHxn17}+ \reft{e:psiSH_x}{psiSHxn5}+ \reft{e:psiSH_x}{psiSHxn14} + \reft{e:psiSH_o}{psiSHon6} + \reft{e:psiSH_o}{psiSHon7}+ \reft{e:psiSH_c}{psiSHc8} + \reft{e:psiSH_e}{psiSHe3}\\
        &= [c + \iota_{\xi}v, \epsilon_n]^{(a)} \left(L_{\xi}^{\whom}\epsilon_n\right)^{(n)}Z_a\underline{\mu}^\dag 
        + L_{\xi}^{\whom} \left([c + \iota_{\xi}v , \epsilon_n]^{(a)}\right)Z_a\underline{\mu}^\dag\\
         & \phantom{=}
         +L_{\xi}^{\whom}e_a  [c + \iota_{\xi}v, \epsilon_n]^{(a)} (Z_b\underline{\mu}^\dag)^{(b)}+ 
         \left(L_{\xi}^{\whom}\epsilon_n\right)^{(a)} \left(d_{\omega_n} \epsilon^n \xi^n \right)^{(n)}Z_a\underline{\mu}^\dag\\
         & \phantom{=}
         + d_{\omega_n} \left(\left(L_{\xi}^{\whom}\epsilon_n\right)^{(a)}\right)\xi^n Z_a\underline{\mu}^\dag
         +d_{\omega_n} e_a \xi^n  \left(L_{\xi}^{\whom}\epsilon_n\right)^{(a)} (Z_b\underline{\mu}^\dag)^{(b)}.
    \end{align*}
    and the terms in the first line cancel with equal ones in \ref{i:terms_with_c}. Lastly
    \begin{align*}
        \reft{e:psiSH_e}{psiSHe4} + \reft{e:psiSH_e}{psiSHen1} = \iota_{[\xi,\underline{z}]}(\nu+v)Z + \lambda\left(L_{\xi}^{\whom}\epsilon_n\right)^{(a)}(\nu_a + v_a) Z.
    \end{align*}

    \item \label{i:terms_with_xi_n}
    For some of the terms containing $\xi^n$ we get 
    \begin{align*}
    \reft{e:psiSH_e}{psiSHe6}& + \reft{e:psiSH_e}{psiSHe7} + \reft{e:psiSH_e}{psiSHe10} + \reft{e:psiSH_cl}{psiSHcl3}  + \reft{e:psiSH_x}{psiSHxn8}\\
 &= F_{\omega_n}\xi^n  \underline{v}^\dag 
 + d_{\omega_n}(X^{(a)}Z_a)\underline{\mu}^\dag\xi^n + \frac{1}{N-3} \left(d_{\omega_n}\epsilon_n\xi^n\right)^{(a)}\underline{\mu}_a^\dag e^{N-3} \left(d_{\whom}e + \tau^\dag d \xi^n\right)\\
    &\phantom{=} + \left(d_{\omega_n}\epsilon_n\xi^n\right)^{(n)}X^{(a)}Z_a\underline{\mu}^\dag+  \frac{1}{N-3}d_{\omega_n}( v+\nu)\xi^n e^{N-3}  \underline{\tau}^\dag.
    \end{align*}
    The last term is compensated by
    \begin{align*}
        \reft{e:psiSH_o}{psiSHon4} &+ \reft{e:standard_action_split3}{term13} + \reft{e:standard_action_split3}{term14} + \reft{e:psiSH_c}{psiSHc4} + \reft{e:psiSH_o}{psiSHo4} \\
        &= \frac{1}{N-3}d_{\omega_n}( v+\nu)\xi^n e^{N-3}  \underline{\tau}^\dag + \iota_{\underline{z}} (\nu+v)\partial_n \xi^n \omega^\dag + \iota_{\partial_n\underline{z}}(\nu+v) \xi^n \omega^\dag.
    \end{align*}
    We also have 
    \begin{align*}
        \reft{e:psiSH_o}{psiSHo3} &+ \reft{e:psiSH_x}{psiSHxn7} + \reft{e:psiSH_x}{psiSHxn11} + \reft{e:psiSH_e}{psiSHe8} \\
        &= \left(d_{\omega_n}\epsilon_n\xi^n\right)^{(n)}X^{(a)}Z_a\underline{\mu}^\dag + d_{\omega_n}(X^{(a)}Z_a)\underline{\mu}^\dag\xi^n + [c + \iota_{\xi}v, \epsilon_n]^{(a)} \left(d_{\omega_n} \epsilon^n \xi^n \right)^{(n)}Z_a\underline{\mu}^\dag \\
        & \phantom{=}
         + d_{\omega_n} \left([c + \iota_{\xi}v, \epsilon_n]^{(a)}\right)\xi^n Z_a\underline{\mu}^\dag
         +\left(d_{\omega_n} e_a \xi^n \right) [c + \iota_{\xi}v, \epsilon_n]^{(a)} (Z_b\underline{\mu}^\dag)^{(b)}\\
        & \phantom{=}+ \left(L_{\xi}^{\whom}\epsilon_n\right)^{(a)} \left(d_{\omega_n} \epsilon^n \xi^n \right)^{(n)}Z_a\underline{\mu}^\dag
         + d_{\omega_n} \left(\left(L_{\xi}^{\whom}\epsilon_n\right)^{(a)}\right)\xi^n Z_a\underline{\mu}^\dag \\
        & \phantom{=}
         +d_{\omega_n} e_a \xi^n  \left(L_{\xi}^{\whom}\epsilon_n\right)^{(a)} (Z_b\underline{\mu}^\dag)^{(b)}
    \end{align*}
    where the second and third line in the result are canceled by the same terms appearing in items \ref{i:terms_with_c} and \ref{i:terms_with_xi} respectively. Furthermore
    \begin{align*}
        \reft{e:psiSH_e}{psiSHe9} + \reft{e:psiSH_e}{psiSHen3}  +\reft{e:psiSH_x}{psiSHxn9}&=\iota_{\underline{z}} (\nu+v)\partial_n \xi^n \omega^\dag + \iota_{\partial_n\underline{z}}(\nu+v) \xi^n \omega^\dag  \\
         &\phantom{=}+\lambda \left(d_{\omega_n}\epsilon_n \xi^n\right)^{(a)}(\nu_a + v_a) Z.
    \end{align*}

    \item Summing up items \ref{i:terms_with_c}, \ref{i:terms_with_xi} and \ref{i:terms_with_xi_n} we get
    \begin{align*}
        d_{\whom} (c + \iota_{\xi} v) + \iota_{\xi}F_{\whom}  \underline{v}^\dag + F_{\omega_n}\xi^n  \underline{v}^\dag  + \frac{1}{N-3} X^{(a)}\underline{\mu}_a^\dag e^{N-3} \left(d_{\whom}e + \tau^\dag d \xi^n\right)+ \lambda X^{(a)}(\nu_a + v_a) Z.
    \end{align*}
    The first three terms are given by \reft{e:standard_action_split3}{term21}, \reft{e:standard_action_split3}{term16}, \reft{e:psiSH_v}{psiSHv1}, \reft{e:standard_action_split3}{term15}, \reft{e:standard_action_split3}{term18} and \reft{e:standard_action_split3}{term19}. We also have that
    \begin{align*}
        \reft{e:psiSH_o}{psiSHon3}+ \reft{e:psiSH_e}{psiSHe13} + \reft{e:psiSH_x}{psiSHxn1} = X^{(a)} \mu_a^\dag d\xi^n e^{N-3} \underline{\tau}^\dag +\iota_z \left(X^{(a)} Z_a\right) d\xi^n \underline{\mu}^\dag 
    \end{align*}
    and 
    \begin{align*}
        \reft{e:psiSH_cl}{psiSHcl1}= \frac{1}{N-3} X^{(a)}\underline{\mu}_a^\dag e^{N-3} d_{\whom}e
    \end{align*}
    Furthermore
    \begin{align*}
         \reft{e:psiSH_e}{psiSHe11}+ \reft{e:psiSH_e}{psiSHe15} + \reft{e:psiSH_cl}{psiSHcl2} + \reft{e:psiSH_cl}{psiSHcl6} & = \frac{1}{N-3}e^{N-4}\lambda\nu \epsilon_n d\xi^n \underline{\tau}^\dag +\lambda \nu X^{(a)} Z_a \\
         &\phantom{=}+ \iota_z \left(X^{(a)} Z_a\right) d\xi^n \underline{\mu}^\dag + \frac{1}{N-2}e^{N-2}d_{\whom}\iota_{\underline{z}}v,
    \end{align*}
    \begin{align*}
        \reft{e:psiSH_e}{psiSHe12} + \reft{e:psiSH_e}{psiSHe14} +\reft{e:psiSH_x}{psiSHxn3} + \reft{e:psiSH_o}{psiSHon2} = \frac{1}{N-3} \lambda e^{N-4} \epsilon_n d \xi^n (\nu+ v)\underline{\tau}^\dag
    \end{align*}
    and 
    \begin{align*}
        \reft{e:standard_action_split3}{term1} = \frac{1}{N-3}e^{N-4}\lambda \epsilon_n d \xi^n v\underline{\tau}^\dag + \lambda X^{(a)}v_a Z + \frac{1}{N-2}e^{N-2}d_{\whom}\iota_{\underline{z}}v.
    \end{align*}
\end{enumerate}
Summing up all the terms we get the desired claim.

\begin{refcontext}[sorting=nyt]
    \printbibliography[] 
\end{refcontext}
\end{document}